\documentclass[10pt]{article}
\usepackage{graphicx}
\usepackage[T1]{fontenc}
\usepackage[utf8]{inputenc}
\usepackage{amssymb}
\usepackage{amsfonts}
\usepackage{dsfont}
\usepackage{mathtools}
\usepackage{amsthm}
\usepackage{amsmath}
\usepackage{textcomp}
\usepackage{eurosym}
\usepackage{stmaryrd}
\usepackage{xcolor}
\usepackage[multiple]{footmisc}

\usepackage{bigints}
\usepackage{geometry}
\geometry{hmargin=2.6cm,vmargin=2.4cm}

\newtheorem{lemme}{Lemma}
\newtheorem{thm}{Theorem}
\newtheorem{prop}{Proposition}
\newtheorem{defi}{Definition}
\newtheorem{rem}{Remark}

\begin{document}

\title{Algorithmic market making in dealer markets with hedging and market impact}

\author{Alexander \textsc{Barzykin}\footnote{HSBC, 8 Canada Square, Canary Wharf, London E14 5HQ, United Kingdom, alexander.barzykin@hsbc.com} \and Philippe \textsc{Bergault}\footnote{Ecole Polytechnique, CMAP, 91128 Palaiseau, France, philippe.bergault@polytechnique.edu} \and Olivier \textsc{Guéant}\footnote{Université Paris 1 Panthéon-Sorbonne, Centre d'Economie de la Sorbonne, 106 Boulevard de l'Hôpital, 75642 Paris Cedex 13, France, olivier.gueant@univ-paris1.fr}}
\date{}

\maketitle

\begin{abstract}
In dealer markets, dealers provide prices at which they agree to buy and sell the assets and securities they have in their scope. With ever increasing trading volume, this quoting task has to be done algorithmically in most markets such as foreign exchange markets or corporate bond markets. Over the last ten years, many mathematical models have been designed that can be the basis of quoting algorithms in dealer markets. Nevertheless, in most (if not all) models, the dealer is a pure internalizer, setting quotes and waiting for clients. However, on many dealer markets, dealers also have access to an inter-dealer market  or even public trading venues where they can hedge part of their inventory. In this paper, we propose a model taking this possibility into account, therefore allowing dealers to externalize part of their risk. The model displays an important feature well known to practitioners that within a certain inventory range the dealer internalizes the flow by appropriately adjusting the quotes and starts externalizing outside of that range. The larger the franchise, the wider is the inventory range suitable for pure internalization. The model is illustrated numerically with realistic parameters for USDCNH spot market.     

\medskip
\noindent{\bf Key words:} Dealer markets, Market making, Algorithmic trading, Stochastic optimal control, Viscosity solutions \vspace{5mm}
\end{abstract}

\setlength\parindent{0pt}

\section{Introduction}

In financial markets, liquidity has traditionally been provided by a specific category of agents who, on a continuous and regular basis, set prices at which they agree to buy or sell assets and securities. These agents, called market makers or dealers, play a key role in the price formation process in all markets, but their exact role and behavior depend on the considered asset class.\\

In most order-driven markets, such as stock markets, traditional exchanges have converted from open outcry communications between human traders to electronic platforms organized around all-to-all limit order books, and computers now handle almost all market activity. Official market makers and traditional market making companies still make money by providing liquidity to the market but they are now, somehow, in competition with all market participants who can post liquidity-providing orders. In dealer markets or quote-driven markets, electronification has also been one of the major upheavals of the last decade, with important consequences for dealers. In foreign exchange (FX) cash markets for instance, dealers set up their own private electronic platforms enabling clients to be connected to their stream (clients use a request for stream -- RFS) and to directly send them requests for quotes (RFQ). Similarly, on corporate bond markets, many single-dealer-to-client and multi-dealer-to-client platforms have emerged allowing clients to send RFQs to dealers. Some dealer-to-dealer and all-to-all platforms also exist in these markets, therefore blurring the frontier between OTC and organized markets (see~\cite{schrimpf2019fx, schrimpf2019sizing} for a recent analysis of FX markets).\\

Alongside the multifaceted electronification of trading means, most human market making and quoting activities have been replaced by algorithms. This evolution has naturally gone along with the development of many mathematical models in the academic literature. In 2008, inspired by a paper from the 1980s by Ho and Stoll~\cite{ho1981optimal}, Avellaneda and Stoikov~\cite{avellaneda2008high} proposed a stochastic optimal control model to determine the optimal bid and ask quotes that a single-asset risk-averse market maker should set. The authors paved the way to a new literature on market making that complements the contributions of the economic literature on the topic.\footnote{Economists had studied for a long time the behaviour of dealers with the aim of understanding market liquidity and the magnitude of bid-ask spreads. Models where one or several risk-averse market makers optimize their pricing policy for managing their inventory risk models include Amihud and Mendelson~\cite{amihud1980dealership}, Ho and Stoll~\cite{ho1981optimal, ho1983dynamics}, and O'Hara and Oldfield~\cite{o1986microeconomics}. Models focused on information asymmetries where bid-ask spreads derive from adverse selection include Copeland and Galai~\cite{copeland1983information} and Glosten and Milgrom~\cite{glosten1985bid}. Other classic economic references on market making include Grossman and Miller~\cite{grossman1988liquidity} and the review paper of Stoll~\cite{stoll2003market}.}\\

To build a relevant market making model for order-driven markets, and especially stock markets, it is important to take microstructure into account. For instance, Guilbaud and Pham~\cite{guilbaud2013optimal} modeled the market bid-ask spread with a discrete Markov chain and studied the performance of a market maker submitting limit buy and sell orders at the best limits or in front of them. Guilbaud and Pham~\cite{guilbaud2015optimal} studied a similar problem of optimal market making in a pro-rata limit order book, where the market maker may post limit orders but also market orders represented by impulse controls. Fodra and Pham~\cite{fodra2015high, fodra2015semi} considered a model in which market orders arrive in the limit order book according to a point process correlated with the stock price itself. They modeled a market maker as an agent placing limit orders of constant size at the best bid and at the best ask, and solved the problem faced by a risk-averse market maker. More recently, in Abergel et al.~\cite{abergel2020algorithmic}, the authors proposed a different model for the limit order book (first introduced in Abergel et al.~\cite{abergel2016limit}), in which the limit orders, market orders, and cancel orders arrive according to Markov jump processes with intensities depending only on the state of the limit order book. They considered the case of a market maker trading in this limit order book, and proposed a quantization-based algorithm to numerically solve the resulting high-dimensional problem. Finally, Capponi et al.~\cite{capponi2021market} studied a discrete-time problem, assuming that the market maker can place bid and ask limit orders simultaneously on both sides at prespecified dates. In their framework, the number of filled orders during each period depends linearly on the distance between the fundamental price and the price of the market maker's limit order, with random slope and intercept coefficients. Using discrete-time optimal control theory, the authors managed to get an explicit characterization of the optimal strategy.\\

In the case of dealer markets,\footnote{In fact, the models may be relevant for order-driven markets in which the spread-to-tick ratio is large.} most market making models derive from the seminal work of Avellaneda and Stoikov~\cite{avellaneda2008high}.\footnote{See the books of Cartea et al.~\cite{cartea2015algorithmic} and  Guéant~\cite{gueant2016financial} for a detailed discussion.} For instance, Guéant et al.~\cite{gueant2013dealing} provided a rigorous analysis of the stochastic optimal control problem introduced in Avellaneda and Stoikov~\cite{avellaneda2008high} and showed that, by adding risk limits to the inventory of the market maker, the problem boils down to a finite system of ordinary differential equations (ODEs) -- in particular, those ODEs are linear in the case of the exponential intensity functions proposed in~\cite{avellaneda2008high}. Cartea, Jaimungal, and Ricci considered in~\cite{cartea2014buy} a different kind of objective function: instead of the Von Neumann-Morgenstern expected utility of~\cite{avellaneda2008high}, they optimized a risk-adjusted expectation of the PnL. Cartea and Jaimungal, along with diverse coauthors, then proposed several extensions. For instance, Cartea, Donnelly, and Jaimungal studied in~\cite{cartea2017algorithmic} the impact of uncertainty on the parameters of the model. Multi-asset extensions have been proposed by Guéant in~\cite{gueant2016financial, gueant2017optimal} for both kinds of objective function, and the author showed again that the problem boils down to a system of ODEs. In all these models the trade size is assumed constant: the same number of assets is bought or sold at each trade. In~\cite{bergault2019size}, Bergault and Guéant introduced a distribution of trade sizes in the model along with the possibility for the dealer to answer different quotes for different sizes. They characterized the value function of the problem with an integro-differential equation of the Hamilton-Jacobi~(HJ) type that can be tackled without viscosity solution but using functional analysis and ODE techniques (in infinite dimension). They also proposed a dimensionality reduction technique in order to approximate numerically the optimal bid and ask quotes  in high dimension, by projecting the market risk on a low-dimensional space of factors. This problem of approximating the solution across a large universe of assets has been addressed using different approaches. In Bergault et al.~\cite{evangelista2020closed}, the authors regarded the Hamilton-Jacobi equation of the problem as a perturbation of another simpler equation whose solution can be computed in closed form. In Guéant and Manziuk~\cite{gueant2019deep}, the authors used neural networks instead of grids to compute an approximate solution -- a method inspired by approximate dynamic programming and reinforcement learning techniques.\\

In most market making models adapted to dealer markets, the dealers (who act as market makers) are pure liquidity providers: they buy and sell assets at the bid and ask prices they quote. Of course, they seldom buy and sell simultaneously: they carry inventory and bear price risk. The problem faced by dealers in these models is a subtle dynamic optimization problem in which they must mitigate the risk associated with price changes by skewing their quotes as a function of their inventory. In practice, dealers in most dealer markets have an additional way to manage their inventory risk since they can partially or completely hedge it by trading on the Dealer-to-Dealer (D2D) segment of the market and in a variety of all-to-all platforms.\footnote{Regarding the trading mechanisms on these platforms, they are often based on limit order books but we do not model order books and rather consider a modeling framework inspired from the literature on optimal execution (see below).} To be generic, we say throughout this paper that the dealer has access to liquidity pools.\\ 

The co-existence of requests for quotes and requests for stream on the one hand and access to liquidity pools on the other hand has seldom been studied in the academic literature on optimal market making (the only instance we found beyond our paper is the very recent paper~\cite{bakshaev2020market} that proposes a reinforcement learning approach). The trade-off between internalization and externalization is nevertheless discussed in the literature. It is discussed by Butz and Oomen in~\cite{butz2019internalisation} on the basis of queuing theory, though not with optimized quotes. It is also abundantly discussed on empirical grounds in the recent BIS Triennial Survey that concludes on the growing prevalence of internalization in FX markets (see~\cite{schrimpf2019fx}). A wide spectrum of behaviors is documented in~\cite{schrimpf2019fx}, from pure externalization to large ratios of internalization. It is noteworthy that even though internalization ratios for top trading centers exceed 80\% in FX markets, hedging through externalization remains an essential component of risk management.\\

The main goal of our paper is to build an optimal strategy for the dealer that includes the possibility to hedge by buying and selling (in continuous time) in a liquidity pool, in order to better mitigate inventory risk. By trading in a liquidity pool, the dealer adds certainty to inventory risk management but that comes with execution costs and market impact, in part due to the communication of trading intentions to a wider audience. Our setup is inspired by Almgren-Chriss-like models of optimal execution (see Almgren et al.~\cite{almgren2003optimal, almgren2001optimal}, and Guéant~\cite{gueant2016financial} for a general presentation).\footnote{Our model can be seen as an encounter between two topics: optimal market making and optimal execution, i.e. Avellaneda-Stoikov and Almgren-Chriss.} More precisely, compared to existing market making models, ours includes a new form of control -- in addition to the bid and ask quotes -- that represents the execution rate of the dealer in a liquidity pool and features (i) execution costs to proxy transaction costs and nonlinear liquidity costs, and (ii) permanent market impact (assumed to be linear in the execution rate).\\

In Section~\ref{model}, we present our  model involving an asset for which the dealer has a classical quoting activity together with the possibility to hedge risk by trading in a liquidity pool. We then introduce the stochastic optimal control problem of the dealer. In Section~\ref{viscoSec}, we characterize the associated value function as the unique continuous viscosity solution of a Hamilton-Jacobi equation. It is noteworthy that the Hamilton-Jacobi equation associated with our problem is of a new type compared to those associated with existing models: it is a partial integro-differential equation (PIDE) that involves finite differences terms (in the inventory variable) like in the infinite-dimensional ODE of~\cite{bergault2019size} but also a partial derivative term (in the inventory variable) like in models à la Almgren-Chriss. We illustrate our model numerically in Section~\ref{numericSec} using examples from the FX market and discuss the results. In particular, we highlight the existence of a threshold of inventory below which it is not optimal for the dealer to trade in the liquidity pool.\\

\section{The model}
\label{model}

We consider a dealer in charge of a single asset. This asset can be traded by the dealer in two ways: (i)~via requests they receive from clients, through the RFQ or RFS channel, and (ii) via market orders sent to a liquidity pool, for instance an inter-dealer broker platform or an all-to-all platform.\footnote{We can also regard the liquidity pool of our model as an aggregation of numerous liquidity pools.} In the former case, the dealer proposes a price to clients wishing to buy or sell the asset and clients ultimately decide to trade or not to trade. In the latter case, the dealer decides to buy or sell at a market price that depends on the traded volume.\footnote{The model can easily be generalized to multiple assets (see~\cite{gueant2017optimal} for the way to go multi-asset in models \textit{à la} Avellaneda-Stoikov). In that case, each asset may be traded through requests only, market orders only, or both ways.}\\

In Section~\ref{basemodel} we present an optimization problem with state constraints corresponding to risk limits. In Section~\ref{smoothmodel}, we present a slightly modified version that is more practical for mathematical analysis.

\subsection{Modeling framework and notations}
\label{basemodel}
We consider a filtered probability space $\left( \Omega, \mathcal{F},\mathbb{P}; \mathbb{F}= (\mathcal{F}_{t})_{t\geq 0} \right)$ satisfying the usual conditions. We assume this probability space is large enough to support all the processes we introduce.\\

Let us start with the modelling of the price. We model the reference price (for instance the mid-price of the liquidity pool\footnote{For instance, in FX markets, the reference price is typically the mid-price on platforms like EBS or Refinitiv. For corporate bonds, it can be Bloomberg CBBT or MarketAxess CP+.}) of the asset by a process $\left(S_t \right)_{t\geq 0}$. We consider that the dealer can trade continuously in the liquidity pool and we denote by $\left(w_t \right)_{t\geq 0}$ the execution rate of the dealer (they buy when $w_t \geq 0$ and sell otherwise).\footnote{This is similar to what was done in Almgren~\cite{almgren2003optimal}.} Taking into account the permanent market impact of trades within the liquidity pool, the price process $(S_t)_{t\geq 0}$ is modeled as follows: $$dS_t = \sigma dW_t + kw_t dt,$$ with $S_0$ given, where $k$ and $\sigma$ are positive constant and the process $\left(W_t\right)_{t\geq 0}$ is a standard Brownian motion adapted to the filtration $\mathbb{F}$.\footnote{The linearity of the permanent market impact can be justified on theoretical grounds -- see the famous paper~\cite{gatheral2010no} by Gatheral. The Brownian assumption is made in almost all market making papers. It makes perfect sense for the examples we develop in Section~\ref{numericSec} because, given liquidity levels, the relevant time horizon for FX market making is far less than one day (see Figures~\ref{conv_exec} and~\ref{conv_deltas}). For other asset classes, like corporate bonds, the Brownian assumption may be more questionable -- see~\cite{drissi} for the use of Ornstein-Uhlenbeck prices in market making models.}\\

Regarding requests, the dealer proposes bid and ask quotes depending on the size $z\in \mathbb{R}_+^*$ of each request. These quotes are modeled by maps $S^{b},S^{a} : \Omega \times [0,T] \times \mathbb{R}_{+}^{*}\rightarrow \mathbb{R_{+}}$ which are $\mathcal{P} \otimes \mathcal{B}(\mathbb{R}_{+}^{*})$-measurable, where $\mathcal{P}$ denotes the $\sigma$-algebra of $\mathbb{F}$-predictable subsets of $\Omega \times[0,T]$ and $\mathcal{B}(\mathbb{R}_{+}^{*})$ denotes the Borelian sets of $\mathbb{R}_{+}^{*}$.\\

We introduce $J^{b}(dt,dz)$ and $J^{
a}(dt,dz)$ two \textit{càdlàg} $\mathbb{R}_{+}$-marked point processes corresponding to the number of transactions resulting from requests at the bid and at the ask, respectively. In the following sections, we shall denote by $\tilde{J}^{b}(dt,dz)$ and $\tilde{J}^{a}(dt,dz)$ the associated compensated processes.\\

The inventory of the dealer, modeled by the process $(q_{t})_{t\geq 0} $, has the following dynamics:
\begin{equation*}
    dq_t = \int_{\mathbb{R}_{+}^{*}} z J^{b}(dt,dz) - \int_{\mathbb{R}_{+}^{*}} z J^{a}(dt,dz) + w_t  dt,
\end{equation*}
with $q_0$ given. We assume that the processes $J^{b}(dt,dz)$ and $J^{a}(dt,dz)$ do not have simultaneous jumps almost surely. Moreover, we denote by $\left(\nu^{b}_{t}(dz)\right)_{t\geq 0}$ and $\left(\nu^{a}_{t}(dz)\right)_{t\geq 0}$ the intensity kernels of $J^{b}(dt,dz)$ and $J^{
a}(dt,dz)$, respectively.\\ 

We assume that the dealer wants their inventory to remain in an interval $\mathcal Q = [-\tilde q, \tilde q]$, with $\tilde q >0$. The intensities $\left(\nu^{b}_{t}(dz)\right)_{t\geq 0}$ and $\left(\nu^{a}_{t}(dz)\right)_{t\geq 0}$ verify
$$
    \nu^{b}_{t}(dz)\ =\  \mathds{1}_{\left\{q_{t-} + z \in \mathcal Q \right\}} \Lambda^{b}(\delta^{b}(t,z))\mu^{b}(dz) \quad \text{and} \quad
    \nu^{a}_{t}(dz)\ =\  \mathds{1}_{\left\{q_{t-} - z \in \mathcal Q \right\}} \Lambda^{a}(\delta^{a}(t,z))\mu^{a}(dz),$$
where (i) $\mu^{b}$ and $\mu^{a}$ are probability measures on $\mathbb{R}_{+}^{*}$, absolutely continuous with respect to the Lebesgue measure and such that $\int_{\mathbb{R}_{+}^{*}} z\mu^{b}(dz) =: \Delta^{b} < +\infty$ and $\int_{\mathbb{R}_{+}^{*}} z\mu^{a}(dz) =: \Delta^{a} < +\infty$, (ii) $\delta^{b}(t,z)$ and $\delta^{a}(t,z)$ are defined by
\begin{equation}
\delta^{b}(t,z) = S_{t} - S^{b}(t,z) \text{ and }
\delta^{a}(t,z) = S^{a}(t,z) -  S_{t},  \nonumber
\end{equation}
and (iii) $\Lambda^{b}$, $\Lambda^{a}$ are two functions satisfying the following hypotheses $(H_\Lambda)$:
\begin{itemize}
    \item $\Lambda^{b}$ and $\Lambda^{a}$ are twice continuously differentiable,
    \item $\Lambda^{b}$ and $\Lambda^{a}$ are decreasing, with $\forall \delta \in \mathbb{R}$, ${\Lambda^{b}}'(\delta)<0$ and ${\Lambda^{a}}'(\delta)<0$,
    \item $\underset{\delta \rightarrow +\infty}{\lim}\Lambda^{b}(\delta)=0$ and $\underset{\delta \rightarrow +\infty}{\lim} \Lambda^{a}(\delta)=0$,
    \item $\underset{\delta \in \mathbb{R}}{\sup}  \frac{\Lambda^{b}(\delta){\Lambda^{b}}''(\delta)}{\left( {\Lambda^{b}}'(\delta) \right)^{2}}  < 2$ and $\underset{\delta \in \mathbb{R}}{\sup}  \frac{\Lambda^{a}(\delta){\Lambda^{a}}''(\delta)}{\left( {\Lambda^{a}}'(\delta) \right)^{2}}  < 2$. \newline
\end{itemize}

The process $(X_{t})_{t\geq 0}$ modeling the dealer's cash account has the dynamics:
\begin{equation*}
\begin{split}
dX_t & = \int_{\mathbb{R}_+^*}z\Big(S_t + \delta^{a}(t,z) \Big)J^{a}(dt,dz) - \int_{\mathbb{R}_+^*}z\Big(S_t - \delta^{b}(t,z) \Big)J^{b}(dt,dz)-w_t S_t dt - L(w_t)dt \\
& =  \int_{\mathbb{R}_+^*}z\delta^{b}(t,z) J^{b}(dt,dz) + \int_{\mathbb{R}_+^*}z\delta^{a}(t,z) J^{a}(dt,dz) -  L(w_t)dt - S_tdq_t, 
\end{split}    
\end{equation*}
where the penalty function $L:\mathbb{R}\rightarrow \mathbb{R}_+$ (that results from the temporary price impact when trading in the liquidity pool) satisfies the following hypotheses $(H_L)$:
\begin{itemize}
    \item $L(0)=0$,
    \item $L$ is strictly convex, increasing on $\mathbb{R}_+$ and decreasing on $\mathbb{R}_-$,
    \item $L$ is asymptotically superlinear, i.e., $\underset{|\varrho| \rightarrow +\infty}{\lim} \frac{L(\varrho)}{|\varrho|} = +\infty$.
\end{itemize}

For the dealer's inventory to remain in the interval $\mathcal Q = [-\tilde q, \tilde q]$, we define the set of admissible strategies as follows:
\begin{equation}
\mathcal{A}_{\mathcal T} = \Bigg\lbrace w : \Omega \times [0,T] \mapsto \mathbb{R} \bigg| w \text{ is } \mathcal{P}\text{-measurable,} |w_t| \le v_{\infty}\ \mathbb P \otimes dt \  a.e. \text { and } q_t \in \mathcal Q \ \mathbb P \otimes dt \  a.e. \Bigg\rbrace \nonumber
\end{equation}
for a given $v_{\infty}>0$, and
\begin{equation}
\begin{split}
\mathcal{A}_{\mathcal M} = \Bigg\lbrace \delta = \left(\delta^{b},\delta^{a}\right) : \Omega \times [0,T] \times \mathbb{R}_{+}^{*} \mapsto \mathbb{R}^{2} \bigg| \delta \text{ is } \mathcal{P} \otimes \mathcal{B}(\mathbb{R}_{+}^{*})\text{-measurable }\\
\text{and } \delta^{b}(t,z) \wedge \delta^{a}(t,z) \geq -\delta_{\infty}\ \mathbb P \otimes dt \otimes dz \  a.e.  \Bigg\rbrace \nonumber
\end{split}
\end{equation} where $\delta_{\infty}>0$ is a prespecified constant.\\

For two given continuous penalty functions $\psi:\mathbb{R} \rightarrow \mathbb{R}_{+}$ and $\ell:\mathbb{R} \rightarrow \mathbb{R}_{+}$, modeling the risk aversion of the dealer ($\psi$ incentivizes the dealer to avoid large inventories, long or short) and the cost of liquidity, we aim at maximizing

\begin{equation*}
\mathbb{E} \left[ X_{T} +  q_T S_T  - \ell(q_{T}) - \int_{0}^{T} \psi(q_{t}) dt \right]
\end{equation*}

over the set $\mathcal{A}_{\mathcal M} \times \mathcal{A}_{\mathcal T}$ of admissible controls $\left(\delta, w \right)$.

\begin{rem}
For instance, for a constant $\gamma>0$, we can choose $\psi(q)=\frac{\gamma}{2} \sigma^2 q^2$ or $\psi(q) = \gamma \sigma |q|$. For the terminal penalty, it should at least account for the cost associated with complete liquidation of the remaining inventory at time $T$, i.e. $\ell(q) \ge \frac k2 q^2$ (see for instance~\cite{gueant2016financial}).\end{rem}

\subsection{Risk limits and trading costs}
\label{smoothmodel}

The presence of both point processes and absolutely continuous processes in the model requires the use of the theory of discontinuous viscosity solutions to address the control problem with partial differential equation tools. However, the above problem belongs to the family of state-constrained control problems for which the use of (a priori) discontinuous viscosity solutions makes the mathematical analysis unduly complicated, in particular when it comes to proving a comparison principle for the associated Hamilton-Jacobi equation. For that reason, we replace the state constraint by a penalty paid when buying (resp. selling) while already very long (resp. short) and close to the risk limit.\\

More precisely, we fix $\tilde \epsilon \in (0,\tilde q)$ and introduce a function $\zeta: \mathbb R \rightarrow [0,1]$ which is a Lipschitz approximation (we write $L_\zeta$ the Lipschitz constant of $\zeta$) of the indicator function of $\mathbb R_+$, with $$\zeta(q) = 1 \quad \forall q \ge \tilde\epsilon \quad \text{and} \quad \zeta(q) = 0 \quad \forall q \le 0.$$
Then, we use a new control process $$(v_t)_{t \geq 0} \in \tilde{\mathcal{A}}_\mathcal{T} = \Bigg\lbrace v : \Omega \times [0,T] \mapsto \mathbb{R} \bigg| v \text{ is } \mathcal{P}\text{-measurable, and } |v_t| \le v_{\infty}\ \mathbb P \otimes dt \  a.e.\Bigg\rbrace$$ and choose an execution rate of the form  $$w_t = -(v_t)_- \zeta(\tilde{q} + q_{t-}) + (v_t)_+ \zeta(\tilde{q}-q_{t-}).$$
We also introduce a new cost function\footnote{This new function corresponds to the approximation of the cost function  $(w,q) \mapsto L(w) + \infty 1_{q \not\in \mathcal Q}$ by the function $ \tilde{\mathcal L}_0 : \mathbb R^2 \rightarrow \mathbb R \cup \{+\infty\}$ defined by
$$\tilde{\mathcal L}_0 (w,q) = L\left( \frac{-w_-}{\zeta(\tilde{q}+q)} \right) \zeta(\tilde{q}+q) +  L\left( \frac{w_+}{\zeta(\tilde{q}-q)} \right) \zeta(\tilde{q}-q) \quad \forall (w,q) \in \mathbb R^2,$$ with the conventions
$$
L\left( \frac{-w_-}{\zeta(\tilde{q}+q)} \right) \zeta(\tilde{q}+q) =
\begin{cases}
+\infty \quad \text{if } \zeta(\tilde{q}+q) = 0 \text{ and } w_- \neq 0\\
0 \quad \text{if } \zeta(\tilde{q}+q) = 0 \text{ and } w_- = 0,
\end{cases}
$$
and similarly
$$
L\left( \frac{w_+}{\zeta(\tilde{q}-q)} \right) \zeta(\tilde{q}-q) =
\begin{cases}
+\infty \quad \text{if } \zeta(\tilde{q}-q) = 0 \text{ and } w_+ \neq 0\\
0 \quad \text{if } \zeta(\tilde{q}-q) = 0 \text{ and } w_+ = 0.
\end{cases}
$$
These conventions are natural as $L$ is assumed to be asymptotically superlinear.} $\tilde{\mathcal L}: \mathbb R^2 \rightarrow \mathbb R$ given by
$$\tilde{\mathcal L} (v,q) = L\left( -v_- \right) \zeta(\tilde q + q) +  L\left(v_+ \right) \zeta(\tilde q -q) \quad \forall (v,q) \in \mathbb R^2 .$$

It is important to notice that the control process and the associated costs are exactly the same as the initial one whenever the inventory lies in $[-\tilde q + \tilde\epsilon, \tilde q - \tilde\epsilon]$. As the value of the inventory gets closer and closer to $-\tilde q$ (in $(-\tilde q , -\tilde q + \tilde\epsilon)$), the dealer can only sell smaller and smaller volumes and pays an increasingly high cost to sell a given volume, and, symmetrically, as the value of the inventory gets closer and closer to $\tilde q$ (in $(\tilde q - \tilde\epsilon, \tilde q)$), the dealer can only buy smaller and smaller volumes and pays an increasingly high cost to buy a given volume. As a result, the inventory naturally stays in $\mathcal Q$. Because the penalty function $\psi$ incentivizes the dealer to avoid large inventories in absolute value, there is a priori no reason for the dealer to buy (resp. sell) in liquidity pools when they are very long (resp. short). Therefore, when risk limits are high enough, our slight modification of the model has absolutely no impact on the optimal strategy.\\

The price process $(S_t)_{t\geq 0}$ has subsequently the following dynamics: $$dS_t = \sigma dW_t + k\left(-(v_t)_- \zeta(\tilde{q} + q_{t-}) + (v_t)_+ \zeta(\tilde{q}-q_{t-})\right) dt,$$
and the inventory $(q_t)_{t \ge 0}$ has the dynamics:
\begin{equation*}
    dq_t = \int_{\mathbb{R}_{+}^{*}} z J^{b}(dt,dz) - \int_{\mathbb{R}_{+}^{*}} z J^{a}(dt,dz) + \left(-(v_t)_- \zeta(\tilde{q} + q_{t-}) + (v_t)_+ \zeta(\tilde{q}-q_{t-}) \right)  dt.
\end{equation*}
Finally, the process $(X_{t})_{t\geq 0}$ modeling the dealer's cash account has the dynamics:
\begin{equation*}
\begin{split}
dX_t & = \int_{\mathbb{R}_+^*}z\Big(S_t + \delta^{a}(t,z) \Big)J^{a}(dt,dz) - \int_{\mathbb{R}_+^*}z\Big(S_t - \delta^{b}(t,z) \Big)J^{b}(dt,dz) \\
&\quad - \left(-(v_t)_- \zeta(\tilde{q} + q_{t-}) + (v_t)_+ \zeta(\tilde{q}-q_{t-})\right) S_t dt - \tilde{\mathcal L}\big(v_t, q_{t-} \big)dt \\
& = \int_{\mathbb{R}_+^*}z\delta^{b}(t,z) J^{b}(dt,dz) + \int_{\mathbb{R}_+^*}z\delta^{a}(t,z) J^{a}(dt,dz)  \\
&\quad -  \tilde{\mathcal L}\big(v_t, q_{t-} \big)dt - S_tdq_t. 
\end{split}    
\end{equation*}

The resulting optimization problem is that of maximizing
\begin{equation*}
\mathbb{E} \left[ X_{T} +  q_T S_T  - \ell(q_{T}) - \int_{0}^{T} \psi(q_{t}) dt \right] = \mathbb{E} \left[ X_{T-} +  q_{T-} S_T  - \ell(q_{T-}) - \int_{0}^{T} \psi(q_{t-}) dt \right]
\end{equation*}
over the modified set $\mathcal A:= \mathcal A_{\mathcal M} \times \tilde{\mathcal A}_{\mathcal T}$ of admissible controls $\left(\delta, v \right)$.\\

After applying Itô's formula to $\left(X_{t} + q_t S_t  \right)_{t\geq 0}$ between $0$ and $T-$, it is easy to see that the problem is equivalent to maximizing:
\begin{equation}
\begin{split}
\mathbb{E}&\left[\int\limits_{0}^{T} \Bigg\lbrace \int_{\mathbb{R}_{+}^{*}} \Big(z\delta^{b}(t,z) \mathds{1}_{\left\{q_{t-} + z \in \mathcal Q \right\}} \Lambda^{b}(\delta^{b}(t,z))\mu^{b}(dz) + z \delta^{a}(t,z) \mathds{1}_{\left\{q_{t-} - z \in \mathcal Q \right\}} \Lambda^{a}(\delta^{a}(t,z))\mu^{a}(dz) \Big) \right. \nonumber \\
&\left. \qquad + k q_t \left(-(v_t)_- \zeta(\tilde{q} + q_{t-}) + (v_t)_+ \zeta(\tilde{q}-q_{t-}) \right) - \tilde{\mathcal L}\big(v_t, q_{t-} \big) - \psi(q_{t-}) \Bigg\rbrace dt -\ell(q_{T-}) \vphantom{\int\limits_{0}^{T}} \right],\nonumber
\end{split}
\end{equation}
over the set of admissible controls $\mathcal{A}$.\\

We therefore introduce the function $\mathcal{J}:[0,T] \times \mathcal Q\times \mathcal{A}_{\mathcal M} \times \tilde{\mathcal A}_{\mathcal T}\longrightarrow \mathbb{R}$ such that, for all $t \in [0,T]$, for all $q \in \mathcal Q$ and for all $\left(\bar \delta, \bar v \right) \in \mathcal{A}$, if we denote by $\big(q^{t,q,\bar{\delta}, \bar{v}}_{s}\big)_{s\geq t}$ the inventory process starting in state $q$ at time $t$ and controlled by $\left(\bar{\delta},\bar{v} \right)$:
\begin{equation}
\begin{split}
\mathcal{J}\left(t,q,\bar{\delta},\bar{v}\right) = \ & \mathbb{E} \left[\int\limits_{t}^{T} \Bigg\lbrace \int_{\mathbb{R}_{+}^{*}} \Big(z\delta^{b}(s,z)  \mathds{1}_{\left\{q^{t,q,\bar{\delta}, \bar{v}}_{s-} + z \in \mathcal Q \right\}} \Lambda^{b}(\delta^{b}(s,z))\mu^{b}(dz) \right. \\
& \qquad + z\delta^{a}(s,z) \mathds{1}_{\left\{q^{t,q,\bar{\delta}, \bar{v}}_{s-} - z \in \mathcal Q \right\}}\Lambda^{a}(\delta^{a}(s,z))\mu^{a}(dz) \Big)\\
&\left. \qquad + k q^{t,q,\bar{\delta}, \bar{v}}_{s-} \left(-(v_s)_- \zeta(\tilde q + q^{t,q,\bar{\delta}, \bar{v}}_{s-}) + (v_s)_+ \zeta(\tilde q - q^{t,q,\bar{\delta}, \bar{v}}_{s-}) \right) \right.\\
& \qquad \left. - \tilde{\mathcal L}\big(v_s, q^{t,q,\bar{\delta}, \bar{v}}_{s-} \big) - \psi(q^{t,q,\bar{\delta}, \bar{v}}_{s-}) \Bigg\rbrace ds -\ell(q^{t,q,\bar{\delta}, \bar{v}}_{T-}) \vphantom{\int\limits_{t}^{T}} \right].\nonumber
\end{split}
\end{equation}

The value function $\theta:[0,T]\times \mathcal Q \rightarrow \mathbb{R}$ of the problem is then defined as follows:

\begin{equation}
\theta(t,q) = \underset{\left(\bar{\delta}, \bar{v} \right)\in \mathcal{A}}{\sup} \mathcal{J}\left(t,q,\bar{\delta}, \bar{v}\right), \quad \forall (t,q) \in [0,T] \times \mathcal Q. \nonumber
\end{equation}

Using the theory of discontinuous viscosity solutions, we will show that $\theta$ is in fact the unique continuous viscosity solution on $[0,T]\times \mathcal Q$ to the following Hamilton-Jacobi partial integro-differential equation:
\begin{equation}
\tag{HJ}
\begin{cases}
0 = -\partial_t \theta(t,q)+ \psi(q) - \text{\scalebox{0.6}[1]{$\bigint$}}_{\mathbb{R}_{+}^{*}} \mathds{1}_{\left\{q + z \in \mathcal Q \right\}} zH^{b} \left(\frac{\theta(t,q) -  \theta(t,q+z) }{z}\right) \mu^{b}(dz)\\
\quad - \text{\scalebox{0.6}[1]{$\bigint$}}_{\mathbb{R}_{+}^{*}} \mathds{1}_{\left\{q- z \in \mathcal Q \right\}}zH^{a} \left(\frac{\theta(t,q) - \theta(t,q-z)}{z} \right) \mu^{a}(dz) -  \mathcal H \left(\partial_{q}\theta(t,q), q \right)\quad  \forall t \in [0,T) \\
\theta(T,q) = -\ell(q),
\end{cases}
\label{eqn:HJB}
\end{equation}
where
\begin{equation}
H^{b}:p\in\mathbb{R} \mapsto \underset{\delta \geq -\delta_{\infty}}{\sup} \Lambda^{b}(\delta)(\delta-p) \text{ and }H^{a}:p\in\mathbb{R} \mapsto  \underset{\delta \geq -\delta_{\infty}}{\sup} \Lambda^{a}(\delta)(\delta-p) ,\nonumber
\end{equation}
and
\begin{equation}
    \mathcal H:(p,q)\in\mathbb{R} \times \mathcal Q \mapsto \underset{|v| \le v_{\infty}}{\sup} \left(-v_- \zeta(\tilde q + q) + v_+ \zeta(\tilde q - q) \right)(p+kq) - \tilde{\mathcal L}\big(v, q \big).\nonumber
\end{equation}

\section{Mathematical analysis}
\label{viscoSec}

\subsection{Preliminary results}

We first recall a result (Lemma~\ref{lemmH}) which is proved in~\cite{bergault2019size}.\\

\begin{lemme}
\label{lemmH}
$H^{b}$ and $H^{a}$ are two continuously differentiable decreasing functions and the supremum in the definition of $H^{b}(p)$ (respectively $H^{a}(p)$) is reached at a unique $\delta^{b*}(p)$ (respectively $\delta^{a*}(p)$). Furthermore, $\delta^{b*}$ and $\delta^{a*}$ are continuous and nondecreasing in $p$.\newline
\end{lemme}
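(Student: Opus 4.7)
Since the arguments for $H^b$ and $H^a$ are symmetric, I focus on $H^b$; the result is essentially contained in \cite{bergault2019size}, and I sketch the key steps. Set $F(\delta,p) := \Lambda^b(\delta)(\delta - p)$, which, by $(H_\Lambda)$, is $C^2$ in $\delta$ and linear in $p$. The first-order condition in $\delta$ reads $\partial_\delta F(\delta,p) = {\Lambda^b}'(\delta)(\delta - p) + \Lambda^b(\delta) = 0$, equivalently
$$p = h(\delta) := \delta + \frac{\Lambda^b(\delta)}{{\Lambda^b}'(\delta)}.$$
A direct computation gives $h'(\delta) = 2 - \Lambda^b(\delta)\,{\Lambda^b}''(\delta)/({\Lambda^b}'(\delta))^2$, which by the fourth bullet of $(H_\Lambda)$ is bounded below by some $\alpha > 0$. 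Hence $h$ is a strictly increasing $C^1$ diffeomorphism from $[-\delta_\infty, +\infty)$ onto an interval of the form $[h(-\delta_\infty), L)$.

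At any interior critical point $\delta^*$, substituting the first-order condition into $\partial_\delta^2 F$ yields $\partial_\delta^2 F(\delta^*, p) = {\Lambda^b}'(\delta^*)\,h'(\delta^*) < 0$, so every critical point is a strict local maximum, and the strict monotonicity of $h$ forces uniqueness. For $p < h(-\delta_\infty)$, one checks that $\partial_\delta F(-\delta_\infty, p) < 0$, so $F(\cdot, p)$ is decreasing near the boundary and the supremum is attained at $\delta^{b*}(p) = -\delta_\infty$. For $p \ge h(-\delta_\infty)$, the candidate maximizer is the unique interior critical point $\delta^{b*}(p) = h^{-1}(p)$. The main technical step is to rule out escape of the supremum to $+\infty$; this is the hard part, and it is precisely where the condition $\sup \Lambda^b{\Lambda^b}''/({\Lambda^b}')^2 < 2$ is used, as it controls the logarithmic derivative of $-1/{\Lambda^b}'$ and forces $\Lambda^b(\delta)(\delta - p) \to 0$ as $\delta \to +\infty$ (see \cite{bergault2019size} for the detailed Grönwall-type argument).

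Once attainment is established, regularity and monotonicity follow by routine arguments. The implicit function theorem applied to $p - h(\delta) = 0$ shows that $\delta^{b*}$ is $C^1$ on $(h(-\delta_\infty), +\infty)$ with $(\delta^{b*})'(p) = 1/h'(\delta^{b*}(p)) > 0$; combined with $\delta^{b*} \equiv -\delta_\infty$ on $(-\infty, h(-\delta_\infty)]$ and the matching at $p = h(-\delta_\infty)$, this gives continuity and non-decreasingness on all of $\mathbb{R}$. Finally, by the envelope theorem,
$$(H^b)'(p) = \partial_p F(\delta^{b*}(p), p) = -\Lambda^b(\delta^{b*}(p)) < 0,$$
so $H^b$ is $C^1$ and strictly decreasing on $\mathbb{R}$; continuity of the derivative across the kink at $p = h(-\delta_\infty)$ holds because both one-sided expressions equal $-\Lambda^b(-\delta_\infty)$. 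The principal obstacle is thus the non-trivial attainment argument; once $(H_\Lambda)$ is harnessed there, the remainder of the lemma reduces to single-variable calculus.
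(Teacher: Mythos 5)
Your proposal is correct in substance, but note that the paper itself contains no proof of this lemma: it simply recalls it from \cite{bergault2019size}, so there is nothing internal to compare against, and your sketch is a faithful reconstruction of the standard argument via the auxiliary function $h(\delta)=\delta+\Lambda^{b}(\delta)/{\Lambda^{b}}'(\delta)$, whose derivative $h'(\delta)=2-\Lambda^{b}(\delta){\Lambda^{b}}''(\delta)/({\Lambda^{b}}'(\delta))^{2}$ is exactly where the fourth bullet of $(H_\Lambda)$ enters. Two refinements would make it self-contained. First, since $h'\geq\alpha:=2-\sup_{\delta}\Lambda^{b}{\Lambda^{b}}''/({\Lambda^{b}}')^{2}>0$ uniformly, $h$ maps $[-\delta_\infty,+\infty)$ onto $[h(-\delta_\infty),+\infty)$, i.e.\ your $L$ is necessarily $+\infty$; this should be stated explicitly, as it is what guarantees an interior critical point $h^{-1}(p)$ for every $p\geq h(-\delta_\infty)$. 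Second, the attainment step you defer to \cite{bergault2019size} as the ``hard part'' actually follows at once from your own first-order computation: writing $\partial_\delta F(\delta,p)={\Lambda^{b}}'(\delta)\bigl(h(\delta)-p\bigr)$ with ${\Lambda^{b}}'<0$, the map $\delta\mapsto F(\delta,p)$ is strictly increasing where $h(\delta)<p$ and strictly decreasing where $h(\delta)>p$; since $h\to+\infty$, $F(\cdot,p)$ is eventually strictly decreasing, so the supremum reduces to a maximum over a compact interval and is attained at $-\delta_\infty$ if $p<h(-\delta_\infty)$ and at $h^{-1}(p)$ otherwise, with uniqueness from the strict monotonicity of $h$ — no Gr\"onwall-type decay estimate for $\Lambda^{b}(\delta)\delta$ is needed (although such decay does hold). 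With these two points made explicit, the rest of your argument (inverse function theorem for $\delta^{b*}$, the matching at $p=h(-\delta_\infty)$, and the envelope identity $(H^{b})'(p)=-\Lambda^{b}(\delta^{b*}(p))<0$ giving $C^{1}$ regularity and strict decrease) is complete and correct.
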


We then state another useful lemma:
\begin{lemme}
\label{zHbounded}
Let $\varphi:[0,T]\times \mathcal{Q} \mapsto \mathbb{R}$ be a bounded function. The functions $$(t,q,z) \in [0,T] \times \mathcal{Q} \times \mathbb{R}_{+}^{*} \mapsto \mathds{1}_{\{q +z \in \mathcal{Q}\}}zH^{b} \left( \frac{\varphi(t,q) - \varphi(t,q+z)}{z} \right)$$ and $$(t,q,z)\in [0,T] \times \mathcal{Q} \times \mathbb{R}_{+}^{*} \mapsto \mathds{1}_{\{q -z \in \mathcal{Q}\}}zH^{a} \left(\frac{\varphi(t,q) - \varphi(t,q-z)}{z} \right)$$ are bounded. \newline
\end{lemme}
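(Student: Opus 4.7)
My plan is to reduce matters to bounding $z H^b(p)$ where the indicator $\mathds{1}_{\{q+z \in \mathcal{Q}\}}$ forces $z \in (0, 2\tilde q]$ (since $q \geq -\tilde q$ and $q+z \leq \tilde q$) and $p := (\varphi(t,q) - \varphi(t,q+z))/z$. The delicate regime is $z \to 0^+$, where $|p|$ may blow up, so bounding $H^b(p)$ directly would fail. The key trick will be to absorb the factor $z$ inside the supremum:
$$zH^b(p) = \sup_{\delta \geq -\delta_\infty} \Lambda^b(\delta)\bigl(z\delta - \Delta\varphi\bigr), \qquad \Delta\varphi := \varphi(t,q) - \varphi(t,q+z),$$
where now $|\Delta\varphi| \leq 2\|\varphi\|_\infty$ and $z \leq 2\tilde q$ are both uniformly bounded, independently of the original $p$.

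For the upper bound I will split the supremum at $\delta = 0$. On $\{\delta \geq 0\}$ I bound $\Lambda^b(\delta) z\delta \leq z\sup_{\delta \geq 0}\Lambda^b(\delta)\delta = z H^b(0)$, whose finiteness is guaranteed by Lemma \ref{lemmH}; combined with $\Lambda^b(\delta) \leq \Lambda^b(0)$, this contributes at most $2\tilde q\, H^b(0) + 2\|\varphi\|_\infty\,\Lambda^b(0)$. On $\{-\delta_\infty \leq \delta < 0\}$ the term $\Lambda^b(\delta) z\delta$ is nonpositive and $\Lambda^b(\delta) \leq \Lambda^b(-\delta_\infty)$, so the contribution is at most $2\|\varphi\|_\infty\,\Lambda^b(-\delta_\infty)$. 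For the lower bound, plugging $\delta = 0$ into the defining sup gives $H^b(p) \geq -\Lambda^b(0) p$, whence $z H^b(p) \geq -\Lambda^b(0) \Delta\varphi \geq -2\|\varphi\|_\infty\,\Lambda^b(0)$.

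The argument for $H^a$ is verbatim symmetric, as the indicator $\mathds{1}_{\{q-z \in \mathcal{Q}\}}$ likewise restricts $z$ to $(0,2\tilde q]$. The main conceptual point is that the supremum definition of $H^b, H^a$ allows one to push the factor $z$ inside, replacing the potentially unbounded argument $p$ by the bounded increment $\Delta\varphi$; there is no serious obstacle beyond this observation and the use of Lemma \ref{lemmH} to guarantee $H^b(0), H^a(0) < \infty$.
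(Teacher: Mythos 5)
Your proof is correct and follows essentially the same route as the paper: absorb the factor $z$ into the supremum so that the bounded increment $\varphi(t,q)-\varphi(t,q\pm z)$ replaces the possibly unbounded argument $p$, use the indicator to get $z\le 2\tilde q$, bound the positive part via $H^{b}(0)$ (resp. $H^{a}(0)$) and the monotonicity of $\Lambda^{b},\Lambda^{a}$, and obtain the lower bound by evaluating at $\delta=0$. The only cosmetic difference is that you split the supremum over $\{\delta\ge 0\}$ and $\{-\delta_\infty\le\delta<0\}$ while the paper splits the supremum of the sum into a sum of suprema, yielding the same constants up to replacing $\Lambda^{b}(0)$ by $\Lambda^{b}(-\delta_\infty)$.
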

\begin{proof}
We only prove it for the ask side (the proof for the bid side is similar).\\

Let $t\in[0,T],$ $q\in \mathcal{Q}$ and $z \in \mathbb{R}_{+}^{*} $ such that $q-z \in \mathcal{Q}$.\\

We have
\begin{equation}
\begin{split}
zH^{a} \left(\frac{\varphi(t,q) - \varphi(t,q-z)}{z} \right) & = z \underset{\delta \geq -\delta_{\infty}}{\sup} \Lambda^{a}(\delta)\bigg(\delta-\frac{\varphi(t,q) - \varphi(t,q-z)}{z}\bigg)  \\
& \leq z \underset{\delta \geq -\delta_{\infty}}{\sup} \Lambda^{a}(\delta)\delta + \underset{\delta \geq -\delta_{\infty}}{\sup}  - \Lambda^{a}(\delta)\bigg(\varphi(t,q) - \varphi(t,q-z) \bigg) \\
& \leq 2\tilde{q} H^a(0)  + 2 \Lambda^{a}(-\delta_{\infty}) \underset{[0,T]\times \mathcal{Q}}{\sup}\ |\varphi|. \nonumber
\end{split}
\end{equation}

For the other bound, we have
\begin{eqnarray*}
zH^{a} \left(\frac{\varphi(t,q) - \varphi(t,q-z)}{z} \right) &=& \underset{\delta \geq -\delta_{\infty}}{\sup} \Bigg\lbrace z\Lambda^{a}(\delta)\delta - \Lambda^{a}(\delta)\bigg(\varphi(t,q) - \varphi(t,q-z) \bigg) \Bigg\rbrace.\\
&\ge & - \Lambda^{a}(0)\bigg(\varphi(t,q) - \varphi(t,q-z) \bigg)\\
&\ge &  - 2\Lambda^{a}(0)\underset{[0,T]\times \mathcal{Q}}{\sup}\ |\varphi|.\\
\end{eqnarray*}
\end{proof}

We can now state a first simple result about the value function $\theta$:
\begin{prop}
The value function $\theta$ is bounded on $ [0,T] \times \mathcal Q.$
\end{prop}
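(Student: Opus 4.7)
The plan is to establish separate upper and lower bounds on $\theta(t,q)$ that are uniform in $(t,q) \in [0,T] \times \mathcal Q$, by exploiting the compactness of $\mathcal Q$ together with the structural assumptions $(H_\Lambda)$, $(H_L)$ and the continuity of $\psi$ and $\ell$.

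\textbf{Upper bound.} I would bound the integrand of $\mathcal{J}(t,q,\bar\delta,\bar v)$ term by term by a constant independent of the control $(\bar\delta,\bar v)$. For the bid revenue, note that for every $z>0$ and every admissible $\delta^b \ge -\delta_\infty$,
$$z\delta^b(s,z)\mathds 1_{\{q^{t,q,\bar\delta,\bar v}_{s-}+z\in\mathcal Q\}}\Lambda^b(\delta^b(s,z)) \le z\sup_{\delta\ge-\delta_\infty}\delta\Lambda^b(\delta)=zH^b(0),$$
so the $\mu^b$-integral is at most $H^b(0)\Delta^b<\infty$, where finiteness of $H^b(0)$ comes from Lemma \ref{lemmH}. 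The ask term is bounded analogously by $H^a(0)\Delta^a$. The execution cross-term satisfies $|kq(-(v)_-\zeta+(v)_+\zeta)| \le k\tilde q v_\infty$ since $|q|\le\tilde q$, $0\le\zeta\le 1$, and $|v|\le v_\infty$. Finally, $\tilde{\mathcal L}\ge 0$, $\psi\ge 0$, and $\ell\ge 0$, so the remaining terms contribute nonpositively. Integrating over $[t,T]$ and taking expectation yields $\mathcal{J}(t,q,\bar\delta,\bar v)\le T(H^b(0)\Delta^b + H^a(0)\Delta^a + k\tilde q v_\infty)$ uniformly in $(t,q,\bar\delta,\bar v)$, hence the same upper bound holds for $\theta$.

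\textbf{Lower bound.} For this direction it suffices to exhibit a single admissible strategy whose payoff is uniformly bounded below. Take $\bar v\equiv 0$ and $\bar\delta^b(s,z)\equiv\bar\delta^a(s,z)\equiv 0$ (which belongs to $\mathcal A_{\mathcal M}\times\tilde{\mathcal A}_{\mathcal T}$). Then the execution cross-term vanishes, $\tilde{\mathcal L}(0,\cdot)=L(0)=0$, and each bid/ask revenue integrand equals $0$. The only remaining terms are $-\psi(q^{t,q,\bar\delta,\bar v}_{s-})$ and the terminal penalty $-\ell(q^{t,q,\bar\delta,\bar v}_{T-})$. Since $\psi$ and $\ell$ are continuous on the compact set $\mathcal Q$, they attain a finite maximum $M$ there, so
$$\theta(t,q)\ge \mathcal{J}(t,q,0,0)\ge -MT-M,$$
uniformly in $(t,q)$.

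\textbf{Main difficulty.} There is essentially no obstacle beyond bookkeeping: the key point is that the Hamiltonian values $H^b(0)$ and $H^a(0)$ are finite (a consequence of $(H_\Lambda)$ via Lemma \ref{lemmH}), and this gives a pointwise-in-$\delta$ bound on the RFQ revenue densities that is integrable against $\mu^b$ and $\mu^a$ thanks to the assumption $\Delta^b,\Delta^a<\infty$. The compactness of $\mathcal Q$ and the boundedness constraint $|v|\le v_\infty$ then dispose of the execution and inventory-dependent terms.
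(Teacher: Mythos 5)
Your proposal is correct and follows essentially the same route as the paper: the upper bound via the pointwise estimates $z\delta\Lambda^{b/a}(\delta)\le zH^{b/a}(0)$, the bound $k\tilde q v_\infty$ on the impact cross-term, and nonnegativity of $\tilde{\mathcal L},\psi,\ell$; the lower bound via the zero-quote, zero-execution control and continuity of $\psi,\ell$ on the compact set $\mathcal Q$. Your lower bound is stated slightly more carefully (taking the maximum of $\psi,\ell$ over $\mathcal Q$ rather than their values at $q$, which is the right thing to do since the inventory moves under the zero-quote control), but the argument is the same.
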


\begin{proof}
$\forall\ \left(t,q,\bar \delta, \bar v \right) \in [0,T] \times \mathcal Q \times\mathcal{A}_{\mathcal M} \times \tilde{\mathcal{A}}_{\mathcal T},$ we have 
\begin{equation}
\begin{split}
\mathcal{J}\left(t,q,\bar{\delta},\bar{v}\right) = & \mathbb{E} \left[\int\limits_{t}^{T} \Bigg\lbrace \int_{\mathbb{R}_{+}^{*}} \Big(z\delta^{b}(s,z) \mathds{1}_{\left\{q^{t,q,\bar{\delta}, \bar{v}}_{s-} + z \in \mathcal Q \right\}} \Lambda^{b}(\delta^{b}(s,z))\mu^{b}(dz)\right.\\
&+ z\delta^{a}(s,z) \mathds{1}_{\left\{q^{t,q,\bar{\delta}, \bar{v}}_{s-} -z \in \mathcal Q \right\}} \Lambda^{a}(\delta^{a}(s,z))\mu^{a}(dz) \Big)\\
&\left.+ k q^{t,q,\bar{\delta}, \bar{v}}_{s-} \left(-(v_s)_- \zeta(\tilde q + q^{t,q,\bar{\delta}, \bar{v}}_{s-}) + (v_s)_+ \zeta(\tilde q - q^{t,q,\bar{\delta}, \bar{v}}_{s-}) \right)\right.\\
&\left. - \tilde{\mathcal L}\big(v_s, q^{t,q,\bar{\delta}, \bar{v}}_{s-} \big) - \psi(q^{t,q,\bar{\delta}, \bar{v}}_{s-}) \Bigg\rbrace ds -\ell(q^{t,q,\bar{\delta}, \bar{v}}_{T-}) \vphantom{\int\limits_{t}^{T}} \right].\nonumber
\end{split}
\end{equation}
As $\ell,$ $\psi$, and $\tilde{\mathcal L}$ are nonnegative, we get
\begin{equation}
\begin{split}
\mathcal{J}\left(t,q,\bar{\delta},\bar{v}\right)& \leq   \mathbb{E} \left[\int\limits_{t}^{T} \Bigg\lbrace \int_{\mathbb{R}_{+}^{*}}  \Big(z\delta^{b}(s,z) \mathds{1}_{\left\{q^{t,q,\bar{\delta}, \bar{v}}_{s-} + z \in \mathcal Q \right\}} \Lambda^{b}(\delta^{b}(s,z))\mu^{b}(dz) \right.\\
&\qquad + z\delta^{a}(s,z) \mathds{1}_{\left\{q_{s-} - z \in \mathcal Q \right\}} \Lambda^{a}(\delta^{a}(s,z))\mu^{a}(dz) \Big)\\
& \qquad + k q^{t,q,\bar{\delta}, \bar{v}}_{s-} \left(-(v_s)_- \zeta(\tilde q + q^{t,q,\bar{\delta}, \bar{v}}_{s-})  \left. \vphantom{\int\limits_{t}^{T} \Bigg\lbrace \int_{\mathbb{R}_{+}^{*}}  \Big(z\delta^{b}(s,z) \mathds{1}_{\left\{q^{t,q,\bar{\delta}, \bar{v}}_{s-} + z \in \mathcal Q \right\}} \Lambda^{b}(\delta^{b}(s,z))\mu^{b}(dz)}+ (v_s)_+ \zeta(\tilde q - q^{t,q,\bar{\delta}, \bar{v}}_{s-}) \right) \Bigg\rbrace ds   \right] \\
& \leq T  \left( \Delta^{b} \underset{\delta \geq -\delta_{\infty}}{\sup} \delta \Lambda^{b}(\delta) + \Delta^{a} \underset{\delta \geq -\delta_{\infty}}{\sup} \delta \Lambda^{a}(\delta) + k v_{\infty } \tilde q\right)\nonumber\\
& \leq T  \left( \Delta^{b} H^b(0)+ \Delta^{a} H^a(0)+ k v_{\infty } \tilde q\right).\nonumber 
\end{split}
\end{equation}
The right-hand side is independent from $t,$ $q,$ $\bar{\delta}$ and $\bar{v},$ so it is clear that $\mathcal{J}$ and $\theta$ are bounded from above.\\

By considering the control process corresponding to $\delta^b(t,z) = \delta^a(t,z) = v_t = 0$, we obtain  
\begin{equation*}
\begin{split}
\theta(t,q)  &\geq - T \psi(q) - \ell(q).
\end{split}    
\end{equation*}
As $\psi$ and $\ell$ are continuous and $\mathcal Q$ is compact, we get that $\theta$ is bounded from below.
\end{proof}

Turning to the Hamiltonian function associated with the possibility to trade in a liquidity pool, we will need the following lemma:
\begin{lemme}
\label{lemmHronde}
$\mathcal H$ is a continuous function that satisfies
$$\exists C_{\mathcal H}>0, \forall p \in \mathbb R,  \forall q,y \in \mathcal Q, \left| \mathcal H (p,q) - \mathcal H (p,y) \right| \le C_{\mathcal H} (1+|p|) |q-y|.$$
Furthermore, the supremum in the definition of $\mathcal H(p,q)$ is reached for $v^{*}(p,q) =\bar{\mathcal H}'(p+kq)$, where $$\bar{\mathcal H} (r) =  \underset{|v| \le v_{\infty}}{\sup} \left(r v - L(v) \right).$$
\end{lemme}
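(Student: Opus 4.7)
My plan is in three steps. \textbf{Step 1: identification of a maximizer.} By strict convexity of $L$, the map $v\mapsto v(p+kq)-L(v)$ is strictly concave, continuous, and attains its supremum on the compact set $\{|v|\le v_\infty\}$ at a unique point $v^{\star}$, which by a standard envelope-theorem argument equals $\bar{\mathcal H}'(p+kq)$. Because $L(0)=0$ and $L$ is increasing on $\mathbb R_+$ and decreasing on $\mathbb R_-$, this $v^{\star}$ has the same sign as $p+kq$. Splitting the objective $g(v;p,q)$ appearing in the definition of $\mathcal H(p,q)$ by the sign of $v$ yields $\zeta(\tilde q-q)\bigl[v(p+kq)-L(v)\bigr]$ for $v\ge 0$ and $\zeta(\tilde q+q)\bigl[v(p+kq)-L(v)\bigr]$ for $v\le 0$. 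Assume $p+kq\ge 0$ (the other case being symmetric). On the branch $v\ge 0$, the bracket is maximized precisely at $v^{\star}$ with value $\bar{\mathcal H}(p+kq)\ge 0$; on the branch $v\le 0$, both $v(p+kq)\le 0$ and $-L(v)\le 0$, so the whole expression is nonpositive, while $g(v^{\star};p,q)=\zeta(\tilde q-q)\bar{\mathcal H}(p+kq)\ge 0$. Hence $v^{\star}$ achieves the supremum.

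\textbf{Step 2: the $(1+|p|)$-Lipschitz estimate in $q$.} For fixed $p\in\mathbb R$ and $q,y\in\mathcal Q$, I apply the elementary inequality $|\sup f-\sup h|\le \sup|f-h|$ to $g(\cdot;p,q)$ and $g(\cdot;p,y)$ on the compact interval $[-v_\infty,v_\infty]$. Writing $g(v;p,q)-g(v;p,y)$ via the two sources of $q$-dependence---the Lipschitz factors $q\mapsto \zeta(\tilde q\mp q)$ (constant $L_\zeta$) and the affine term $k(q-y)$---and using the bounds $|v|\le v_\infty$, $|q|,|y|\le\tilde q$, together with $\sup_{[-v_\infty,v_\infty]}L=:M_L<\infty$ (which follows from continuity), a routine triangle inequality gives
\[
|g(v;p,q)-g(v;p,y)|\le \Bigl(L_\zeta(v_\infty|p|+kv_\infty\tilde q+M_L)+kv_\infty\Bigr)|q-y|\le C_{\mathcal H}(1+|p|)|q-y|
\]
for a suitable constant $C_{\mathcal H}>0$, whence the desired Lipschitz estimate after taking the supremum in $v$.

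\textbf{Step 3: continuity.} Joint continuity of $\mathcal H$ on $\mathbb R\times\mathcal Q$ follows by combining the above Lipschitz bound in $q$ (which is uniform on bounded subsets of $p$) with continuity in $p$ for each fixed $q$; the latter is immediate because $p\mapsto \mathcal H(p,q)$ is a supremum of affine functions of $p$, hence convex on $\mathbb R$ and therefore continuous. The only subtle point I anticipate is the non-smooth gluing of the two branches of $g$ at $v=0$ and the possible vanishing of $\zeta(\tilde q\pm q)$ near the boundary of $\mathcal Q$; both are handled transparently because $L(0)=0$ makes the two branches agree at $v=0$, and the ``wrong sign'' branch is absorbed by the nonpositivity of $v(p+kq)-L(v)$ there together with $0\le\zeta\le 1$.
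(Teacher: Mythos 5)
Your proposal is correct, and its first step (identifying the maximizer) is the same argument as the paper's: split the objective by the sign of $v$, note that the ``wrong-sign'' branch is nonpositive when compared with the value at $v^{*}$, and reduce to $\bar{\mathcal H}$ and its derivative, with the degenerate case $\zeta(\tilde q\mp q)=0$ handled by the same nonpositivity observation. Where you diverge is in the Lipschitz estimate and continuity. The paper first records the closed-form identity $\mathcal H(p,q)=\zeta(\tilde q-q)\,\bar{\mathcal H}\bigl((p+kq)_+\bigr)+\zeta(\tilde q+q)\,\bar{\mathcal H}\bigl(-(p+kq)_-\bigr)$, reads off continuity directly from it, and obtains the bound $C_{\mathcal H}(1+|p|)|q-y|$ by combining the $L_\zeta$-Lipschitz property of $\zeta$ with the fact that $\bar{\mathcal H}$ is $v_\infty$-Lipschitz and $\bar{\mathcal H}(0)=0$. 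You instead apply $\bigl|\sup_v f-\sup_v h\bigr|\le\sup_v|f-h|$ to the pre-supremum objectives, bounding the $q$-dependence through $L_\zeta$, the term $k(q-y)$, and $M_L=\sup_{[-v_\infty,v_\infty]}L$, and you recover joint continuity from the $q$-Lipschitz bound (locally uniform in $p$) together with convexity of $p\mapsto\mathcal H(p,q)$ as a supremum of affine functions. Both routes are valid and give constants of the same form; the paper's buys a compact explicit formula for $\mathcal H$ (also convenient later for interpreting the optimal rate), while yours is slightly more elementary in that the estimate never needs the closed form or the Lipschitz constant of $\bar{\mathcal H}$, at the cost of the extra constant $M_L$ where the paper exploits $\bar{\mathcal H}(0)=0$.
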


\begin{proof}

Let us first recall from classical results of convex analysis that, given the hypotheses $(H_L)$, $\bar{\mathcal H}$ is a continuously differentiable function that satisfies $\bar{\mathcal H}(0) = 0$. Moreover, the supremum in the definition of $\bar{\mathcal H}(r)$ is reached uniquely at $\bar{\mathcal H}'(r)$. In particular, $\bar{\mathcal H}$ is a Lipschitz function with Lipschitz constant equal to~$v_\infty$.\\  

For all $p \in \mathbb R$ and $q \in \mathcal Q$, we have $\mathcal H (p,q) = \underset{|v| \le v_{\infty}}{\sup} \left(\left(-v_- \zeta(\tilde q + q) + v_+ \zeta(\tilde q - q) \right)(p+kq) - \tilde{\mathcal L}(v, q) \right)$.\\

Let us consider first that $p+kq \ge 0$. If $v < 0$, then $\left(-v_- \zeta(\tilde q + q) + v_+ \zeta(\tilde q - q) \right)(p+kq) - \tilde{\mathcal L}(v, q) \le 0$. Therefore, when $p+kq \ge 0$, we have $$\mathcal H (p,q) = \zeta(\tilde q - q) \underset{0 \le v \le v_{\infty}}{\sup} \left((p+kq) v - L(v) \right) = \zeta(\tilde q - q) \underset{|v| \le v_{\infty}}{\sup} \left((p+kq) v - L(v) \right) = \zeta(\tilde q - q) \bar{\mathcal H}(p+kq)$$ and the supremum is reached for $v^*(p,q) = \bar{\mathcal H}'(p+kq)$.\\

Let us now come to the case $p+kq \le 0$. If $v > 0$, then $\left(-v_- \zeta(\tilde q + q) + v_+ \zeta(\tilde q - q) \right)(p+kq) - \tilde{\mathcal L}(v, q) \le 0$. Therefore, when $p+kq \le 0$, we have $$\mathcal H (p,q) = \zeta(\tilde q + q) \underset{-v_{\infty} \le v \le 0}{\sup} \left((p+kq) v - L(v) \right) = \zeta(\tilde q + q) \underset{|v| \le v_{\infty}}{\sup} \left((p+kq) v - L(v) \right) = \zeta(\tilde q + q) \bar{\mathcal H}(p+kq)$$ and the supremum is reached for $v^*(p,q) = \bar{\mathcal H}'(p+kq)$.\\

Overall, because $\bar{\mathcal H}(0) = 0$, we can write $\mathcal H (p,q) = \zeta(\tilde q - q)\bar{\mathcal H} \left((p+kq)_+\right) + \zeta(\tilde q + q) \bar{\mathcal H} \left(-(p+kq)_- \right)$ and consequently $\mathcal H$ is continuous.\\

Now, let us take $p \in \mathbb R$ and $q,y \in \mathcal Q$. We have 

\begin{eqnarray*}
    |\mathcal H(p,q) - \mathcal H(p,y)| & \le &\left|\zeta(\tilde q - q)\bar{\mathcal H} \left((p+kq)_+\right) - \zeta(\tilde q - y)\bar{\mathcal H} \left((p+ky)_+\right) \right|\\ &&+ \left|\zeta(\tilde q + q)\bar{\mathcal H} \left(-(p+kq)_-\right) - \zeta(\tilde q + y)\bar{\mathcal H} \left(-(p+ky)_-\right) \right| \\
    & \le &  \left|(\zeta(\tilde q - q)-\zeta(\tilde q - y))\bar{\mathcal H} \left((p+kq)_+\right) - \zeta(\tilde q - y)\left(\bar{\mathcal H} \left((p+ky)_+\right) - \bar{\mathcal H} \left((p+kq)_+\right)\right) \right|\\
    && + \left|(\zeta(\tilde q + q)-\zeta(\tilde q + y))\bar{\mathcal H} \left(-(p+kq)_-\right) - \zeta(\tilde q + y)\left(\bar{\mathcal H} \left(-(p+ky)_-\right) - \bar{\mathcal H} \left(-(p+kq)_-\right)\right) \right|\\
    & \le & |\zeta(\tilde q - q)-\zeta(\tilde q - y)|\bar{\mathcal H} \left((p+kq)_+\right) + \left|\bar{\mathcal H} \left((p+ky)_+\right) - \bar{\mathcal H} \left((p+kq)_+\right)\right|\\
    && + |\zeta(\tilde q + q)-\zeta(\tilde q + y)|\bar{\mathcal H} \left(-(p+kq)_-\right) + \left|\bar{\mathcal H} \left(-(p+ky)_-\right) - \bar{\mathcal H} \left(-(p+kq)_-\right)\right|\\
    & \le & L_{\zeta} |q-y| \left( \bar{\mathcal H} \left((p+kq)_+\right) + \bar{\mathcal H} \left(-(p+kq)_-\right)\right)\\
    && + \left|\bar{\mathcal H} \left((p+ky)_+\right) - \bar{\mathcal H} \left((p+kq)_+\right)\right| + \left|\bar{\mathcal H} \left(-(p+ky)_-\right) - \bar{\mathcal H} \left(-(p+kq)_-\right)\right|\\
\end{eqnarray*}
As $\bar{\mathcal H}$ is Lipschitz with Lipschitz constant $v_\infty$ and $\bar{\mathcal H}(0)=0$, we get
\begin{align*}
    |\mathcal H(p,q) - \mathcal H(p,y)| &  \le L_{\zeta} |q-y| v_{\infty}(|p|+ k|q|) + 2 v_\infty k |q-y|\\
    & \le C_{\mathcal H} (1+|p|)|q-y|,
\end{align*}
for $C_{\mathcal H} = \max\left(L_{\zeta}v_\infty, L_{\zeta}v_\infty k \tilde{q} + 2 v_\infty k\right)$.\\
\end{proof}

\subsection{Existence result}

We denote by $C^1 := C^1 \left([0,T) \times \mathbb R \right)$ the class of functions $\varphi:[0,T) \times \mathbb R  \rightarrow \mathbb{R}$ that are continuously differentiable on $[0,T) \times \mathbb R  .$ 

\begin{defi}
(i) If $u$ is an upper semicontinuous (USC) function on $[0,T]\times \mathcal{Q}$, we say that $u$ is a viscosity subsolution to \eqref{eqn:HJB} on $[0,T) \times \mathcal{Q}$ if $\forall (\bar{t},\bar{q}) \in [0,T) \times \mathcal{Q}$, $\forall \varphi \in C^1$ such that $(u-\varphi)(\bar{t},\bar{q}) = \underset{(t,q) \in [0,T) \times \mathcal{Q}}{\max}(u-\varphi)(t,q)$, we have:
\begin{equation}
\begin{split}
-\frac{\partial \varphi}{\partial t}(\bar{t}, \bar{q}) &+ \psi(\bar{q}) - \int_{\mathbb{R}_{+}^{*}}  \mathds{1}_{\{ \bar{q} + z \in \mathcal Q\}} zH^{b} \left(\frac{\varphi(\bar{t}, \bar{q}) -  \varphi(\bar{t}, \bar{q}+z) }{z}\right) \mu^{b}(dz)  \\
&-  \int_{\mathbb{R}_{+}^{*}}  \mathds{1}_{\{ \bar{q} - z \in \mathcal Q\}} zH^{a} \left(\frac{\varphi(\bar{t}, \bar{q}) - \varphi(\bar{t}, \bar{q}-z)}{z} \right) \mu^{a}(dz) - \mathcal H \left(\partial_{q} \varphi(\bar{t},\bar{q}), \bar{q} \right) \ \leq 0. \nonumber
\end{split}
\end{equation}

(ii) If $v$ is a lower semicontinuous (LSC) function on $[0,T]\times \mathcal{Q}$, we say that $v$ is a viscosity supersolution to \eqref{eqn:HJB} on $[0,T) \times \mathcal{Q}$ if $\forall (\bar{t},\bar{q}) \in [0,T) \times \mathcal{Q}$, $\forall \varphi \in C^1$ such that $(v-\varphi)(\bar{t},\bar{q}) = \underset{(t,q) \in [0,T) \times \mathcal{Q}}{\min}(v-\varphi)(t,q)$, we have:
\begin{equation}
\begin{split}
-\frac{\partial \varphi}{\partial t}(\bar{t}, \bar{q}) &+ \psi(\bar{q}) - \int_{\mathbb{R}_{+}^{*}}  \mathds{1}_{\{ \bar{q} + z \in \mathcal Q\}} zH^{b} \left(\frac{\varphi(\bar{t}, \bar{q}) -  \varphi(\bar{t}, \bar{q}+z) }{z}\right) \mu^{b}(dz)  \\
&- \int_{\mathbb{R}_{+}^{*}} \mathds{1}_{\{ \bar{q} - z \in \mathcal Q\}}  zH^{a} \left(\frac{\varphi(\bar{t}, \bar{q}) - \varphi(\bar{t}, \bar{q}-z)}{z} \right) \mu^{a}(dz) -  \mathcal H \left(\partial_{q} \varphi(\bar{t},\bar{q}),\bar{q} \right) \ \geq 0. \nonumber
\end{split}
\end{equation}

(iii) If $\theta$ is locally bounded on $[0,T) \times \mathcal{Q}$, we say that $\theta$ is a viscosity solution to \eqref{eqn:HJB} on $[0,T) \times \mathcal{Q}$ if its upper semicontinuous envelope $\theta^{*}$ and its lower semicontinuous envelope $\theta_{*}$ are respectively subsolution on $[0,T) \times \mathcal Q$ and supersolution on $[0,T) \times  \mathcal Q$ to \eqref{eqn:HJB}.\newline
\end{defi}

The following result is proved in the appendix: \\
\begin{prop}
\label{eqvisco}
(i) Let $u$ be a $USC$ function on $[0,T]\times \mathcal Q$. $u$ is a viscosity subsolution to \eqref{eqn:HJB} on $[0,T) \times \mathcal{Q}$ if and only if $\forall (\bar{t},\bar{q})\in [0,T) \times \mathcal Q$, $\forall \varphi \in \mathcal{C}^1$ such that $\underset{(t,q) \in [0,T) \times \mathcal Q}{\max}(u-\varphi)(t,q) = (u-\varphi)(\bar{t},\bar{q})$, we have
\begin{equation}
\begin{split}
-\frac{\partial \varphi}{\partial t}(\bar{t}, \bar{q}) &+ \psi(\bar{q}) -  \int_{\mathbb{R}_{+}^{*}} \mathds{1}_{\{ \bar{q} + z \in \mathcal Q\}} zH^{b} \left(\frac{u(\bar{t}, \bar{q}) -  u(\bar{t}, \bar{q}+z) }{z}\right) \mu^{b}(dz)  \\
&-  \int_{\mathbb{R}_{+}^{*}} \mathds{1}_{\{ \bar{q} - z \in \mathcal Q\}} zH^{a} \left(\frac{u(\bar{t}, \bar{q}) - u(\bar{t}, \bar{q}-z)}{z} \right) \mu^{a}(dz) -  \mathcal H \left(\partial_{q} \varphi(\bar{t},\bar{q}) ,\bar{q} \right) \ \leq 0. \nonumber
\end{split}
\end{equation}

(ii) Let $v$ be a $LSC$ function on $[0,T]\times \mathcal Q$. $v$ is a viscosity supersolution to \eqref{eqn:HJB} on $[0,T) \times \mathcal{Q}$ if and only if $\forall (\bar{t},\bar{q})\in [0,T) \times \mathcal Q$, $\forall \varphi \in \mathcal{C}^1$ such that $\underset{(t,q) \in [0,T) \times \mathcal Q}{\min}(v-\varphi)(t,q) = (v-\varphi)(\bar{t},\bar{q})$, we have:

\begin{equation}
\begin{split}
-\frac{\partial \varphi}{\partial t}(\bar{t}, \bar{q}) &+ \psi(\bar{q}) -  \int_{\mathbb{R}_{+}^{*}}  \mathds{1}_{\{ \bar{q} + z \in \mathcal Q\}} zH^{b} \left(\frac{v(\bar{t}, \bar{q}) -  v(\bar{t}, \bar{q}+z) }{z}\right) \mu^{b}(dz)  \\
&-  \int_{\mathbb{R}_{+}^{*}} \mathds{1}_{\{ \bar{q} - z \in \mathcal Q\}} zH^{a} \left(\frac{v(\bar{t}, \bar{q}) - v(\bar{t}, \bar{q}-z)}{z} \right) \mu^{a}(dz) -  \mathcal H \left(\partial_{q} \varphi(\bar{t},\bar{q}),\bar{q} \right) \ \geq 0. \nonumber
\end{split}
\end{equation}
\newline
\end{prop}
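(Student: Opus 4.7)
My plan is to prove only part (i); part (ii) is entirely symmetric, obtained by swapping ``max'' for ``min'', $u \le \varphi$ for $v \ge \varphi$, and reversing every inequality between difference quotients. Throughout, the two structural ingredients are the decreasing monotonicity of $H^{b}, H^{a}$ (Lemma \ref{lemmH}) and the uniform bound on the nonlocal integrands (Lemma \ref{zHbounded}), which will act as the dominant in a dominated-convergence step.

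The easier implication is that the variant with $u$ inside the integrals implies the definition-form with $\varphi$ inside. Fix $\varphi \in C^1$ with $u - \varphi$ maximized at $(\bar t, \bar q)$, and normalize so that $\varphi(\bar t, \bar q) = u(\bar t, \bar q)$. Then $\varphi(\bar t, \bar q \pm z) \ge u(\bar t, \bar q \pm z)$, giving
$$
\frac{u(\bar t,\bar q) - u(\bar t,\bar q + z)}{z} \;\ge\; \frac{\varphi(\bar t,\bar q) - \varphi(\bar t,\bar q + z)}{z},
$$
and symmetrically on the ask side. Since $H^b$ is decreasing, $-z H^b$ evaluated at the $\varphi$-difference quotient is bounded above by $-z H^b$ evaluated at the $u$-difference quotient; integrating against $\mathds{1}_{\{\bar q + z \in \mathcal Q\}} \mu^b(dz)$ (and similarly for $\mu^a$), the LHS of the definition-form inequality is bounded above by the LHS of the variant. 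Consequently, if the latter is $\le 0$, so is the former.

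The delicate implication is the converse. The test-function condition is identical in both formulations, but the variant inequality is strictly stronger than the definition-form inequality at a given $\varphi$, since the comparison above runs in the wrong direction. The strategy is to build a sequence $\varphi_n \in C^1$ of test functions for $u$ at $(\bar t, \bar q)$ that preserve the first-order Taylor data of $\varphi$ at $(\bar t, \bar q)$ but whose values at the probe points $(\bar t, \bar q \pm z)$ converge to $u(\bar t, \bar q \pm z)$ for $\mu^{b/a}$-almost every $z$. After the standard reduction to a strict maximum, start from the sup-convolutions
$$
u^n(t,q) \;=\; \sup_{(s,y) \in [0,T] \times \mathcal Q} \Bigl\{ u(s,y) - n\bigl( (t-s)^2 + (q-y)^2 \bigr) \Bigr\},
$$
which are Lipschitz, dominate $u$, and decrease pointwise to $u$ as $n \to \infty$; a light mollification yields $C^1$ functions $\hat u_n \ge u$ still converging pointwise to $u$ on a set of full $\mu^{b/a}$-measure. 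Glue $\hat u_n$ to $\varphi$ through a smooth cutoff supported in a ball of radius $r_n \to 0$ around $(\bar t, \bar q)$, and add a small constant $\alpha_n \to 0$ to ensure $\varphi_n \ge u$ globally; this is feasible because, $(\bar t, \bar q)$ being a strict maximum, $\varphi - u$ is bounded away from zero outside any fixed neighborhood. For $n$ large, $\varphi_n \in C^1$, $u - \varphi_n$ is globally maximized at $(\bar t, \bar q)$, and the first-order data of $\varphi_n$ at $(\bar t, \bar q)$ coincide with those of $\varphi$. Writing the definition-form inequality for each $\varphi_n$ and passing to the limit via dominated convergence (continuity of $H^{b/a}$ from Lemma \ref{lemmH} for the pointwise convergence of the integrands, and Lemma \ref{zHbounded} for the dominant) yields precisely the variant inequality.

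The main obstacle is the construction step: $\varphi_n$ must simultaneously be $C^1$, dominate $u$ globally, preserve the first-order data of $\varphi$ at $(\bar t, \bar q)$, and converge to $u$ at the probe points. The USC regularity of $u$ prevents a direct replacement by $u$, hence the use of sup-convolution (continuous upper envelope), mollification ($C^1$ regularity), and the safety margin $\alpha_n$ (global domination of $u$). Once the approximating sequence is in hand, the remainder of the argument is routine, and the supersolution case runs identically after the symmetric sign changes mentioned at the outset.
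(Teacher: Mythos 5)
Your proposal is correct and follows essentially the same route as the paper's appendix proof: the easy implication uses the monotonicity of $H^{b}$ and $H^{a}$ exactly as in the paper, and the converse is obtained by gluing $C^1$ majorants of $u$ to $\varphi$ near $(\bar t,\bar q)$ and concluding by dominated convergence with Lemma \ref{zHbounded} as the dominant (the paper takes an abstract uniformly bounded sequence of smooth functions $u_n \ge u$ with a fixed gluing radius $\eta$ and passes to the limit first in $n$ and then in $\eta$, whereas you build the majorants explicitly by sup-convolution plus mollification and shrink the radius along the sequence). One cosmetic point: the constant $\alpha_n$ is needed to absorb the mollification error of the Lipschitz sup-convolution (so that $\hat u_n + \alpha_n \ge u$ in the region where $\varphi_n$ coincides with $\hat u_n + \alpha_n$), not because of the strict-maximum property, which plays no role in that part of the gluing.
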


We can now prove that $\theta$ is a viscosity solution to \eqref{eqn:HJB}.

\begin{prop}
$\theta$ is a viscosity subsolution to \eqref{eqn:HJB} on $[0,T) \times \mathcal Q$.\newline
\end{prop}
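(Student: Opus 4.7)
The plan is to derive the subsolution inequality at an arbitrary local maximum of $\theta^* - \varphi$ by combining the upper-bound side of the dynamic programming principle (DPP) with Itô's formula applied to $\varphi$ and the sup-characterisation of the Hamiltonians. Fix $(\bar t, \bar q) \in [0,T) \times \mathcal Q$ and $\varphi \in C^1$ such that $(\theta^* - \varphi)(\bar t, \bar q) = \max_{[0,T) \times \mathcal Q} (\theta^* - \varphi)$. Because $\theta$ is bounded and $\mathcal Q$ is compact, one may replace $\varphi$ by $\varphi(t,q) + \lambda[(t-\bar t)^2 + (q - \bar q)^2]$ for $\lambda$ large enough and adjust an additive constant so that the maximum is strict, global and achieved with $\theta^*(\bar t, \bar q) = \varphi(\bar t, \bar q)$, without altering $\partial_t \varphi(\bar t, \bar q)$ nor $\partial_q \varphi(\bar t, \bar q)$; in particular $\theta \le \theta^* \le \varphi$ on $[0,T] \times \mathcal Q$. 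Pick $(t_n, q_n) \to (\bar t, \bar q)$ with $\theta(t_n, q_n) \to \theta^*(\bar t, \bar q)$.

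For each $n$, the DPP (upper-bound direction for the maximisation problem) yields, for every $h_n \in (0, T - t_n)$, an admissible control $\bar\alpha^n = (\bar\delta^n, \bar v^n) \in \mathcal A$ such that
\begin{equation*}
\theta(t_n, q_n) \le \mathbb E\left[\int_{t_n}^{t_n+h_n} F(s, q_{s-}^n, \bar\alpha^n_s)\, ds + \theta(t_n+h_n, q_{(t_n+h_n)-}^n)\right] + h_n^2,
\end{equation*}
where $F$ denotes the running-reward integrand of $\mathcal J$ and $q^n = q^{t_n, q_n, \bar\alpha^n}$. Bounding $\theta(t_n+h_n, \cdot) \le \varphi(t_n+h_n, \cdot)$ and applying Itô's formula to $\varphi(s, q_s^n)$ on $[t_n, t_n+h_n]$ — the local martingale parts stemming from $\tilde J^b, \tilde J^a$ being true martingales of zero expectation because $\varphi$ is bounded and the intensities are integrable — one rearranges to
\begin{equation*}
\theta(t_n, q_n) - \varphi(t_n, q_n) - h_n^2 \le \mathbb E\left[\int_{t_n}^{t_n+h_n} \left(F + \partial_t \varphi + \mathcal L^{\bar\alpha^n}\varphi\right)(s, q_{s-}^n)\, ds\right],
\end{equation*}
with $\mathcal L^{\bar\alpha^n}$ the controlled generator of $q^n$.

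Combining the point-process parts of $F$ with those of $\mathcal L^{\bar\alpha^n}\varphi$ gives contributions of the form $z\Lambda^b(\bar\delta^{n,b})(\bar\delta^{n,b} - p^b_\varphi)\mathds{1}_{\{q+z \in \mathcal Q\}}$ with $p^b_\varphi = \tfrac{\varphi(s,q) - \varphi(s,q+z)}{z}$, bounded above by $z H^b(p^b_\varphi)\mathds{1}_{\{q+z \in \mathcal Q\}}$ by definition of $H^b$, and similarly on the ask side; the remaining $v$-dependent piece $(-\bar v^n_-\zeta(\tilde q + q) + \bar v^n_+\zeta(\tilde q - q))(\partial_q\varphi + kq) - \tilde{\mathcal L}(\bar v^n, q)$ is bounded above by $\mathcal H(\partial_q\varphi, q)$. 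Therefore the integrand is majorised by $-G_\varphi(s, q_{s-}^n)$, where $G_\varphi(t,q)$ denotes the LHS of \eqref{eqn:HJB} with $\varphi$ in place of $\theta$; Lemmas \ref{zHbounded} and \ref{lemmHronde}, together with the $C^1$-regularity of $\varphi$, guarantee continuity of $G_\varphi$ on $[0,T] \times \mathcal Q$.

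Choosing $h_n \to 0$ slowly enough that $(\theta(t_n, q_n) - \varphi(t_n, q_n))/h_n \to 0$, the left-hand side of the above inequality tends to $0$. On the right-hand side, the uniform drift bound $|v| \le v_\infty$ and the uniform intensity bound (by $\Lambda^b(-\delta_\infty)\Delta^b + \Lambda^a(-\delta_\infty)\Delta^a$) yield $\mathbb E\big[\sup_{s \in [t_n, t_n+h_n]} |q_{s-}^n - q_n|\big] = O(h_n)$, so that $\tfrac{1}{h_n}\mathbb E\big[\int_{t_n}^{t_n+h_n} -G_\varphi(s, q_{s-}^n)\, ds\big] \to -G_\varphi(\bar t, \bar q)$ by continuity. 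We conclude $G_\varphi(\bar t, \bar q) \le 0$, which is precisely the subsolution inequality. The main obstacle is the invocation of the $\varepsilon$-optimal DPP upper bound for this mixed jump–drift state-constrained problem: once this is in hand, the fact that the majorant $-G_\varphi$ is uniform in $\bar\alpha^n$ — a direct consequence of the suprema defining $H^b$, $H^a$, $\mathcal H$ — makes the passage to the limit painless, since we never need to extract a limit of the possibly non-Markovian near-optimal controls $\bar\alpha^n$.
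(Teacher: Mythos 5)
Your argument is correct in substance but follows a genuinely different route from the paper's. The paper proves the subsolution property by contradiction: assuming the operator is strictly positive at $(\bar t,\bar q)$, it propagates the inequality to a small ball $B$ (using the $C^1$ regularity of $\varphi$ and the absolute continuity of $\mu^b,\mu^a$ to handle the indicators), exploits the strictness of the maximum to get $\theta\le\varphi-\eta$ on the parabolic boundary of $B$ and on the set reachable by one jump, applies Itô's formula with an \emph{arbitrary} admissible control up to the exit time $\pi_m$ of $B$, and then contradicts the dynamic programming principle at that stopping time. You instead run the classical direct argument: the $\varepsilon$-optimal, deterministic-horizon form of the DPP on $[t_n,t_n+h_n]$, Itô's formula for $\varphi$, the uniform-in-control majorization of the integrand by $-G_\varphi$ coming from the sup-definitions of $H^b$, $H^a$, $\mathcal H$, the short-time estimate $\mathbb E\big[\sup_s|q^n_{s-}-q_n|\big]=O(h_n)$, and the choice of $h_n$ dominating $|\theta(t_n,q_n)-\varphi(t_n,q_n)|$. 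Your route avoids the localization, the strict-maximum bookkeeping and the stopping-time DPP, at the price of invoking the DPP with near-optimal controls and a quantitative displacement estimate; both proofs take the DPP itself for granted and rest on the same ingredients (Lemmas \ref{zHbounded} and \ref{lemmHronde}, zero expectation of the compensated jump integrals, admissibility bounds $\delta^b\wedge\delta^a\ge-\delta_\infty$ and $|v|\le v_\infty$ so that the Hamiltonian suprema dominate).

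Two caveats. First, replacing $\varphi$ by $\varphi+\lambda\left[(t-\bar t)^2+(q-\bar q)^2\right]$ is not innocuous here: unlike in the local case, the test function enters \eqref{eqn:HJB} nonlocally through $\varphi(\bar t,\bar q\pm z)$, and since $H^{b},H^{a}$ are decreasing, the operator evaluated on the modified function is smaller than on $\varphi$, so the inequality you obtain is, as written, the one for the modified test function only. The fix is immediate: either drop the modification altogether — the maximum in the definition is already global over $[0,T)\times\mathcal Q$ and your argument never uses strictness, only $\theta\le\varphi$ on $[0,T)\times\mathcal Q$ — or send $\lambda\downarrow 0$ at the very end (dominated convergence, using $z\le 2\tilde q$ on the relevant set). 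Second, the continuity of $G_\varphi$ at $(\bar t,\bar q)$, which your passage to the limit requires, relies on the absolute continuity of $\mu^b,\mu^a$ to neutralize the discontinuous indicators $\mathds 1_{\{q+z\in\mathcal Q\}}$, $\mathds 1_{\{q-z\in\mathcal Q\}}$ (the same point the paper makes when extending its inequality to the ball $B$); Lemmas \ref{zHbounded} and \ref{lemmHronde} alone do not give it, so this should be stated. Finally, note that $\theta\le\varphi$ is only guaranteed on $[0,T)\times\mathcal Q$, not up to $t=T$, which is harmless in your argument since $t_n+h_n<T$.
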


\begin{proof}
$\theta$ is bounded on $[0,T]\times \mathcal Q $ so we can define $\theta^{*}$ its upper semicontinuous envelope.\\

Let $(\bar{t},\bar{q}) \in [0,T) \times \mathcal Q$ and $\varphi \in C^1$ such that
\begin{equation}
0 = (\theta^{*} - \varphi)(\bar{t},\bar{q}) = \underset{(t,q)\in[0,T) \times \mathcal Q}{\max}\ (\theta^{*} - \varphi)(t,q). \nonumber
\end{equation}

We can classically assume this maximum to be strict. By definition of $\theta^{*}(\bar{t},\bar{q})$, their exists $(t_{m},q_{m})_{m}$ a sequence of $[0,T) \times \mathcal Q$ such that
\begin{equation}
\begin{split}
(t_{m},q_{m}) \xrightarrow[{m \rightarrow +\infty}]{} (\bar{t},\bar{q}),\\
\theta(t_{m},q_{m}) \xrightarrow[{m \rightarrow +\infty}]{} \theta^{*}(\bar{t},\bar{q}). \nonumber
\end{split}
\end{equation}

We prove the result by contradiction. Assume there exists $\eta > 0$ such that
\begin{equation}
\begin{split}
-\frac{\partial \varphi}{\partial t}(\bar{t}, \bar{q}) &+ \psi(\bar{q}) - \int_{\mathbb{R}_{+}^{*}} \mathds{1}_{\{ \bar{q} + z \in \mathcal Q\}} zH^{b} \left(\frac{\varphi(\bar{t}, \bar{q}) -  \varphi(\bar{t}, \bar{q}+z) }{z}\right) \mu^{b}(dz)  \\
&- \int_{\mathbb{R}_{+}^{*}}  \mathds{1}_{\{ \bar{q} - z \in \mathcal Q\}} zH^{a} \left( \frac{\varphi(\bar{t}, \bar{q}) - \varphi(\bar{t}, \bar{q}-z)}{z} \right) \mu^{a}(dz) -   \mathcal H \left(\partial_{q} \varphi(\bar{t},\bar q),\bar{q} \right) \ > 2\eta. \nonumber
\end{split}
\end{equation}

Then, as $\varphi$ is continuously differentiable and $\mu^{b}$ and $\mu^{a}$ absolutely continuous with respect to the Lebesgue measure, we must have
\begin{equation}
\label{eq2}
\begin{split}
-\frac{\partial \varphi}{\partial t}(t,q) &+ \psi(q) -  \int_{\mathbb{R}_{+}^{*}}  \mathds{1}_{\{ q + z \in \mathcal Q\}} zH^{b} \left(\frac{\varphi(t,q) -  \varphi(t,q+z) }{z}\right) \mu^{b}(dz)  \\
&- \int_{\mathbb{R}_{+}^{*}}  \mathds{1}_{\{ q - z \in \mathcal Q\}} zH^{a} \left(\frac{\varphi(t,q) - \varphi(t,q-z)}{z} \right) \mu^{a}(dz)  -   \mathcal H \left(\partial_{q} \varphi(t,q),q \right) \ \geq 0
\end{split}
\end{equation}
on $B:=\big((\bar{t}-r,\bar{t}+r)\cap [0,T) \big) \times \left( (\bar{q}-r,\bar q + r) \cap \mathcal Q \right)$ for a sufficiently small $r \in \big(0,T-\bar{t} \big)$. Without loss of generality, we can assume that $B$ contains the sequence $(t_{m},q_{m})_{m}$.\newline

Then, by potentially reducing the value of $\eta$, we have
\begin{equation}
\theta \leq \theta^{*} \leq \varphi - \eta \nonumber
\end{equation}
on the parabolic boundary $\partial_{p}B$ of $B$, i.e. 
$$\partial_{p}B = \Big( \big((\bar{t}-r,\bar{t}+r)\cap [0,T)\big) \times \left( \{\bar{q}-r,\bar q + r\} \cap \mathcal Q \right) \Big) \cup \Big( \{\bar{t}+r\}~\times~\overline{(\bar{q}-r,\bar q + r) \cap \mathcal Q} \Big).$$ Without loss of generality we can assume that the above inequality holds on
\begin{equation}
\tilde{B}:= \{ (t,q+z)\ |\ (t,q,z) \in B \times \mathbb R,\  q+z \in (\bar{q}-r,\bar q + r)^{c} \cap \mathcal Q \}, \nonumber
\end{equation}
which is also bounded.\newline

We introduce an arbitrary control $ \delta=(\delta^{b},\delta^{a})\in \mathcal{A}_{\mathcal M}$. We also introduce an arbitrary control $ v \in \mathcal{A}_{\mathcal T}$. We denote by $\pi_{m}$ the first exit time of $(t,q^{m}_{t})_{t\geq t_{m}}$ from $B$ (where $q^{m}_{t}:=q^{t_{m},q_{m}, {\delta}, {v}}_{t}$): $
\pi_{m} = \inf \{t \geq t_{m} | \big(t, q^{m}_{t} \big) \not\in B \}.$\\

By Itô's formula,
\begin{equation}
\begin{split}
\varphi(\pi_{m}, q^{m}_{\pi_{m}}) & = \varphi(t_{m},q_{m}) + \int\limits_{t_{m}}^{\pi_{m}} \frac{\partial \varphi}{\partial t}(s,q^{m}_{s-}) ds\\
& + \int\limits_{t_{m}}^{\pi_{m}} \int_{\mathbb{R}_{+}^{*}}  \mathds{1}_{\{ q^{m}_{s-}+z \in \mathcal{Q} \}}  \Lambda^{b}(\delta^{b}(s,z)) \left( \varphi(s,q^{m}_{s-}+z) - \varphi(s,q^{m}_{s-}) \right) \mu^{b}(dz) ds\nonumber\\
\end{split}
\end{equation}
\begin{equation}
\begin{split}
& + \int\limits_{t_{m}}^{\pi_{m}}  \int_{\mathbb{R}_{+}^{*}} \mathds{1}_{\{ q^{m}_{s-}-z\in \mathcal{Q} \}} \Lambda^{a}(\delta^{a}(s,z)) \left( \varphi(s,q^{m}_{s-}-z) - \varphi(s,q^{m}_{s-}) \right) \mu^{a}(dz) ds\\
& + \int\limits_{t_{m}}^{\pi_{m}}  \int_{\mathbb{R}_{+}^{*}} \left( \varphi(s,q^{m}_{s-}+z) - \varphi(s,q^{m}_{s-}) \right) \tilde{J}^{b}(ds,dz)\\
& + \int\limits_{t_{m}}^{\pi_{m}}  \int_{\mathbb{R}_{+}^{*}} \left( \varphi(s,q^{m}_{s-}-z) - \varphi(s,q^{m}_{s-}) \right) \tilde{J}^{a}(ds,dz)\\
& + \int\limits_{t_{m}}^{\pi_{m}}  \partial_{q} \varphi(s,q^{m}_{s-}) \left(-(v_s)_- \zeta(\tilde q + q^{m}_{s-}) + (v_s)_+ \zeta(\tilde q - q^{m}_{s-}) \right) ds,\nonumber
\end{split}
\end{equation}
which we can write
\begin{equation}
\begin{split}
\varphi( & \pi_{m}, q^{m}_{\pi_{m}}) = \varphi(t_{m},q_{m}) + \int_{t_{m}}^{\pi_{m}} \bigg\lbrace \frac{\partial \varphi}{\partial t}(s,q^{m}_{s-})\\
& +  \int_{\mathbb{R}_{+}^{*}}  \mathds{1}_{\{ q^{m}_{s-}+z \in \mathcal{Q} \}} z\Lambda^{b}(\delta^{b}(s,z)) \left(\delta^{b}(s,z) - \frac{ \varphi(s,q^{m}_{s-}) - \varphi(s,q^{m}_{s-}+z)}{z} \right) \mu^{b}(dz)\\
& + \int_{\mathbb{R}_{+}^{*}}  \mathds{1}_{\{ q^{m}_{s-}-z \in \mathcal{Q} \}} z \Lambda^{a}(\delta^{a}(s,z)) \left(\delta^{a}(s,z) - \frac{ \varphi(s,q^{m}_{s-}) - \varphi(s,q^{m}_{s-}-z)}{z} \right) \mu^{a}(dz)\\
& +   \left(-(v_s)_- \zeta(\tilde q + q^{m}_{s-}) + (v_s)_+ \zeta(\tilde q - q^{m}_{s-}) \right) \left( \partial_{q} \varphi(s,q^{m}_{s-}) + kq^{m}_{s-} \right) - \tilde{\mathcal L}(v_s, q^{m}_{s-})   - \psi(q^{m}_{s-}) \bigg\rbrace ds\\
& + \int\limits_{t_{m}}^{\pi_{m}} \bigg\lbrace \psi(q^{m}_{s-}) - k\left(-(v_s)_- \zeta(\tilde q + q^{m}_{s-}) + (v_s)_+ \zeta(\tilde q - q^{m}_{s-}) \right) q^{m}_{s-} -  \int_{\mathbb{R}_{+}^{*}}  \mathds{1}_{\{ q^{m}_{s-}+z \in \mathcal{Q} \}}  z\Lambda^{b}(\delta^{b}(s,z)) \delta^{b}(s,z) \mu^{b}(dz)\\
& -  \int_{\mathbb{R}_{+}^{*}}  \mathds{1}_{\{ q^{m}_{s-}-z \in \mathcal{Q} \}} z \Lambda^{a}(\delta^{a}(s,z)) \delta^{a}(s,z) \mu^{a}(dz) + \tilde{\mathcal L}(v_s, q^{m}_{s-}) \bigg\rbrace ds\\
& + \int\limits_{t_{m}}^{\pi_{m}}  \int_{\mathbb{R}_{+}^{*}} \left( \varphi(s,q^{m}_{s-}+z) - \varphi(s,q^{m}_{s-}) \right) \tilde{J}^{b}(ds,dz)\\
& + \int\limits_{t_{m}}^{\pi_{m}}  \int_{\mathbb{R}_{+}^{*}} \left( \varphi(s,q^{m}_{s-}-z) - \varphi(s,q^{m}_{s-}) \right) \tilde{J}^{a}(ds,dz).\nonumber
\end{split}
\end{equation}

From \eqref{eq2}, and by definition of $H^{b}$, $H^{a}$, and $\mathcal H$, we then get
\begin{equation}
\begin{split}
\varphi(\pi_{m}, q^{m}_{\pi_{m}}) \leq & \  \varphi(t_{m},q_{m}) + \int\limits_{t_{m}}^{\pi_{m}} \bigg\lbrace \psi(q^{m}_{s-}) - k\left(-(v_s)_- \zeta(\tilde q + q^{m}_{s-}) + (v_s)_+ \zeta(\tilde q - q^{m}_{s-}) \right)q^{m}_{s-}\\
& -  \int_{\mathbb{R}_{+}^{*}}  \mathds{1}_{\{ q^{m}_{s-}+z\in \mathcal{Q} \}} z\Lambda^{b}(\delta^{b}(s,z)) \delta^{b}(s,z) \mu^{b}(dz)\\
& - \int_{\mathbb{R}_{+}^{*}}  \mathds{1}_{\{ q^{m}_{s-}-z \in \mathcal{Q} \}} z \Lambda^{a}(\delta^{a}(s,z)) \delta^{a}(s,z) \mu^{a}(dz) + \tilde{\mathcal L}(v_s, q^{m}_{s-}) \bigg\rbrace ds\\
& + \int\limits_{t_{m}}^{\pi_{m}} \int_{\mathbb{R}_{+}^{*}} \left( \varphi(s,q^{m}_{s-}+z) - \varphi(s,q^{m}_{s-}) \right) \tilde{J}^{b}(ds,dz)\\
& + \int\limits_{t_{m}}^{\pi_{m}}  \int_{\mathbb{R}_{+}^{*}} \left( \varphi(s,q^{m}_{s-}-z) - \varphi(s,q^{m}_{s-}) \right) \tilde{J}^{a}(ds,dz).\nonumber
\end{split}
\end{equation}

The last two terms have expectations equal to zero and we obtain
\begin{equation}
\begin{split}
\varphi(t_{m},q_{m}) \geq  \mathbb{E} \Bigg[ &\ \varphi(\pi_{m}, q^{m}_{\pi_{m}}) + \int\limits_{t_{m}}^{\pi_{m}} \bigg\lbrace  \int_{\mathbb{R}_{+}^{*}}  \mathds{1}_{\{ q^{m}_{s-}+z \in \mathcal{Q} \}} z\Lambda^{b}(\delta^{b}(s,z)) \delta^{b}(s,z) \mu^{b}(dz)\\
& + \int_{\mathbb{R}_{+}^{*}} \mathds{1}_{\{ q^{m}_{s-}-z \in \mathcal{Q} \}}  z \Lambda^{a}(\delta^{a}(s,z)) \delta^{a}(s,z) \mu^{a}(dz)\\
& + k q^{m}_{s-} \left(-(v_s)_- \zeta(\tilde q + q^{m}_{s-}) + (v_s)_+ \zeta(\tilde q - q^{m}_{s-}) \right) - \tilde{\mathcal L}(v_s, q^{m}_{s-}) - \psi(q^{m}_{s-}) \bigg\rbrace ds \Bigg]. \nonumber
\end{split}
\end{equation}

Therefore
\begin{equation}
\begin{split}
\varphi(t_{m},q_{m}) \geq    \eta +  & \mathbb{E} \Bigg[ \theta(\pi_{m}, q^{m}_{\pi_{m}}) + \int\limits_{t_{m}}^{\pi_{m}} \bigg\lbrace  \int_{\mathbb{R}_{+}^{*}}  \mathds{1}_{\{ q^{m}_{s-}+z \in \mathcal{Q} \}} z\Lambda^{b}(\delta^{b}(s,z)) \delta^{b}(s,z) \mu^{b}(dz)\\
& +\int_{\mathbb{R}_{+}^{*}}  \mathds{1}_{\{ q^{m}_{s-}-z \in \mathcal{Q} \}} z \Lambda^{a}(\delta^{a}(s,z)) \delta^{a}(s,z) \mu^{a}(dz)\\
& +  k q^{m}_{s-} \left(-(v_s)_- \zeta(\tilde q + q^{m}_{s-}) + (v_s)_+ \zeta(\tilde q - q^{m}_{s-}) \right) - \tilde{\mathcal L}(v_s, q^{m}_{s-}) - \psi(q^{m}_{s-}) \bigg\rbrace ds \Bigg]. \nonumber
\end{split}
\end{equation}

As $\varphi(t_{m},q_{m}) \xrightarrow[m\rightarrow +\infty]{} \varphi(\bar{t},\bar{q}) = \theta^{*}(\bar{t},\bar{q})$ and $\theta(t_{m},q_{m})\xrightarrow[m\rightarrow +\infty]{}\theta^{*}(\bar{t},\bar{q})$, we have for $m$ large enough the inequality $\theta(t_{m},q_{m}) + \frac{\eta}{2} \geq \varphi(t_{m},q_{m})$, from  which we deduce
\begin{equation}
\begin{split}
\theta(t_{m},q_{m}) \geq    \frac{\eta}{2} + & \mathbb{E} \Bigg[ \theta(\pi_{m}, q^{m}_{\pi_{m}}) + \int\limits_{t_{m}}^{\pi_{m}} \bigg\lbrace  \int_{\mathbb{R}_{+}^{*}}  \mathds{1}_{\{ q^{m}_{s-}+z \in \mathcal{Q} \}} z\Lambda^{b}(\delta^{b}(s,z)) \delta^{b}(s,z) \mu^{b}(dz)\\
& + \int_{\mathbb{R}_{+}^{*}}  \mathds{1}_{\{ q^{m}_{s-}-z \in \mathcal{Q} \}} z \Lambda^{a}(\delta^{a}(s,z)) \delta^{a}(s,z) \mu^{a}(dz)\\
& + k q^{m}_{s-} \left(-(v_s)_- \zeta(\tilde q + q^{m}_{s-}) + (v_s)_+ \zeta(\tilde q - q^{m}_{s-}) \right) - \tilde{\mathcal L}(v_s, q^{m}_{s-}) - \psi(q^{m}_{s-}) \bigg\rbrace ds \Bigg]. \nonumber
\end{split}
\end{equation}

By taking the supremum over all the controls in $\mathcal{A}$ on the right-hand side, we contradict the dynamic programming principle. \newline

Necessarily, we deduce:
\begin{equation}
\begin{split}
-\frac{\partial \varphi}{\partial t}(\bar{t}, \bar{q}) &+ \psi(\bar{q}) - \int_{\mathbb{R}_{+}^{*}}  \mathds{1}_{\{ \bar{q}+z\in \mathcal{Q} \}} zH^{b} \left(\frac{\varphi(\bar{t}, \bar{q}) -  \varphi(\bar{t}, \bar{q}+z) }{z}\right) \mu^{b}(dz)  \\
&-  \int_{\mathbb{R}_{+}^{*}}  \mathds{1}_{\{ \bar{q}-z \in \mathcal{Q} \}}zH^{a} \left( \frac{\varphi(\bar{t}, \bar{q}) - \varphi(\bar{t}, \bar{q}-z)}{z} \right) \mu^{a}(dz) -  \mathcal H \left(\partial_{q} \varphi(\bar{t},\bar q),\bar{q}\right) \ \leq 0, \nonumber
\end{split}
\end{equation}
and $\theta$ is a viscosity subsolution to $\eqref{eqn:HJB}$ on $[0,T) \times \mathcal Q$.\\
\end{proof}

\begin{prop}
$\theta$ is a viscosity supersolution to \eqref{eqn:HJB} on $[0,T) \times \mathcal Q $.\newline
\end{prop}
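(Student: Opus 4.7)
The approach mirrors the subsolution proof with the inequalities reversed. Let $\theta_*$ denote the lower semicontinuous envelope of $\theta$ and take $(\bar t, \bar q) \in [0,T) \times \mathcal Q$ together with $\varphi \in C^1$ such that $(\theta_* - \varphi)$ attains a strict minimum equal to $0$ at $(\bar t, \bar q)$. Pick a sequence $(t_m, q_m) \to (\bar t, \bar q)$ with $\theta(t_m, q_m) \to \theta_*(\bar t, \bar q) = \varphi(\bar t, \bar q)$ and assume for contradiction the existence of $\eta > 0$ such that the supersolution expression is $< -2\eta$ at $(\bar t, \bar q)$. By the continuity of the Hamiltonians (Lemmas \ref{lemmH} and \ref{lemmHronde}), the boundedness granted by Lemma \ref{zHbounded}, and dominated convergence, this inequality is inherited, with margin $\eta$, on a small parabolic cylinder $B$ around $(\bar t, \bar q)$; strict minimality and compactness moreover yield a uniform gap $\theta_* - \varphi \geq \kappa > 0$ on the parabolic boundary $\partial_p B$ as well as on the extended set $\tilde B$ accounting for first post-jump positions.

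To exploit this local inequality we need controls saturating the three Hamiltonians. Using the optimizers provided by Lemmas \ref{lemmH} and \ref{lemmHronde}, define the feedback controls
$$\hat\delta^b(s,z) = \delta^{b*}\!\left(\tfrac{\varphi(s, q^m_{s-}) - \varphi(s, q^m_{s-}+z)}{z}\right), \quad \hat\delta^a(s,z) = \delta^{a*}\!\left(\tfrac{\varphi(s, q^m_{s-}) - \varphi(s, q^m_{s-}-z)}{z}\right),$$
$$\hat v(s) = v^*\big(\partial_q\varphi(s, q^m_{s-}),\, q^m_{s-}\big) = \bar{\mathcal H}'\big(\partial_q\varphi(s, q^m_{s-}) + k q^m_{s-}\big),$$
which are predictable by continuity of $\delta^{b*}, \delta^{a*}, v^*$ and the $C^1$ regularity of $\varphi$. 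Admissibility is immediate: $\hat\delta^b, \hat\delta^a \geq -\delta_\infty$ by definition of the suprema defining $H^b, H^a$, and $|\hat v| \leq v_\infty$ by construction of $\bar{\mathcal H}$.

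Apply Itô's formula to $\varphi(s, q^m_s)$ between $t_m$ and the first exit time $\pi_m$ of $(s, q^m_s)$ from $B$, following the same decomposition as in the subsolution proof. With the above feedback, the three optimisation inequalities all become equalities, so the first integrand equals $\partial_t\varphi + \int \mathds{1}\, zH^b \mu^b + \int \mathds{1}\, zH^a \mu^a + \mathcal H(\partial_q \varphi, \cdot) - \psi$, which is bounded below by $\eta$ on $B$ by the localised contradiction hypothesis, while the second integrand is exactly the opposite of the instantaneous running reward. Taking expectations, using $\varphi \leq \theta_* - \kappa \leq \theta - \kappa$ on $\partial_p B \cup \tilde B$, and using $\varphi(t_m, q_m) \geq \theta(t_m, q_m) - \kappa/2$ for $m$ large (both sides converge to $\theta_*(\bar t, \bar q)$), we obtain
$$\mathbb{E}\!\left[\int_{t_m}^{\pi_m}\!\!\!(\text{running reward})\, ds + \theta(\pi_m, q^m_{\pi_m})\right] \geq \theta(t_m, q_m) + \kappa/2 + \eta\, \mathbb{E}[\pi_m - t_m] > \theta(t_m, q_m),$$
which contradicts the dynamic programming principle for $\theta$ at $(t_m, q_m)$.

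The main obstacle compared to the subsolution proof is that we cannot reason with arbitrary admissible controls: a contradiction requires explicit feedback controls saturating the three Hamiltonians appearing in \eqref{eqn:HJB}. The continuity of the optimizers $\delta^{b*}, \delta^{a*}$ (Lemma \ref{lemmH}) and the explicit form $v^* = \bar{\mathcal H}'(p+kq)$ (Lemma \ref{lemmHronde}), combined with the $C^1$ regularity of $\varphi$, are precisely what makes these feedback controls measurable and admissible. A secondary technical point is the possible discrepancy between $\theta$ and $\theta_*$ at $(t_m, q_m)$, which is absorbed thanks to the uniform boundary gap $\kappa$ coming from the strict minimum argument.
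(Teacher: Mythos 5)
Your proof is correct and follows essentially the same route as the paper: argue by contradiction, localise the strict inequality on a small cylinder $B$, choose the feedback controls $\delta^{b*},\delta^{a*},v^*$ from Lemmas \ref{lemmH} and \ref{lemmHronde} so the Hamiltonians are saturated, apply Itô's formula up to the exit time $\pi_m$, kill the compensated jump martingales in expectation, and contradict the dynamic programming principle. The only cosmetic differences are bookkeeping ones (you keep separate margins $\eta$ and $\kappa$ for the interior inequality and the boundary gap, and retain the harmless extra term $\eta\,\mathbb{E}[\pi_m-t_m]$, where the paper reuses a single $\eta$), which do not affect the argument.
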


\begin{proof}
$\theta$ is bounded on $[0,T]\times \mathcal Q$, so we can define $\theta_{*}$ its lower semicontinuous envelope.\\

Let $(\bar{t},\bar{q}) \in [0,T) \times \mathcal Q$ and $\varphi \in C^1$ such that
\begin{equation}
0 = (\theta_{*} - \varphi)(\bar{t},\bar{q}) = \underset{(t,q)\in[0,T) \times \mathcal Q}{\min}\ (\theta_{*} - \varphi)(t,q). \nonumber
\end{equation}

We can assume this minimum to be strict. By definition of $\theta_{*}(\bar{t},\bar{q})$, there exists $(t_{m},q_{m})_{m}$ a sequence of $[0,T) \times \mathcal Q$ such that
\begin{equation}
\begin{split}
(t_{m},q_{m}) \xrightarrow[{m \rightarrow +\infty}]{} (\bar{t},\bar{q}),\\
\theta(t_{m},q_{m}) \xrightarrow[{m \rightarrow +\infty}]{} \theta_{*}(\bar{t},\bar{q}). \nonumber
\end{split}
\end{equation}

Let us prove the proposition by contradiction. Assume there is $\eta > 0$ such that
\begin{equation}
\begin{split}
-\frac{\partial \varphi}{\partial t}(\bar{t}, \bar{q}) &+ \psi(\bar{q}) -  \int_{\mathbb{R}_{+}^{*}} \mathds{1}_{\{ \bar{q}+z \in \mathcal{Q} \}} zH^{b} \left(\frac{\varphi(\bar{t}, \bar{q}) -  \varphi(\bar{t}, \bar{q}+z) }{z}\right) \mu^{b}(dz)  \\
&-  \int_{\mathbb{R}_{+}^{*}} \mathds{1}_{\{ \bar{q}-z \in \mathcal{Q} \}} zH^{a} \left( \frac{\varphi(\bar{t}, \bar{q}) - \varphi(\bar{t}, \bar{q}-z)}{z} \right) \mu^{a}(dz) - \mathcal H \left(\partial_{q} \varphi(\bar{t},\bar q),\bar{q} \right) \ < - 2\eta. \nonumber
\end{split}
\end{equation}

Then, as $\varphi$ is continuously differentiable and $\mu^{b}$ and $\mu^{a}$ absolutely continuous with respect to the Lebesgue measure, we must have
\begin{equation}
\label{eq3}
\begin{split}
-\frac{\partial \varphi}{\partial t}(t,q) &+ \psi(q) -  \int_{\mathbb{R}_{+}^{*}} \mathds{1}_{\{ q+z \in \mathcal{Q} \}} zH^{b} \left(\frac{\varphi(t,q) -  \varphi(t,q+z) }{z}\right) \mu^{b}(dz)  \\
&- \int_{\mathbb{R}_{+}^{*}} \mathds{1}_{\{ q-z \in \mathcal{Q} \}} zH^{a} \left(\frac{\varphi(t,q) - \varphi(t,q-z)}{z} \right) \mu^{a}(dz)  - \mathcal H \left(\partial_{q} \varphi(t,q),q \right) \ \leq 0
\end{split}
\end{equation}
on $B:=\big((\bar{t}-r,\bar{t}+r)\cap [0,T) \big) \times \left( (\bar{q}-r,\bar q + r) \cap \mathcal Q \right)  $ for a sufficiently small $r \in \big(0,T-\bar{t} \big)$. Without loss of generality, we can assume that $B$ contains the sequence $(t_{m},q_{m})_{m}$.\newline

Then, by potentially reducing $\eta$, we have
\begin{equation}
\theta \geq \theta_{*} \geq \varphi + \eta \nonumber
\end{equation}
on $\partial_{p}B$. We can also without loss of generality assume that this inequality is true on
\begin{equation}
\tilde{B}:= \{ (t,q+z)\ |\ (t,q,z) \in B \times \mathbb R,\  q+z \in (\bar{q}-r,\bar q + r)^{c} \cap \mathcal Q \}, \nonumber
\end{equation}
which is also bounded.\newline

We introduce the control $\delta=(\delta^{b},\delta^{a})\in \mathcal{A}_{\mathcal M}$ such that $\forall t\geq t_{m}$, $\forall z \in \mathbb{R}_{+}^{*}$,
$$\delta^{b}(t,z) = \delta^{b*}\left(\frac{\varphi(t,q^{m}_{t-}) -  \varphi(t,q^{m}_{t-}+z) }{z}\right) \quad \textrm{and} \quad \delta^{a}(t,z) = \delta^{a*}\left( \frac{\varphi(t,q^{m}_{t-}) - \varphi(t,q^{m}_{t-}-z)}{z} \right),$$ where $\delta^{b*}$ and $\delta^{a*}$ are defined in Lemma~\ref{lemmH}. Similarly, we introduce the control $ v \in \mathcal{A}_{\mathcal T}$ such that $\forall t\geq t_m,$ $$v_t= v^{*} \left( \partial_{q}\varphi(t,q^m_t), q_t^m \right),$$ where $v^{*}$ is defined in Lemma~\ref{lemmHronde}. As before, we denote by $\pi_{m}$ the first exit time of $(t,q^{m}_{t})_{t\geq t_{m}}$ from $B$ (where $q^{m}_{t}:=q^{t_{m},q_{m}, {\delta}, {v}}_{t}$). By Itô's lemma, we obtain
\begin{equation}
\begin{split}
\varphi(\pi_{m}, q^{m}_{\pi_{m}}) & = \varphi(t_{m},q_{m}) + \int\limits_{t_{m}}^{\pi_{m}} \frac{\partial \varphi}{\partial t}(s,q^{m}_{s-}) ds\\
& + \int\limits_{t_{m}}^{\pi_{m}}  \int_{\mathbb{R}_{+}^{*}} \mathds{1}_{\{ q^m_{s-} +z \in \mathcal{Q} \}}  \Lambda^{b}(\delta^{b}(s,z)) \left( \varphi(s,q^{m}_{s-}+z) - \varphi(s,q^{m}_{s-}) \right) \mu^{b}(dz) ds\nonumber\\
& + \int\limits_{t_{m}}^{\pi_{m}} \int_{\mathbb{R}_{+}^{*}} \mathds{1}_{\{ q^{m}_{s-}-z \in \mathcal{Q} \}} \Lambda^{a}(\delta^{a}(s,z)) \left( \varphi(s,q^{m}_{s-}-z) - \varphi(s,q^{m}_{s-}) \right) \mu^{a}(dz) ds\\
\end{split}
\end{equation}
\begin{equation}
\begin{split}
& + \int\limits_{t_{m}}^{\pi_{m}} \int_{\mathbb{R}_{+}^{*}} \left( \varphi(s,q^{m}_{s-}+z) - \varphi(s,q^{m}_{s-}) \right) \tilde{J}^{b}(ds,dz)\\
& + \int\limits_{t_{m}}^{\pi_{m}} \int_{\mathbb{R}_{+}^{*}} \left( \varphi(s,q^{m}_{s-}-z) - \varphi(s,q^{m}_{s-}) \right) \tilde{J}^{a}(ds,dz)\\
& + \int\limits_{t_{m}}^{\pi_{m}}  \partial_{q} \varphi(s,q^{m}_{s-}) \left(-(v_s)_- \zeta(\tilde q + q^{m}_{s-}) + (v_s)_+ \zeta(\tilde q - q^{m}_{s-}) \right) ds,\nonumber
\end{split}
\end{equation}
which we can write
\begin{equation}
\begin{split}
\varphi( & \pi_{m}, q^{m}_{\pi_{m}}) = \varphi(t_{m},q_{m}) + \int_{t_{m}}^{\pi_{m}} \bigg\lbrace \frac{\partial \varphi}{\partial t}(s,q^{m}_{s-})\\
& + \int_{\mathbb{R}_{+}^{*}} \mathds{1}_{\{ q^m_{s-} +z \in \mathcal{Q} \}} z\Lambda^{b}(\delta^{b}(s,z)) \left(\delta^{b}(s,z) - \frac{ \varphi(s,q^{m}_{s-}) - \varphi(s,q^{m}_{s-}+z)}{z} \right) \mu^{b}(dz)\\
& + \int_{\mathbb{R}_{+}^{*}} \mathds{1}_{\{ q^m_{s-} -z \in \mathcal{Q} \}} z \Lambda^{a}(\delta^{a}(s,z)) \left(\delta^{a}(s,z) - \frac{ \varphi(s,q^{m}_{s-}) - \varphi(s,q^{m}_{s-}-z)}{z} \right) \mu^{a}(dz)\\
& +  \left(-(v_s)_- \zeta(\tilde q + q^{m}_{s-}) + (v_s)_+ \zeta(\tilde q - q^{m}_{s-}) \right) \left(\partial_{q} \varphi(s,q^{m}_{s-}) + k q^{m}_{s-} \right) - \tilde{\mathcal L}(v_s, q^{m}_{s-})   - \psi(q^{m}_{s-}) \bigg\rbrace ds\\
& + \int\limits_{t_{m}}^{\pi_{m}} \bigg\lbrace \psi(q^{m}_{s-}) -  kq^{m}_{s-} \left(-(v_s)_- \zeta(\tilde q + q^{m}_{s-}) + (v_s)_+ \zeta(\tilde q - q^{m}_{s-}) \right)\\
& - \int_{\mathbb{R}_{+}^{*}} \mathds{1}_{\{ q^m_{s-} +z \in \mathcal{Q} \}} z\Lambda^{b}(\delta^{b}(s,z)) \delta^{b}(s,z) \mu^{b}(dz)\\
& - \int_{\mathbb{R}_{+}^{*}} \mathds{1}_{\{ q^m_{s-} -z\in \mathcal{Q} \}} z \Lambda^{a}(\delta^{a}(s,z)) \delta^{a}(s,z) \mu^{a}(dz) +\tilde{\mathcal L}(v_s, q^{m}_{s-}) \bigg\rbrace ds\\
& + \int\limits_{t_{m}}^{\pi_{m}}  \int_{\mathbb{R}_{+}^{*}} \left( \varphi(s,q^{m}_{s-}+z) - \varphi(s,q^{m}_{s-}) \right) \tilde{J}^{b}(ds,dz)\\
& + \int\limits_{t_{m}}^{\pi_{m}}  \int_{\mathbb{R}_{+}^{*}} \left( \varphi(s,q^{m}_{s-}-z) - \varphi(s,q^{m}_{s-}) \right) \tilde{J}^{a}(ds,dz).\nonumber
\end{split}
\end{equation}

By \eqref{eq3}, we then get
\begin{equation}
\begin{split}
\varphi(\pi_{m}, q^{m}_{\pi_{m}}) \geq & \  \varphi(t_{m},q_{m}) + \int\limits_{t_{m}}^{\pi_{m}} \bigg\lbrace \psi(q^{m}_{s-}) -  k q^{m}_{s-} \left(-(v_s)_- \zeta(\tilde q + q^{m}_{s-}) + (v_s)_+ \zeta(\tilde q - q^{m}_{s-}) \right) \\
& - \int_{\mathbb{R}_{+}^{*}} \mathds{1}_{\{ q^m_{s-} +z \in \mathcal{Q} \}} z\Lambda^{b}(\delta^{b}(s,z)) \delta^{b}(s,z) \mu^{b}(dz)\\
& - \int_{\mathbb{R}_{+}^{*}} \mathds{1}_{\{ q^m_{s-} -z \in \mathcal{Q} \}} z \Lambda^{a}(\delta^{a}(s,z)) \delta^{a}(s,z) \mu^{a}(dz) +\tilde{\mathcal L}(v_s, q^{m}_{s-}) \bigg\rbrace ds\\
& + \int\limits_{t_{m}}^{\pi_{m}}  \int_{\mathbb{R}_{+}^{*}} \left( \varphi(s,q^{m}_{s-}+z) - \varphi(s,q^{m}_{s-}) \right) \tilde{J}^{b}(ds,dz)\\
& + \int\limits_{t_{m}}^{\pi_{m}}  \int_{\mathbb{R}_{+}^{*}} \left( \varphi(s,q^{m}_{s-}-z) - \varphi(s,q^{m}_{s-}) \right) \tilde{J}^{a}(ds,dz).\nonumber
\end{split}
\end{equation}

The last two terms have expectations equal to zero and we obtain
\begin{equation}
\begin{split}
\varphi(t_{m},q_{m}) \leq  \mathbb{E} \Bigg[ & \varphi(\pi_{m}, q^{m}_{\pi_{m}}) + \int\limits_{t_{m}}^{\pi_{m}} \bigg\lbrace  \int_{\mathbb{R}_{+}^{*}} \mathds{1}_{\{ q^m_{s-} +z\in \mathcal{Q} \}} z\Lambda^{b}(\delta^{b}(s,z)) \delta^{b}(s,z) \mu^{b}(dz)\\
& + \int_{\mathbb{R}_{+}^{*}} \mathds{1}_{\{ q^m_{s-} -z \in \mathcal{Q} \}} z \Lambda^{a}(\delta^{a}(s,z)) \delta^{a}(s,z) \mu^{a}(dz)\\
& + k q^{m}_{s-} \left(-(v_s)_- \zeta(\tilde q + q^{m}_{s-}) + (v_s)_+ \zeta(\tilde q - q^{m}_{s-}) \right) - \tilde{\mathcal L}(v_s, q^{m}_{s-}) - \psi(q^{m}_{s-}) \bigg\rbrace ds \Bigg]. \nonumber
\end{split}
\end{equation}

Therefore,
\begin{equation}
\begin{split}
\varphi(t_{m},q_{m}) \leq - \eta + \mathbb{E} \Bigg[ & \theta(\pi_{m}, q^{m}_{\pi_{m}}) + \int\limits_{t_{m}}^{\pi_{m}} \bigg\lbrace \int_{\mathbb{R}_{+}^{*}} \mathds{1}_{\{ q^m_{s-} +z\in \mathcal{Q} \}}  z\Lambda^{b}(\delta^{b}(s,z)) \delta^{b}(s,z) \mu^{b}(dz)\\
& +  \int_{\mathbb{R}_{+}^{*}} \mathds{1}_{\{ q^m_{s-} -z \in \mathcal{Q} \}} z \Lambda^{a}(\delta^{a}(s,z)) \delta^{a}(s,z) \mu^{a}(dz)\\
& + k q^{m}_{s-} \left(-(v_s)_- \zeta(\tilde q + q^{m}_{s-}) + (v_s)_+ \zeta(\tilde q - q^{m}_{s-}) \right) - \tilde{\mathcal L}(v_s, q^{m}_{s-}) - \psi(q^{m}_{s-}) \bigg\rbrace ds \Bigg]. \nonumber
\end{split}
\end{equation}

As $\varphi(t_{m},q_{m}) \xrightarrow[m\rightarrow +\infty]{} \varphi(\bar{t},\bar{q}) = \theta_{*}(\bar{t},\bar{q})$ and moreover $\theta(t_{m},q_{m})\xrightarrow[m\rightarrow +\infty]{}\theta_{*}(\bar{t},\bar{q})$, we have that for $m$ sufficiently large, $\theta(t_{m},q_{m}) - \frac{\eta}{2} \leq \varphi(t_{m},q_{m})$ and we deduce:
\begin{equation}
\begin{split}
\theta(t_{m},q_{m}) <   \mathbb{E} \Bigg[ &\theta(\pi_{m}, q^{m}_{\pi_{m}}) + \int\limits_{t_{m}}^{\pi_{m}} \bigg\lbrace  \int_{\mathbb{R}_{+}^{*}} \mathds{1}_{\{ q^m_{s-} +z \in \mathcal{Q} \}} z\Lambda^{b}(\delta^{b}(s,z)) \delta^{b}(s,z) \mu^{b}(dz)\\
& +  \int_{\mathbb{R}_{+}^{*}} \mathds{1}_{\{ q^m_{s-} -z \in \mathcal{Q} \}} z \Lambda^{a}(\delta^{a}(s,z)) \delta^{a}(s,z) \mu^{a}(dz)\\
& + k q^{m}_{s-} \left(-(v_s)_- \zeta(\tilde q + q^{m}_{s-}) + (v_s)_+ \zeta(\tilde q - q^{m}_{s-}) \right) - \tilde{\mathcal L}(v_s, q^{m}_{s-}) - \psi(q^{m}_{s-}) \bigg\rbrace ds \Bigg], \nonumber
\end{split}
\end{equation}
which contradicts the dynamic programming principle. \newline

In conclusion, we necessarily have
\begin{equation}
\begin{split}
-\frac{\partial \varphi}{\partial t}(\bar{t}, \bar{q}) &+ \psi(\bar{q}) -  \int_{\mathbb{R}_{+}^{*}}\mathds{1}_{\{ \bar{q} +z \in \mathcal{Q} \}} zH^{b} \left(\frac{\varphi(\bar{t}, \bar{q}) -  \varphi(\bar{t}, \bar{q}+z) }{z}\right) \mu^{b}(dz)  \\
&-  \int_{\mathbb{R}_{+}^{*}}\mathds{1}_{\{ \bar{q}-z \in \mathcal{Q} \}} zH^{a} \left( \frac{\varphi(\bar{t}, \bar{q}) - \varphi(\bar{t}, \bar{q}-z)}{z} \right) \mu^{a}(dz) - \mathcal H \left(\partial_{q} \varphi(\bar{t},\bar q),\bar{q} \right) \ \geq 0, \nonumber
\end{split}
\end{equation}
and $\theta$ is a viscosity supersolution to $\eqref{eqn:HJB}$ on $[0,T) \times \mathcal Q$.\newline

\end{proof}

\begin{prop}
$\forall q \in \mathcal Q,$ we have $\theta_*(T,q) = \theta^*(T,q)= -\ell(q)$.
\end{prop}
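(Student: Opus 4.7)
The plan is to prove the stronger statement that $\theta$ is continuous at $(T,q)$ with value $-\ell(q)$, which immediately yields $\theta_*(T,q) = \theta^*(T,q) = -\ell(q)$. First I observe that $\theta(T,q) = -\ell(q)$ directly from the definition of $\mathcal J$: for $t=T$ the running integral vanishes and the inventory process starting at $(T,q)$ remains at $q$, so $\mathcal J(T,q,\bar\delta,\bar v) = -\ell(q)$ for every admissible control. It therefore suffices to show that $\theta(t,q') \to -\ell(q)$ as $(t,q') \to (T,q)$ with $t < T$.

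The main estimate is a uniform (in the control) bound on the running part of $\mathcal J$ and on the variation of the inventory over $[t,T]$. Using $\delta^b,\delta^a \ge -\delta_\infty$ together with the monotonicity of $\Lambda^b,\Lambda^a$, the integrability $\int z\mu^b(dz)=\Delta^b<+\infty$, $\int z\mu^a(dz)=\Delta^a<+\infty$, the bounds $|v|\le v_\infty$ and $0\le \zeta \le 1$, and the nonnegativity and continuity on $\mathcal Q$ of $\tilde{\mathcal L}$ and $\psi$, the integrand in $\mathcal J$ is bounded in absolute value by a constant $M$ independent of $(\bar\delta,\bar v)$, giving
\[
\bigl| \mathcal J(t,q',\bar\delta,\bar v) + \mathbb E\bigl[\ell(q_{T-}^{t,q',\bar\delta,\bar v})\bigr] \bigr| \le M(T-t).
\]
Similarly, controlling the jump part via the intensity bound $\Lambda^{b,a}(\delta^{b,a}) \le \Lambda^{b,a}(-\delta_\infty)$ and the continuous part via $|w_s| \le v_\infty$, there exists $C>0$ (depending only on the model parameters) such that
\[
\sup_{(\bar\delta,\bar v)\in\mathcal A}\mathbb E\bigl[\bigl|q_{T-}^{t,q',\bar\delta,\bar v} - q'\bigr|\bigr] \le C(T-t).
\]

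Combining this with the uniform continuity of $\ell$ on the compact set $\mathcal Q$ via a Markov-inequality argument yields $\sup_{(\bar\delta,\bar v)}\bigl|\mathbb E[\ell(q_{T-}^{t,q',\bar\delta,\bar v})] - \ell(q')\bigr| \to 0$ as $t\to T$. By continuity of $\ell$ at $q$ and the first display above, the upper bound $\theta(t,q') \le -\ell(q) + o(1)$ follows by taking the supremum over controls, and the lower bound $\theta(t,q') \ge -\ell(q) - o(1)$ follows by evaluating $\mathcal J$ at the admissible control $(\bar\delta,\bar v) = (0,0)$, for which the bid/ask contribution and the market-impact term vanish and only the $-\int_t^T \psi(q_s)ds$ running term and the terminal penalty survive. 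Hence $\theta$ is continuous at $(T,q)$ and the conclusion follows. I expect the only delicate step to be the uniform-in-strategy estimate on $\mathbb E|q_{T-}^{t,q',\bar\delta,\bar v} - q'|$, which crucially relies on both admissibility constraints $|v| \le v_\infty$ and $\delta^{b,a} \ge -\delta_\infty$ to bound the intensities and the drift.
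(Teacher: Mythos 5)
Your argument is correct, and it rests on the same two elementary facts as the paper's proof: the running part of $\mathcal J$ is bounded by $M(T-t)$ uniformly in the control (using $\delta^{b},\delta^{a}\ge-\delta_\infty$, $H^{b}(0),H^{a}(0)<\infty$, $|v|\le v_\infty$, $0\le\zeta\le1$, and the boundedness of $\tilde{\mathcal L}$ and $\psi$ on the relevant sets), and the terminal penalty converges because the inventory moves by at most $O(T-t)$ in expectation. Where you differ is in how the supremum over controls is handled: the paper works with the two envelopes separately along sequences $(t_m,q_m)\to(T,q)$, using an arbitrary fixed control plus dominated convergence (convergence in probability of $q^m_{T-}$ and continuity of $\ell$ on the compact $\mathcal Q$) for $\theta_*\ge-\ell$, and $\varepsilon$-optimal controls for $\theta^*\le-\ell$; you instead make the displacement estimate $\sup_{(\bar\delta,\bar v)}\mathbb E\bigl|q^{t,q',\bar\delta,\bar v}_{T-}-q'\bigr|\le C(T-t)$ uniform in the control (intensities dominated by $\Lambda^{b,a}(-\delta_\infty)$, drift by $v_\infty$) and combine it with Markov's inequality and uniform continuity of $\ell$, which yields genuine continuity of $\theta$ at $\{T\}\times\mathcal Q$ and settles both envelopes at once, with no need for $\varepsilon$-optimal controls. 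Your route proves a slightly stronger statement and is more quantitative; the paper's route is softer on the supersolution half (no uniformity over controls needed there) but has to invoke the $\varepsilon$-optimal-control device for the subsolution half. One small point to make explicit if you write this up: with the zero control the quotes generate no revenue but the inventory still jumps (the intensities are $\Lambda^{b,a}(0)\mu^{b,a}(dz)>0$), so the lower bound still needs your uniform displacement estimate (or dominated convergence) to conclude $\mathbb E[\ell(q_{T-})]\to\ell(q)$ — your phrasing is consistent with this, but it should not be read as the inventory being frozen.
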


\begin{proof}
Let $q\in \mathcal Q$ and let us take $(t_m,q_m)_{m\in \mathbb{N}}$ a sequence of $[0,T] \times \mathcal Q$ such that $$(t_m,q_m) \underset{m\rightarrow +\infty}{\longrightarrow }(T,q) \qquad \text{and} \qquad \theta(t_m,q_m) \underset{m\rightarrow +\infty}{\longrightarrow}\theta_*(T,q).$$ 
We introduce arbitrary controls $ \delta=(\delta^{b},\delta^{a})\in \mathcal{A}_{\mathcal M}$. We also introduce an arbitrary control $ v \in \mathcal{A}_{\mathcal T}$. Then we have for all $m\in \mathbb{N},$ by denoting $q^m_t = q^{t_m, q_m, {\delta}, {v}}_t$ for all $t\in [t_m,T]$:
\begin{equation*}
\begin{split}
\theta(t_m,q_m) \geq \mathbb{E} & \left[\int\limits_{t_m}^{T} \Bigg\lbrace \int_{\mathbb{R}_{+}^{*}} \Big(z\delta^{i,b}(s,z) \mathds{1}_{\left\{q^m_{s-} + z \in \mathcal Q \right\}} \Lambda^{b}(\delta^{b}(s,z))\mu^{b}(dz) \right.\\
&+ z\delta^{a}(s,z) \mathds{1}_{\left\{q^m_{s-} - z\in \mathcal Q \right\}} \Lambda^{a}(\delta^{a}(s,z))\mu^{a}(dz) \Big) \\
&\left.+ kq^{m}_{s-} \left(-(v_s)_- \zeta(\tilde q + q^{m}_{s-}) + (v_s)_+ \zeta(\tilde q - q^{m}_{s-}) \right) - \tilde{\mathcal L}(v_s, q^{m}_{s-}) - \psi(q^m_{s-}) \Bigg\rbrace ds -\ell(q^m_{T-}) \vphantom{\int\limits_{t}^{T}} \right].  
\end{split}    
\end{equation*}
But, 
$$
\Bigg| \int\limits_{t_m}^{T} \Bigg\lbrace \int_{\mathbb{R}_{+}^{*}} \Big(z\delta^{b}(s,z) \mathds{1}_{\left\{q^m_{s-} + z \in \mathcal Q \right\}} \Lambda^{b}(\delta^{b}(s,z))\mu^{b}(dz)+ z\delta^{a}(s,z)\mathds{1}_{\left\{q^m_{s-} - z \in \mathcal Q \right\}} \Lambda^{a}(\delta^{a}(s,z))\mu^{a}(dz) \Big) \Bigg\rbrace ds \Bigg|$$\vspace{-3mm}$$\leq (T-t_m) \left( \Delta^{b} H^b(0)+ \Delta^{a} H^a(0) \right),$$
$$\left| \int_{t_m}^T k q^{m}_{s-} \left(-(v_s)_- \zeta(\tilde q + q^{m}_{s-}) + (v_s)_+ \zeta(\tilde q - q^{m}_{s-}) \right)  ds \right| \le 2(T-t_m) k v_{\infty} \tilde q,$$
$$
 \left|\int_{t_m}^T \tilde{\mathcal L}(v_s, q^{m}_{s-}) ds \right| \leq ( T-t_m) \left( L(-v_{\infty}) + L(v_{\infty}) \right)$$
and $$\left|\int_{t_m}^T \psi(q^{m}_{s-}) ds\right|\le (T-t_m) \sup_{q \in Q} \psi(q).$$

By dominated convergence (because $(q^m_T)_m$ converges in probability towards $q$ and $\ell$ is continuous on the compact set $\mathcal{Q}$), we have $\mathbb{E} \left[\ell(q^m_T) \right] \underset{m\rightarrow +\infty}{\longrightarrow}\ell(q)$, and therefore $\theta_*(T,q) \geq -\ell(q).$ But as $\theta_*(T,q) \leq \theta(T,q) = -\ell(q),$ we get $\theta_*(T,q) = -\ell(q).$\\

The proof for $\theta^*$ is similar, by taking $\varepsilon$-optimal controls and showing that $\theta^*(T,q) - \varepsilon \leq -\ell(q)$ for all $\varepsilon>0.$
\end{proof}

\subsection{Uniqueness result}

\begin{thm}
\label{unic}
Let $u$ be a bounded USC subsolution and $v$ be a bounded LSC supersolution to \eqref{eqn:HJB} on $[0,T) \times \mathcal Q$ such that $u\leq v$ on $\{T\} \times \mathcal Q$. Then $u\leq v$ on $[0,T] \times \mathcal Q.$
\end{thm}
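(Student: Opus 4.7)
The plan is to prove this comparison principle by the classical doubling-of-variables technique of Crandall--Ishii--Lions, suitably adapted to the PIDE structure of \eqref{eqn:HJB}. I argue by contradiction, assuming that $M := \sup_{[0,T]\times \mathcal Q}(u-v) > 0$. Since $u-v$ is USC on the compact set $[0,T]\times \mathcal Q$ and $u\le v$ on $\{T\}\times \mathcal Q$, this supremum is attained at some $(\bar t, \bar q)$ with $\bar t < T$. To gain a strict subsolution inequality, I first replace $u$ by the perturbation $\tilde u(t,q) := u(t,q) - \eta/(T-t)$ for a small $\eta > 0$. The crucial point is that the additive constant $\eta/(T-t)$ cancels out of every finite difference appearing inside $H^{b}$ and $H^{a}$, so (using Proposition \ref{eqvisco}) $\tilde u$ satisfies the subsolution inequality with an extra term $+\eta/(T-\bar t)^{2}$ on the left-hand side. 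For $\eta$ small enough, $M_\eta := \sup(\tilde u-v) > 0$ and, because $\tilde u \to -\infty$ as $t\to T^{-}$, any maximizing sequence stays bounded away from $t=T$.

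I then double variables on $[0,T]^{2}\times \mathcal Q^{2}$ by
\begin{equation*}
\Phi_{\epsilon}(t,q,s,y) \;=\; \tilde u(t,q) - v(s,y) - \tfrac{1}{2\epsilon}\bigl((t-s)^{2}+(q-y)^{2}\bigr),
\end{equation*}
which is USC on a compact set; let $(t_\epsilon,q_\epsilon,s_\epsilon,y_\epsilon)$ be a maximizer. Standard estimates give $(t_\epsilon,q_\epsilon),(s_\epsilon,y_\epsilon)\to(\bar t_\eta,\bar q_\eta)$, $\bigl((t_\epsilon-s_\epsilon)^{2}+(q_\epsilon-y_\epsilon)^{2}\bigr)/\epsilon\to 0$, and $t_\epsilon,s_\epsilon < T$ for $\epsilon$ small. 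Writing the (strict) subsolution inequality for $\tilde u$ at $(t_\epsilon,q_\epsilon)$ against $\varphi_1(t,q)=\tfrac{1}{2\epsilon}((t-s_\epsilon)^{2}+(q-y_\epsilon)^{2})$ and the supersolution inequality for $v$ at $(s_\epsilon,y_\epsilon)$ against $\varphi_2(s,y)=-\tfrac{1}{2\epsilon}((t_\epsilon-s)^{2}+(q_\epsilon-y)^{2})$, and subtracting, I get
\begin{equation*}
\tfrac{\eta}{(T-t_\epsilon)^{2}} + \bigl(\psi(q_\epsilon)-\psi(y_\epsilon)\bigr) \;\le\; (I^{b}_{\tilde u}-I^{b}_{v}) + (I^{a}_{\tilde u}-I^{a}_{v}) + \bigl(\mathcal H(p_\epsilon,q_\epsilon)-\mathcal H(p_\epsilon,y_\epsilon)\bigr),
\end{equation*}
where $p_\epsilon=(q_\epsilon-y_\epsilon)/\epsilon$ and $I^{b,a}_{\bullet}$ denote the corresponding nonlocal integrals. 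The Hamiltonian term is handled directly by Lemma \ref{lemmHronde}: it is bounded by $C_{\mathcal H}(|q_\epsilon-y_\epsilon|+(q_\epsilon-y_\epsilon)^{2}/\epsilon)$, which vanishes with $\epsilon$. Continuity of $\psi$ also kills the $\psi$-difference.

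The main difficulty is the nonlocal terms, which is where the specific structure of the problem really matters. For any $z>0$ such that both $q_\epsilon+z$ and $y_\epsilon+z$ lie in $\mathcal Q$, the maximality of $\Phi_\epsilon$ applied at $(t_\epsilon,q_\epsilon+z,s_\epsilon,y_\epsilon+z)$ makes the quadratic penalty cancel (the differences $q-y$ are unchanged by a common shift), yielding $\tilde u(t_\epsilon,q_\epsilon)-\tilde u(t_\epsilon,q_\epsilon+z)\ge v(s_\epsilon,y_\epsilon)-v(s_\epsilon,y_\epsilon+z)$; the decreasing monotonicity of $H^{b}$ from Lemma \ref{lemmH} then immediately gives $zH^{b}\bigl((\tilde u(t_\epsilon,q_\epsilon)-\tilde u(t_\epsilon,q_\epsilon+z))/z\bigr)\le zH^{b}\bigl((v(s_\epsilon,y_\epsilon)-v(s_\epsilon,y_\epsilon+z))/z\bigr)$. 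The remainder of $I^{b}_{\tilde u}-I^{b}_{v}$ lives on the symmetric difference $\{z:q_\epsilon+z\in\mathcal Q\}\triangle\{z:y_\epsilon+z\in\mathcal Q\}$, whose $\mu^{b}$-measure is $O(|q_\epsilon-y_\epsilon|)$ by absolute continuity of $\mu^{b}$, and on which the integrand is uniformly bounded by Lemma \ref{zHbounded}; the same treatment applies to the ask side. Passing to the limit $\epsilon\to 0$ therefore gives $\eta/(T-\bar t_\eta)^{2}\le 0$, the desired contradiction. Letting $\eta\to 0$ (if needed) concludes that $u\le v$ on $[0,T]\times \mathcal Q$. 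The genuinely delicate step is the bookkeeping for the nonlocal terms with the indicator functions and the symmetric difference; everything else is classical once the Lipschitz estimate of Lemma \ref{lemmHronde} is available.
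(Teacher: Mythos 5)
Your overall architecture is exactly the paper's: doubling of variables, the common-shift maximality argument giving $\tilde u(t_\epsilon,q_\epsilon)-\tilde u(t_\epsilon,q_\epsilon+z)\ge v(s_\epsilon,y_\epsilon)-v(s_\epsilon,y_\epsilon+z)$ combined with the monotonicity of $H^{b},H^{a}$ (Lemma \ref{lemmH}), the treatment of the indicator mismatch via Lemma \ref{zHbounded}, absolute continuity of $\mu^{b},\mu^{a}$ and dominated convergence, and the estimate of Lemma \ref{lemmHronde} applied with $p_\epsilon=(q_\epsilon-y_\epsilon)/\epsilon$. However, there is a genuine error in the step that is supposed to produce the contradiction: your strictness perturbation has the wrong sign. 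In \eqref{eqn:HJB} the time derivative enters the subsolution inequality as $-\partial_t\varphi+\cdots\le 0$ (this is a terminal-value problem), not as $+\partial_t$ as in a forward parabolic equation. If $\varphi$ touches $\tilde u:=u-\eta/(T-t)$ from above at a maximum point $(\bar t,\bar q)$, then $\varphi+\eta/(T-t)$ touches $u$ there, and the subsolution property of $u$ (in the form of Proposition \ref{eqvisco}; the $t$-only shift indeed cancels in the finite differences) yields
\[
-\partial_t\varphi(\bar t,\bar q)+\psi(\bar q)-I^{b}[\tilde u]-I^{a}[\tilde u]-\mathcal H\left(\partial_q\varphi(\bar t,\bar q),\bar q\right)\ \le\ \frac{\eta}{(T-\bar t)^{2}},
\]
where $I^{b}[\tilde u]$, $I^{a}[\tilde u]$ denote the nonlocal integrals evaluated on $\tilde u$. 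The extra term thus appears on the right-hand side with a plus sign: your perturbation \emph{relaxes} the subsolution inequality rather than making it strict. After subtracting the two viscosity inequalities and letting $\epsilon\to 0$ you therefore only obtain $-\eta/(T-\bar t_\eta)^{2}\le 0$, which is no contradiction; the claimed conclusion $\eta/(T-\bar t_\eta)^{2}\le 0$ does not follow.

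The fix is standard and is precisely what the paper does: use a time perturbation with positive derivative, e.g. replace $u$ by $u-\varepsilon(T-t)$, or equivalently insert the term $-\varepsilon(2T-t-s)$ into the doubled functional. Then the subsolution inequality gains $+\varepsilon$ and the supersolution inequality gains $-\varepsilon$, producing the uniform gap $2\varepsilon$ that your limiting argument needs. The case of a maximizer at $t=T$ is then excluded not by blow-up of the perturbation but directly from $u\le v$ on $\{T\}\times\mathcal Q$ together with the semicontinuity of $u$ and $v$, again as in the paper. Two minor points: with your blow-up perturbation $\tilde u$ is no longer bounded, which you would need to handle in the doubling step, and absolute continuity of $\mu^{b}$ gives only that the $\mu^{b}$-measure of the symmetric difference of the indicator sets tends to $0$, not that it is $O(|q_\epsilon-y_\epsilon|)$ — though vanishing is all your argument requires. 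With these corrections the remainder of your proof coincides with the paper's.
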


\begin{proof}
We prove it by contradiction. Let us assume $\underset{[0,T] \times \mathcal Q}{\sup} ( u-v ) >0$. Then this supremum cannot be reached on $\{T\} \times \mathcal Q$. For $n \geq 0$ and $\varepsilon> 0$, we introduce:

\begin{equation}
\phi_{n,\varepsilon}(t,s,q,y) = u(t,q) - v(s,y) - n(q-y)^{2} - n(t-s)^{2} - \varepsilon(2T-t-s). \nonumber
\end{equation}

We also introduce $(t_{n,\varepsilon},s_{n,\varepsilon},q_{n,\varepsilon},y_{n,\varepsilon})$ such that:

\begin{equation}
\phi_{n,\varepsilon}(t_{n,\varepsilon},s_{n,\varepsilon},q_{n,\varepsilon},y_{n,\varepsilon}) = \underset{[0,T]^{2} \times \mathcal Q^{2}}{\max}  \phi_{n,\varepsilon}(t,s,q,y).  \nonumber
\end{equation}

Then for all $n\geq 0, \epsilon>0$ and for all $(t,q) \in [0,T] \times \mathcal Q$, we have $$\phi_{n,\varepsilon}(t_{n,\varepsilon},s_{n,\varepsilon},q_{n,\varepsilon},y_{n,\varepsilon}) \geq \phi_{n,\varepsilon}(t,t,q,q) =  u(t,q)-v(t,q) - 2\varepsilon(T-t).$$ In particular, 
\begin{equation}
\label{posIn}
\phi_{n,\varepsilon}(t_{n,\varepsilon},s_{n,\varepsilon},q_{n,\varepsilon},y_{n,\varepsilon}) \geq \underset{[0,T] \times \mathcal Q}{\sup} \left( u-v\right) - 2\varepsilon T.   
\end{equation}
We can now fix $\varepsilon$ such that $$0<\varepsilon<\frac{\underset{[0,T] \times \mathcal Q}{\sup}\left(u - v  \right)}{4T}$$ to ensure that the right-hand side of $\eqref{posIn}$ is larger than $\frac 12 \underset{[0,T] \times \mathcal Q}{\sup} \left( u-v \right)$, which is positive by assumption. $\varepsilon$ will remain fixed throughout the rest of the proof and, for ease of notation, we now write $\phi_n = \phi_{n,\varepsilon}$, $t_n = t_{n,\varepsilon}$ , $s_n = s_{n,\varepsilon}$, $q_n = q_{n,\varepsilon}$, and $y_n = y_{n,\varepsilon}$.\\

From what precedes, we know that the sequence $\left(n(q_n - y_n)^2 + n(t_n-s_n)^2\right)_n$ is bounded, so necessarily, $|t_n - s_n| \underset{n\rightarrow +\infty}{\longrightarrow}$0 and $|q_n - y_n| \underset{n\rightarrow +\infty}{
\longrightarrow}0$. Then, up to a subsequence, there exists $(\bar{t},\bar{q}) \in [0,T] \times \mathcal{Q}$ such that $s_n, t_n \underset{n\rightarrow +\infty}{\longrightarrow}\bar{t}$ and $q_n, y_n \underset{n\rightarrow +\infty}{\longrightarrow}\bar{q}$.\\

Moreover, we know that
$$\phi_n(t_n,s_n,q_n,y_n) \ge \phi_n(\bar t,\bar t,\bar q,\bar q),$$
which implies
\begin{align*}
    u(t_n,q_n) - v(s_n,y_n) - n(q_n-y_n)^{2} - n(t_n-s_n)^{2} - \varepsilon(2T-t_n-s_n) \ge u(\bar t, \bar q) - v(\bar t, \bar q) -2\varepsilon(T-\bar t).
\end{align*}
Hence we have
\begin{align*}
    n(q_n-y_n)^{2} + n(t_n-s_n)^{2} \le u(t_n,q_n) - u(\bar t, \bar q) +v(\bar t, \bar q) - v(s_n,y_n) +2\varepsilon(T-\bar t) - \varepsilon(2T-t_n-s_n) .
\end{align*}
As $u$ is USC and $v$ is LSC, the $\limsup$ when $n\rightarrow +\infty$ of the left-hand side is nonpositive, which implies $ n(q_n-y_n)^{2} + n(t_n-s_n)^{2}\xrightarrow[n \rightarrow +\infty]{} 0 $.\\

Let us assume $\bar{t} = T$. Then we have, as $u$ is USC and $v$ is LSC:
\begin{equation}
    \underset{n \rightarrow + \infty}{\limsup}\ \phi_n(t_n,s_n,q_n,y_n) \leq \underset{n \rightarrow + \infty}{\limsup}\ u(t_n,q_n) - \underset{n \rightarrow + \infty}{\liminf}\ v(s_n,y_n) \leq u(T,\bar{q}) - v(T,\bar{q}) \leq 0,\nonumber
\end{equation}
which constitutes a contradiction. Necessarily, $\bar{t}<T$.\\

Hence, for $n$ large enough we must have $t_{n},s_{n}<T$, and we know that $(t_{n},q_{n})$ is a maximum point of $u-\varphi_{n}$ where $$\varphi_{n}(t,q) = v(s_{n},y_{n}) + n(q-y_{n})^{2} + n(t-s_{n})^{2} + \varepsilon(2T-t-s_{n}).$$ By Proposition~\ref{eqvisco}, we have:
\begin{equation}
\begin{split}
\varepsilon &- 2n(t_{n}-s_{n}) + \psi(q_{n}) -  \int_{\mathbb{R}_{+}^{*}} \mathds{1}_{\left\{ q_n + z \in \mathcal Q \right\}} zH^{b} \left( \frac{u(t_{n},q_{n}) -  u(t_{n},q_{n}+z) }{z}\right) \mu^{b}(dz)\\
&-  \int_{\mathbb{R}_{+}^{*}} \mathds{1}_{\left\{ q_n - z \in \mathcal Q \right\}} z H^{a} \left(\frac{u(t_{n},q_{n}) - u(t_{n},q_{n}-z)}{z} \right) \mu^{a}(dz) -  \mathcal H \left( 2n \left( q_n - y_n \right), q_n  \right) \leq 0. \nonumber
\end{split}
\end{equation}

Furthermore, $(s_{n},y_{n})$ is a minimum point of $v-\xi_{n}$ where $$\xi_{n}(s,y) = u(t_{n},y_{n}) - n(q_{n}-y)^{2} - n(t_{n}-s)^{2} - \varepsilon(2T-t_{n}-s) ,$$ and by the same argument
\begin{equation}
\begin{split}
-&\varepsilon - 2n(t_{n}-s_{n}) + \psi(y_{n}) -  \int_{\mathbb{R}_{+}^{*}}\mathds{1}_{\left\{ y_n + z \in \mathcal Q \right\}} zH^{b} \left( \frac{v(s_{n},y_{n}) -  v(s_{n},y_{n}+z) }{z }\right) \mu^{b}(dz)\\
&- \int_{\mathbb{R}_{+}^{*}} \mathds{1}_{\left\{ y_n - z \in \mathcal Q \right\}}z H^{a} \left(\frac{v(s_{n},y_{n}) - v(s_{n},y_{n}-z)}{z} \right) \mu^{a}(dz) -  \mathcal H \left( 2n \left(q_n - y_n \right) , y_n \right) \geq 0. \nonumber
\end{split}
\end{equation}

Therefore by combining the two inequalities we get
\begin{equation}
\begin{split}
& \int_{\mathbb{R}_{+}^{*}} z \Bigg( \mathds{1}_{\left\{ y_n + z \in \mathcal Q \right\}}  H^{b} \bigg( \frac{v(s_{n},y_{n}) -  v(s_{n},y_{n}+z) }{z}\bigg)\\
&\qquad \qquad \qquad \qquad \qquad \qquad- \mathds{1}_{\left\{ q_n + z \in \mathcal Q \right\}} H^{b} \bigg( \frac{u(t_{n},q_{n}) -  u(t_{n},q_{n}+z) }{z}\bigg) \Bigg) \mu^{b}(dz)\\
& + \int_{\mathbb{R}_{+}^{*}} z \Bigg( \mathds{1}_{\left\{ y_n - z \in \mathcal Q \right\}}H^{a} \bigg(\frac{v(s_{n},y_{n}) - v(s_{n},y_{n}-z)}{z} \bigg)\\
&\qquad \qquad \qquad \qquad \qquad \qquad- \mathds{1}_{\left\{ q_n - z \in \mathcal Q \right\}}H^{a} \bigg(\frac{u(t_{n},q_{n}) - u(t_{n},q_{n}-z)}{z} \bigg) \Bigg) \mu^{a}(dz)\\
& \qquad \qquad \qquad  \qquad \qquad  \qquad \qquad \leq -2\varepsilon + \left(\psi(q_{n})- \psi(y_{n}) \right) - \left( \mathcal H \left( 2n \left(q_n - y_n \right) , q_n \right) - \mathcal H \left( 2n \left(q_n - y_n \right) , y_n \right) \right). \nonumber
\end{split}
\end{equation}
By rearranging the terms, we get:
\begin{equation}
\begin{split}
\label{bigineq}
&  \int_{\mathbb{R}_{+}^{*}} \mathds{1}_{\left\{ y_n + z \in \mathcal Q \right\} \cap \left\{ q_n + z \in \mathcal Q \right\} } z \Bigg( H^{b} \bigg( \frac{v(s_{n},y_{n}) -  v(s_{n},y_{n}+z) }{z}\bigg)\\
&- H^{b} \bigg( \frac{u(t_{n},q_{n}) -  u(t_{n},q_{n}+z) }{z}\bigg) \Bigg) \mu^{b}(dz)\\
& +  \int_{\mathbb{R}_{+}^{*}} \mathds{1}_{\left\{ y_n - z \in \mathcal Q \right\} \cap \left\{ q_n - z \in \mathcal Q \right\} } z \Bigg( H^{a} \bigg(\frac{v(s_{n},y_{n}) - v(s_{n},y_{n}-z)}{z} \bigg)\\
&- H^{a} \bigg(\frac{u(t_{n},q_{n}) - u(t_{n},q_{n}-z)}{z} \bigg) \Bigg) \mu^{a}(dz)\\
& \leq-2\varepsilon + \left(\psi(q_{n})- \psi(y_{n}) \right) -  \left( \mathcal H \left( 2n \left(q_n - y_n \right) , q_n \right) - \mathcal H \left( 2n \left(q_n - y_n \right) , y_n \right) \right)\\
&-  \int_{\mathbb{R}_{+}^{*}} \mathds{1}_{\left\{ y_n + z \in \mathcal Q \right\} \cap \left\{ q_n + z \not\in \mathcal Q \right\} } z  H^{b} \bigg( \frac{v(s_{n},y_{n}) -  v(s_{n},y_{n}+z) }{z}\bigg) \mu^{b}(dz)\\
&+  \int_{\mathbb{R}_{+}^{*}} \mathds{1}_{\left\{ y_n + z \not\in \mathcal Q \right\} \cap \left\{ q_n + z \in \mathcal Q \right\} } z  H^{b} \bigg( \frac{u(t_{n},q_{n}) -  u(t_{n},q_{n}+z) }{z}\bigg) \mu^{b}(dz)\\
&- \int_{\mathbb{R}_{+}^{*}} \mathds{1}_{\left\{ y_n - z \in \mathcal Q \right\} \cap \left\{ q_n-z\not\in \mathcal Q \right\} } z  H^{a} \bigg(\frac{v(s_{n},y_{n}) - v(s_{n},y_{n}-z)}{z} \bigg) \mu^{a}(dz)\\
&+  \int_{\mathbb{R}_{+}^{*}} \mathds{1}_{\left\{ y_n - z\not\in \mathcal Q \right\} \cap \left\{ q_n - z\in \mathcal Q \right\} } z  H^{a} \bigg(\frac{u(t_{n},q_{n}) - u(t_{n},q_{n}-z)}{z} \bigg)\mu^{a}(dz).
\end{split}
\end{equation}

We know that $q_n,y_n \underset{n \rightarrow +\infty}{\longrightarrow} \bar{q}$. Therefore, $
\left(\psi(q_{n})- \psi(y_{n}) \right)\xrightarrow[n \rightarrow +\infty]{} 0.$\\

Moreover, by Lemma~\ref{lemmHronde}, there exists a constant $C_{\mathcal H}>0$ such that for all $n$, 
$$\left| \mathcal H \left( 2n \left(q_n - y_n \right) , q_n \right) - \mathcal H \left( 2n \left(q_n - y_n \right) , y_n \right) \right| \le C_{\mathcal H} \left(1+ 2n | q_n - y_n|\right)|q_n-y_n|\xrightarrow[n \rightarrow +\infty]{} 0.$$

We also have for almost every $z>0$ that $\mathds{1}_{\left\{ y_n + z \in \mathcal Q \right\} \cap \left\{ q_n + z \not\in \mathcal Q \right\} }\xrightarrow[n \rightarrow +\infty]{} 0.$\\

By Lemma~\ref{zHbounded}, the term $z  H^{b} \bigg( \frac{v(s_{n},y_{n}) -  v(s_{n},y_{n}+z) }{z}\bigg)$ is bounded uniformly in $n$ and $z,$ and by the absolute continuity of $\mu^{b}$, the dominated convergence theorem enables us to conclude that:
$$\int_{\mathbb{R}_{+}^{*}} \mathds{1}_{\left\{ y_n + z \in \mathcal Q \right\} \cap \left\{ q_n + z \not\in \mathcal Q \right\} } z  H^{b} \bigg( \frac{v(s_{n},y_{n}) -  v(s_{n},y_{n}+z) }{z}\bigg) \mu^{b}(dz)\xrightarrow[n \rightarrow +\infty]{} 0.$$
By the same reasoning:
$$ \int_{\mathbb{R}_{+}^{*}} \mathds{1}_{\left\{ y_n + z \not\in \mathcal Q \right\} \cap \left\{ q_n + z\in \mathcal Q \right\} } z  H^{b} \bigg( \frac{u(t_{n},q_{n}) -  u(t_{n},q_{n}+z) }{z}\bigg) \mu^{b}(dz)\xrightarrow[n \rightarrow +\infty]{} 0,$$
$$ \int_{\mathbb{R}_{+}^{*}} \mathds{1}_{\left\{ y_n - z \in \mathcal Q \right\} \cap \left\{ q_n-z \not\in \mathcal Q \right\} } z  H^{a} \bigg(\frac{v(s_{n},y_{n}) - v(s_{n},y_{n}-z)}{z} \bigg) \mu^{a}(dz)\xrightarrow[n \rightarrow +\infty]{} 0,$$
$$ \int_{\mathbb{R}_{+}^{*}} \mathds{1}_{\left\{ y_n - z \not\in \mathcal Q \right\} \cap \left\{ q_n - z \in \mathcal Q \right\} } z  H^{a} \bigg(\frac{u(t_{n},q_{n}) - u(t_{n},q_{n}-z)}{z} \bigg)\mu^{a}(dz)\xrightarrow[n \rightarrow +\infty]{} 0.$$
We can thus choose $n$ large enough so that the right-hand side of \eqref{bigineq} is negative.\\

However, on the left-hand side of \eqref{bigineq}, all the integrals are always nonnegative; indeed, we have 
\begin{equation*}
\begin{split}
& u(t_{n},q_{n}-z) - v(s_{n},y_{n}-z) - n (q_{n}-y_{n})^{2} - n(t_{n}-s_{n})^{2}- \varepsilon (2T-t_{n}-s_{n}) \\
\le & \  u(t_{n},q_{n}) - v(s_{n},y_{n}) - n (q_{n}-y_{n})^{2} - n(t_{n}-s_{n})^{2}- \varepsilon (2T-t_{n}-s_{n}) ,
\end{split}
\end{equation*}
therefore $v(s_{n},y_{n})-v(s_{n},y_{n}-z) \le u(t_{n},q_{n})-u(t_{n},q_{n}-z) $ and as $H^{a}$ is nonincreasing, we get the result (the proof is identical for the integrals with $H^{b}$).\\

Therefore, the left-hand side is nonnegative for every $n$. But, as we said before, for $n$ large enough, the right-hand side of \eqref{bigineq} is negative, which yields a contradiction.\\

In conclusion, we necessarily have $\underset{[0,T] \times \mathcal Q}{\sup}  (u-v) \leq 0$.\newline

\end{proof}

\begin{thm}
$\theta$ is a continuous function. Morever, $\theta$ is the only viscosity solution to \eqref{eqn:HJB}.\\
\end{thm}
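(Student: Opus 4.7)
The plan is to deduce this result directly from the comparison principle (Theorem \ref{unic}) together with the existence/boundary results already established, namely: $\theta$ is bounded, $\theta^*$ is a subsolution and $\theta_*$ a supersolution of \eqref{eqn:HJB} on $[0,T)\times\mathcal{Q}$, and $\theta^*(T,\cdot)=\theta_*(T,\cdot)=-\ell$ on $\mathcal{Q}$.

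First I would establish continuity. Since $\theta$ is bounded on $[0,T]\times\mathcal{Q}$, its upper and lower semicontinuous envelopes $\theta^*$ and $\theta_*$ are bounded as well, with $\theta_*\le \theta\le \theta^*$ by construction. From the existence results we have that $\theta^*$ is a bounded USC subsolution and $\theta_*$ is a bounded LSC supersolution to \eqref{eqn:HJB} on $[0,T)\times\mathcal{Q}$. On the terminal slice, the previous proposition gives $\theta^*(T,q)=\theta_*(T,q)=-\ell(q)$ for all $q\in\mathcal{Q}$, in particular $\theta^*\le \theta_*$ on $\{T\}\times\mathcal{Q}$. Theorem \ref{unic} then yields $\theta^*\le \theta_*$ on all of $[0,T]\times\mathcal{Q}$. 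Combined with $\theta_*\le\theta^*$, this forces $\theta_*=\theta^*=\theta$, which is exactly the statement that $\theta$ is continuous on $[0,T]\times\mathcal{Q}$.

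Next I would prove uniqueness. Suppose $\tilde{\theta}$ is any (a priori locally bounded) viscosity solution of \eqref{eqn:HJB} on $[0,T)\times\mathcal{Q}$ satisfying the terminal condition $\tilde{\theta}(T,\cdot)=-\ell$; up to replacing by envelopes we may assume $\tilde{\theta}^*$ is a bounded USC subsolution and $\tilde{\theta}_*$ is a bounded LSC supersolution, both matching $-\ell$ at $t=T$ (boundedness is inherited from the boundedness of the coefficients together with the same kind of argument used to show $\theta$ is bounded). Applying Theorem \ref{unic} to the pair $(\tilde{\theta}^*,\theta_*=\theta)$ gives $\tilde{\theta}\le \tilde{\theta}^*\le \theta$ on $[0,T]\times\mathcal{Q}$, while applying it to $(\theta^*=\theta,\tilde{\theta}_*)$ yields the reverse inequality $\theta\le \tilde{\theta}_*\le \tilde{\theta}$. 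Hence $\tilde{\theta}=\theta$.

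I do not expect any genuine obstacle here: the two delicate steps (the stochastic-control verification that $\theta^*$ and $\theta_*$ are sub-/supersolution and satisfy the terminal condition, and the comparison principle itself) are already carried out in the preceding propositions and in Theorem \ref{unic}. The only mildly subtle point is to make sure, when extending the argument to an arbitrary solution $\tilde{\theta}$ in the uniqueness step, that $\tilde{\theta}$ is bounded on $[0,T]\times\mathcal{Q}$ so that Theorem \ref{unic} is applicable; this is either taken as part of the definition of admissible solution or obtained from the same a priori bounds used for $\theta$, together with the terminal datum $-\ell$ being bounded on the compact set $\mathcal{Q}$.
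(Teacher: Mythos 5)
Your proposal is correct and follows essentially the same route as the paper: continuity is obtained by applying the comparison principle (Theorem \ref{unic}) to the pair $(\theta^*,\theta_*)$ together with the previously proved sub/supersolution properties and the terminal identity $\theta^*(T,\cdot)=\theta_*(T,\cdot)=-\ell$, and uniqueness follows from two further applications of the comparison principle; your only deviation (comparing the envelopes of $\tilde\theta$ directly with $\theta$ rather than first noting $\tilde\theta$ is continuous) is a cosmetic reordering of the same argument.
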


\begin{proof}
We know that $\theta$ is a bounded viscosity solution of \eqref{eqn:HJB}, and in particular, $\theta_{*}$ is a bounded supersolution of \eqref{eqn:HJB}, $\theta^{*}$ is a bounded subsolution of \eqref{eqn:HJB}, and $\theta_{*}(T,.)=\theta^{*}(T,.)=-\ell$.\\

Hence $\theta_*$ and $\theta^*$ verify the assumptions of Theorem~\ref{unic}, and we get that $\theta_* \geq \theta^*$ on $[0,T] \times \mathcal Q.$ But by definition of $\theta_*$ and $\theta^*,$ we have $\theta_* \leq \theta \leq \theta^*.$ Thus we have $\theta_* = \theta = \theta^*,$ and $\theta$ is continuous.\\

Let us now assume that we have another viscosity solution $\tilde{\theta}$. Using the same reasoning as above, $\tilde{\theta}$ is continuous.\\

Because $\tilde{\theta}$ is a subsolution to \eqref{eqn:HJB} and $\theta$ is a supersolution to \eqref{eqn:HJB}, and as $\tilde{\theta}(T,q) = \theta(T,q) = -\ell(q)\ \forall q \in \mathcal Q$, we know by the comparison principle that $\tilde{\theta} \leq \theta$ on $[0,T] \times \mathcal Q$. But we also have that $\tilde{\theta}$ is a supersolution and $\theta$ is a subsolution, so by the same argument we have $\tilde{\theta} \geq \theta$ and finally $\tilde{\theta} = \theta$ on $[0,T] \times \mathcal Q$. Hence the uniqueness.\\
\end{proof}

Using the theory of discontinuous viscosity solutions, we managed to prove that the value function is continuous and that is the unique viscosity solution to the Hamilton-Jacobi equation \eqref{eqn:HJB}. Further regularity  (semi-convexity, semi-concavity, differentiability, $C^1$ regularity) for the value function is an open problem because the inventory is the sum of point processes and an absolutely continuous process, or equivalently, because \eqref{eqn:HJB} contains both finite difference terms like in models à la Avellaneda-Stoikov and partial derivative terms (in the inventory variable) like in models à la Almgren-Chriss.\\

\section{Numerical results}
\label{numericSec}

\subsection{Context and parameters}
In this section, we apply our model to the FX spot market where dealers have access to a variety of trading venues: inter-dealer broker platforms such as Refinitiv and Electronic Broking Services (EBS) and electronic communication networks (ECN) such as Cboe FX, FXSpotStream and LMAX Exchange.\\

In order to derive realistic parameters, we consider a set of HSBC FX streaming clients trading US Dollar against offshore Chinese Renminbi, USDCNH. This set is sufficiently diverse to provide realistic results but by no means complete to fully represent HSBC FX franchise. In particular, in this work we mainly consider connections sensitive to pricing and do not take into account any cross currency trading which may significantly contribute to risk management.\\

For the reference price at any point in time, many choices can be made. Prices are available on many ECNs but these prices may not necessarily indicate commitment (because of the so-called ``Last Look'' practice -- see Oomen~\cite{oomen2017last} and Cartea et al.~\cite{cartea2019foreign}). Therefore, we chose not to rely on ECN prices but rather on the firm prices available on EBS.\\

Since FX market practitioners are used to reason in basis points\footnote{A basis point is one hundredth of a percent.} (bps) as far as quotes are concerned, we decided to factor out $S_0$ from the parameters $\sigma$, $k$, $\delta$, $L$, $\psi$, and $\ell$ (with of course an adjustment of parameters in accordance for intensities). This boils down to factoring out $S_0$ from the value function~$\theta$ and we can therefore reason in bps throughout (up to a little approximation because the base price at any time $t$ should be $S_t$, not $S_0$, but this makes very little difference given the time frame of the problem).\\

In order to use the model in practice, we need to discretize trade sizes. We take trade sizes corresponding to those of the pricing ladder typically streamed to clients.\footnote{In practice, of course, even if there is a lot of volume for specific sizes like \$1 million, volumes are ``continuously'' distributed. A weighted average (called VWAP) is used in the FX world to set a price for volumes that are not on the grid defining the price ladder broadcast by the dealer. We believe that our discretization does not raise any problem for practical applications.} We therefore discretize the distribution of sizes with 4 possible sizes: $z^1 = 1\ \textrm{M\$}$, $z^2 = 5\ \textrm{M\$}$, $z^3 = 10\ \textrm{M\$}$, and $z^4 = 20\ \textrm{M\$}$.\\

In what follows we consider the following parameters\footnote{After assigning trades to one of the 4 size buckets, we used a standard maximum likelihood procedure on our dataset of trades and quotes to estimate the parameters for $\mu^a$, $\mu^b$, $\Lambda^a$ and $\Lambda^b$. Regarding execution costs and market impact, estimations have been made using an approach similar to that of~\cite{almgren2005direct}.} (rescaled as above):

\begin{itemize}
    \item Volatility parameter: $\sigma = 50\ \textrm{bps} \cdot \textrm{day}^{-\frac{1}{2}}$.
    \item Discretization of $\mu^a$ and $\mu^b$ over the above set of sizes:
     $p^1 = 0.76$, $p^2 = 0.15$, $p^3 = 0.075$ and $p^4 = 0.015$.
    \item Logistic intensity functions $\Lambda^b(\delta) = \Lambda^a(\delta) = \lambda_{R} \frac 1{1 + e^{\alpha_\Lambda + \beta_\Lambda \delta}}$,     with $\lambda_{R} = 1000\ \textrm{day}^{-1}$, $\alpha_\Lambda = -1$, and $\beta_\Lambda = 10\ \textrm{bps}^{-1}$. This would correspond to an average of $\frac{1000}{1+e^{-1}} \simeq 731$ trades per day if the quote was the reference price and a drop to $\frac {1000}{1+e^{1}} \simeq 269 $ trades per day if the quote was worsen by 0.2 bps.
    \item $L : v\in \mathbb R \mapsto \eta v^2 + \phi |v|$ with $\eta = 10^{-5}\ \textrm{bps}\cdot \textrm{day} \cdot \textrm{M\$}^{-1}$ and $\phi = 0.1\ \textrm{bps}$.
    \item Permanent market impact: $k=0.005\ \textrm{bps} \cdot \textrm{M\$}^{-1}$.
    \item $\psi : q\in \mathbb R \mapsto \frac{\gamma}{2} \sigma^2 q^2$ with $\gamma=0.0005\ \textrm{bps}^{-1} \cdot \textrm{M\$}^{-1}$.
    \item $\ell(q) = -\frac k2 q^2.$
    \item Time horizon given by $T = 0.05\ \textrm{day}$. This horizon (although it seems short) ensures convergence of quotes and execution rates towards their stationary values at time $t=0$ (see Figures~\ref{conv_exec} and~\ref{conv_deltas}).
\end{itemize}

We impose risk limits in the sense that no trade that would result in an inventory $|q|  > \tilde q$ is admitted, where $\tilde q = 100\ \textrm{M\$}$. We then approximate the solution $\theta$ to \eqref{eqn:HJB} using a monotone implicit Euler scheme on a grid with 201 points for the inventory.

\subsection{Results}

We plot in Figures~\ref{conv_exec} and~\ref{conv_deltas} the optimal execution rate and optimal bid quotes (ask quotes are symmetric) as a function of time for different values of the inventory.\\

\begin{figure}[!h]\centering
\includegraphics[width=\textwidth]{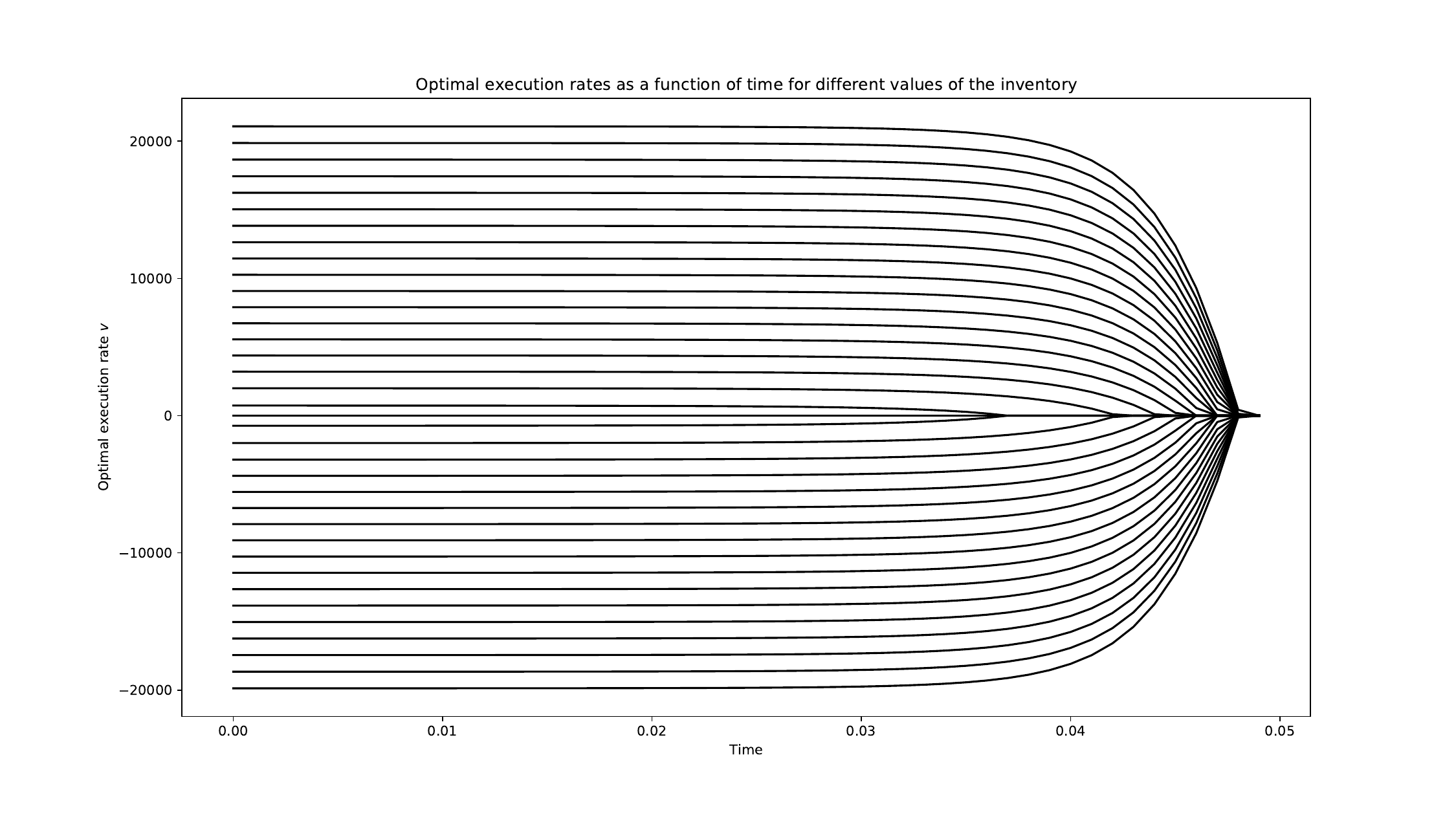}\\
\caption{Optimal execution rates as a function of time for values of the inventory going from $-100$ M\$ to $100$~M\$ by steps of $5$ M\$.}\label{conv_exec}
\end{figure}

\begin{figure}[!h]\centering
\hspace*{-8mm}\includegraphics[width=1.05\textwidth]{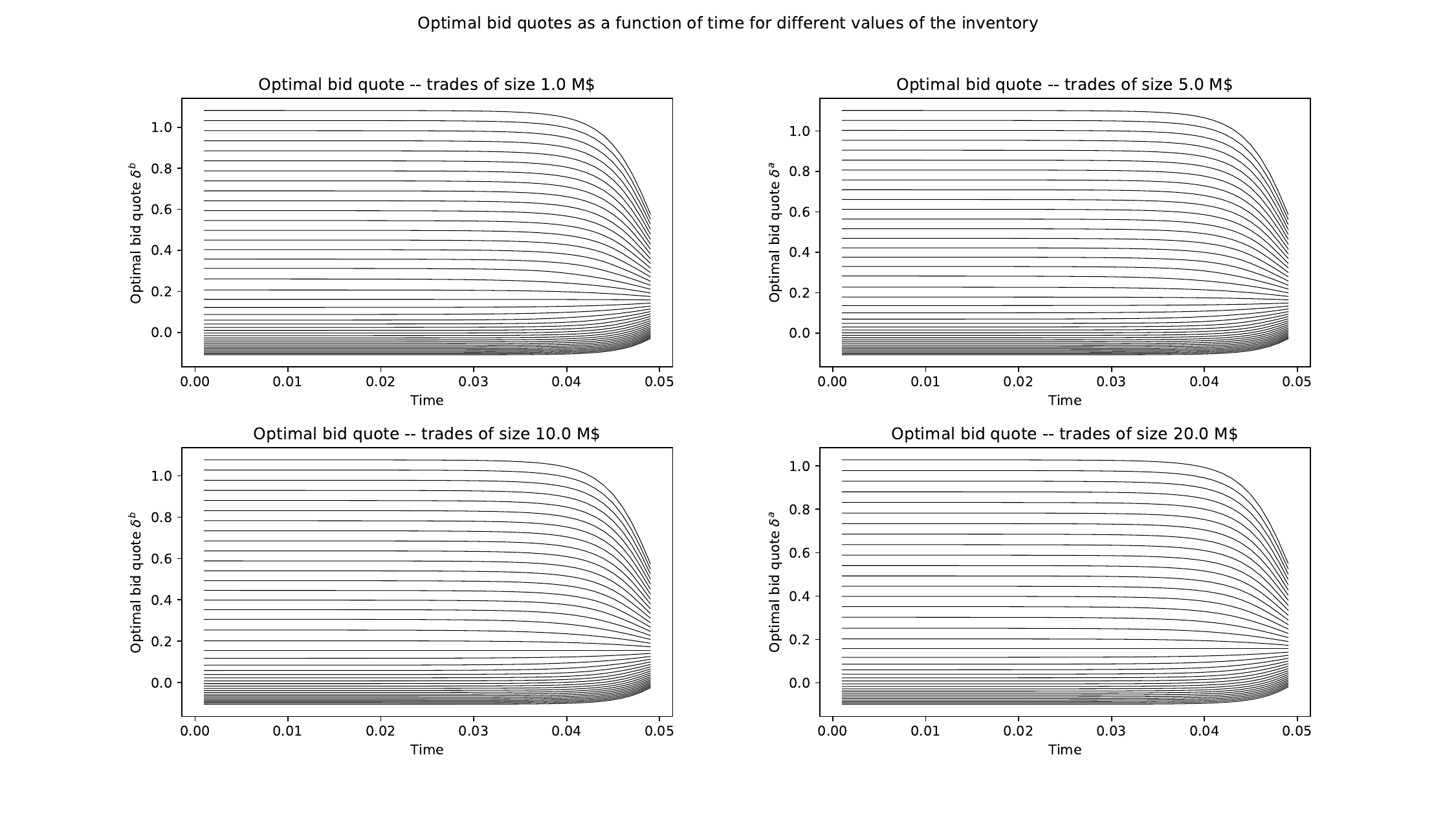}\\
\caption{Optimal bid quotes as a function of time for values of the inventory going from $-100$ M\$ to $100$~M\$ by steps of $5$ M\$.}\label{conv_deltas}
\end{figure}

We observe that the chosen time horizon $T$ is indeed sufficient to reach stationary values. The existence of stationary execution rates and quotes is linked to classical results about the long-time behavior of solutions of Hamilton-Jacobi equations (see for instance~\cite{bardi2008optimal} for the continuous-space case and~\cite{gueant2020optimal} for the discrete-space case). The fast convergence towards these stationary values in our examples is related to the high level of liquidity for the currency pair we consider.\\

The value function (at time $t=0$) is plotted in Figure~\ref{val_func_PA}.\\

\begin{figure}[!h]\centering
\includegraphics[width=0.95\textwidth]{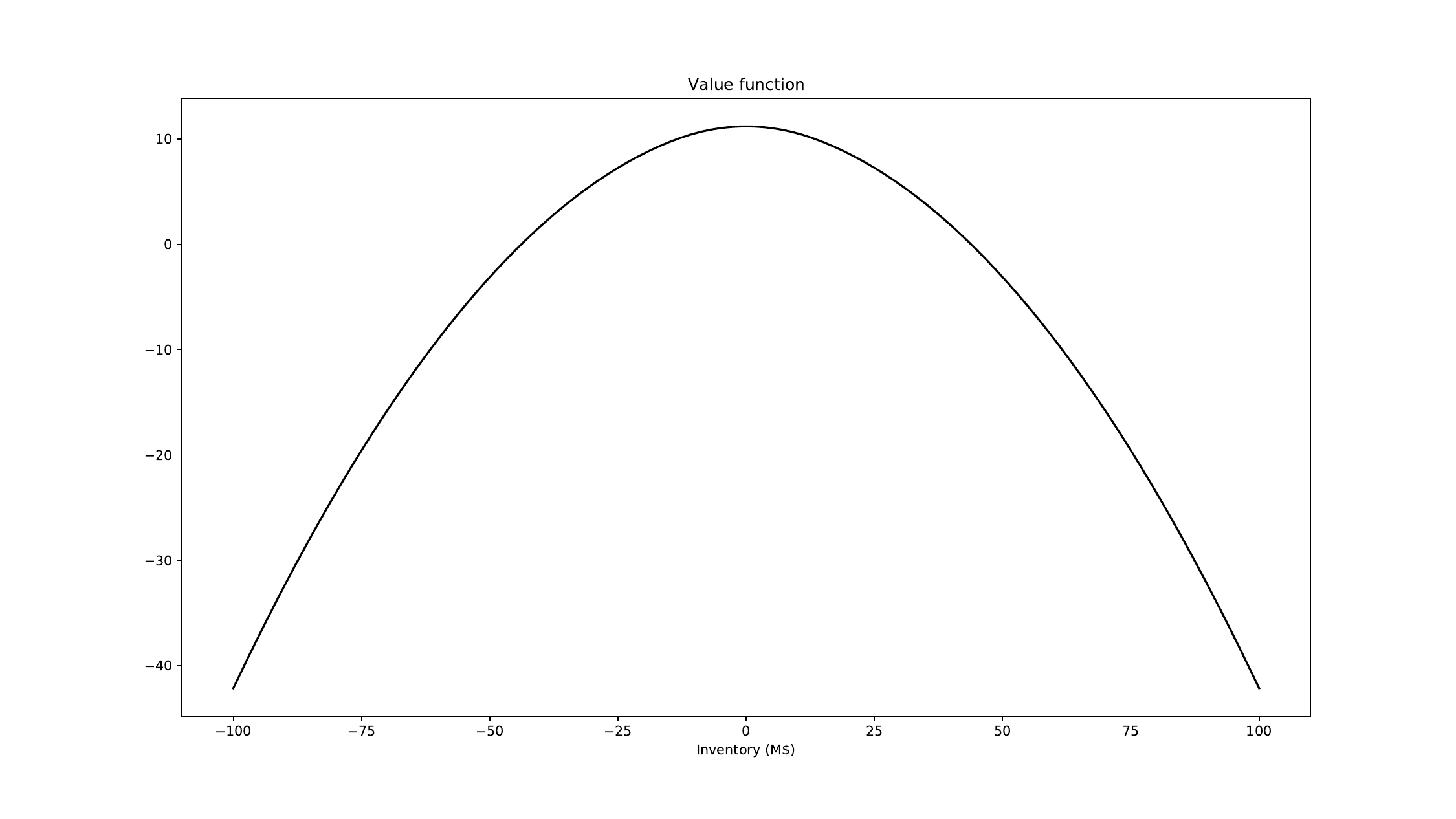}\\
\caption{Value function $q \in \mathcal Q  \mapsto \theta(0,q)$.}\label{val_func_PA}
\end{figure}

We plot in Figure~\ref{opt_rate_PA} the optimal execution rate (from now on we focus on stationary values) as a function of inventory.

\begin{figure}[!h]\centering
\includegraphics[width=0.95\textwidth]{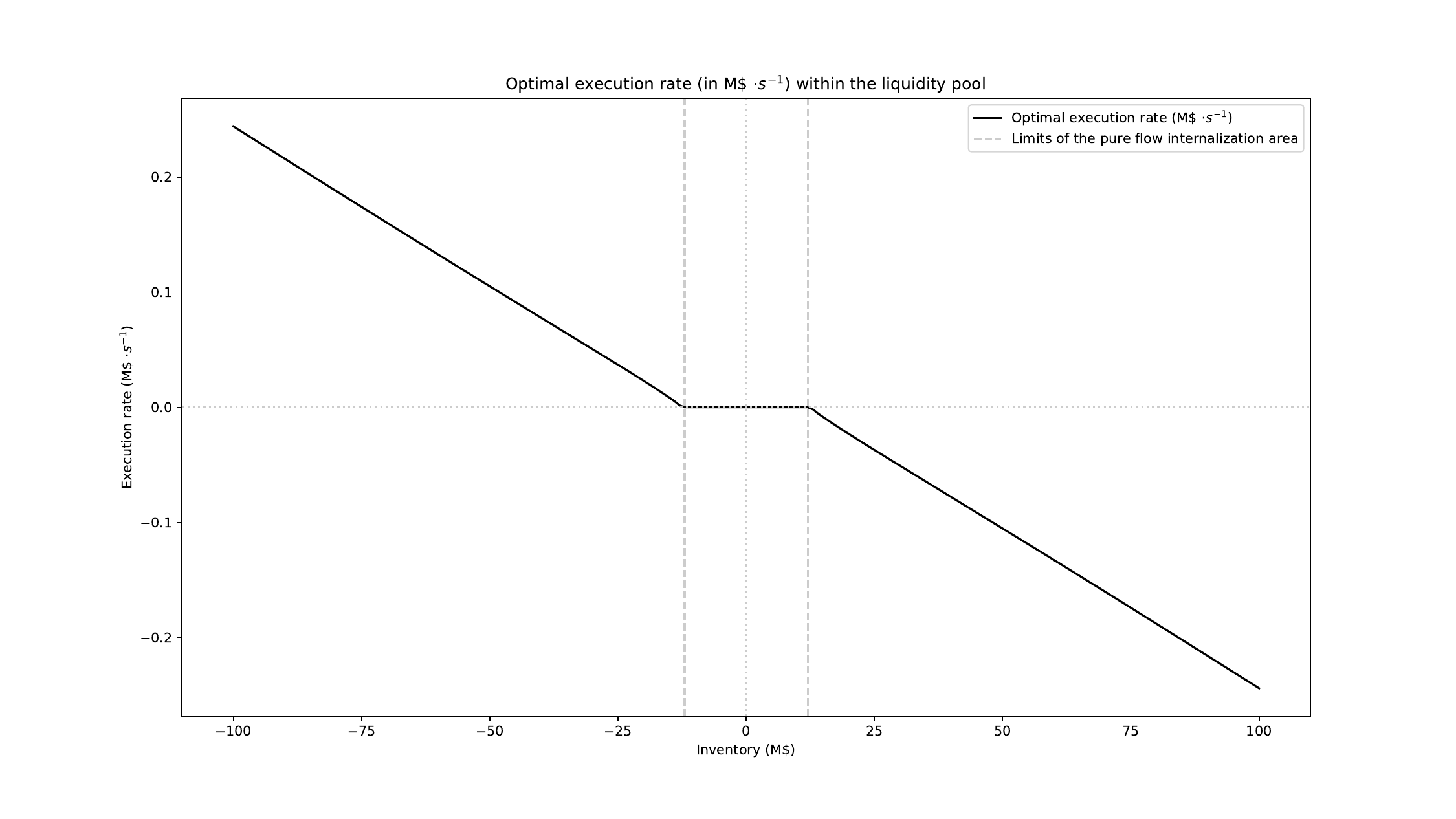}\\
\caption{Optimal execution rate as a function of the inventory.}\label{opt_rate_PA}
\end{figure}

Execution rate is a nonincreasing function of inventory but we observe an interesting effect in Figure~\ref{opt_rate_PA}: a plateau around zero, thereafter referred to as the pure flow internalization area. This is due to the execution costs (especially the proportional transaction costs linked to the parameter $\phi$ in our case) that discourage the dealer to buy or sell externally in the liquidity pool when their inventory is small enough (they prefer to bear this small risk than to pay the execution costs). Permanent impact also discourages external execution. We recall that permanent impact has no influence on classical continuous-time Almgren-Chriss optimal schedules. The reason is that it is proportional to the overall quantity and thus independent of the way the order is executed. The situation is quite different here as no market impact is expected when the flow is internalized. Therefore, external trading brings additional relative cost by pushing the expected price for all the subsequent fills.\\

Coming to optimal quotes, we plot in Figure~\ref{opt_quotes_PA} the four functions 
$$q \mapsto - \bar \delta^b(q, z^k),\ k \in \{1,\ldots,4\},$$
and the four functions
$$q \mapsto \bar \delta^a(q, z^k),\ k \in \{1,\ldots,4\},$$
where $\bar \delta^b$ and $\bar \delta^a$ represent the (stationary) optimal quotes as a function of inventory and size, at the bid and at the ask, respectively.\\

\begin{figure}[!h]\centering
\hspace*{-8mm}\includegraphics[width=1.1\textwidth]{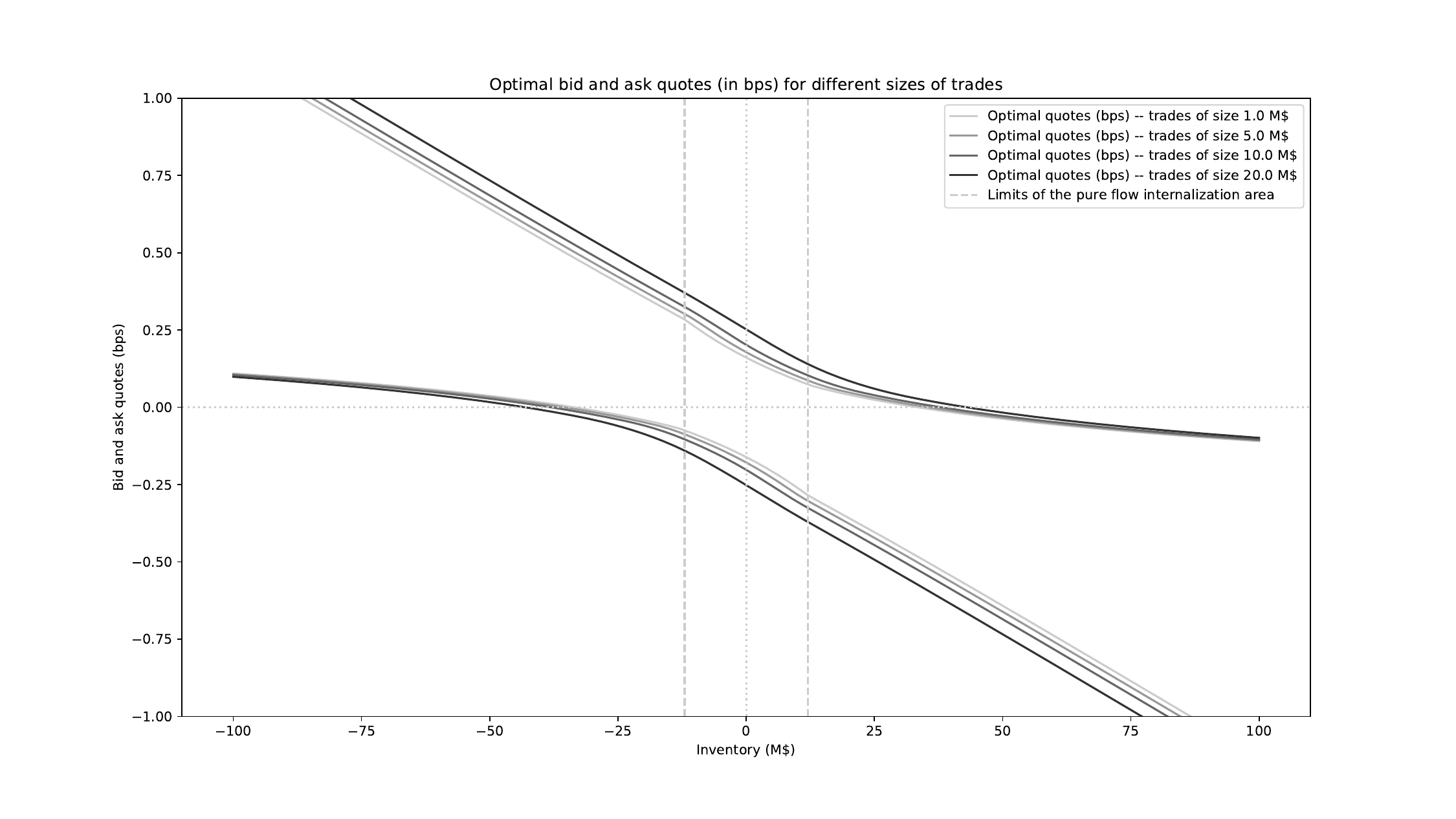}\\
\caption{Optimal bid (bottom) and ask (top) quotes for different trade sizes as a function of the inventory.}\label{opt_quotes_PA}
\end{figure}

We see that accounting for size impacts the optimal bid and ask quotes. The monotonicity of the quotes is of course unsurprising. It is noteworthy that no market spread was parametrically introduced into the model during the estimation of the logistic parameters. Therefore, it is interesting to compare the bid-ask spread we obtain with the actual market spread. Our current estimation produces 0.32 bps for \$1M size. The average composite interbank spread of USDCNH at London open as of this writing (early June 2021) is 0.38 bps.\\

Throughout this section, the optimal quotes are those derived from Lemma~\ref{lemmH} and the optimal execution rates are those derived from Lemma~\ref{lemmHronde}. To confirm empirically that these controls are in line with the value function obtained with our numerical scheme, we performed Monte-Carlo simulations using those controls. The comparison between the value function approximated numerically and the proceed of the Monte-Carlo simulations is plotted in Figure~\ref{MCvf_PA}. We see that the values coincide.\\

\begin{figure}[!h]\centering
\hspace*{-8mm}\includegraphics[width=1.1\textwidth]{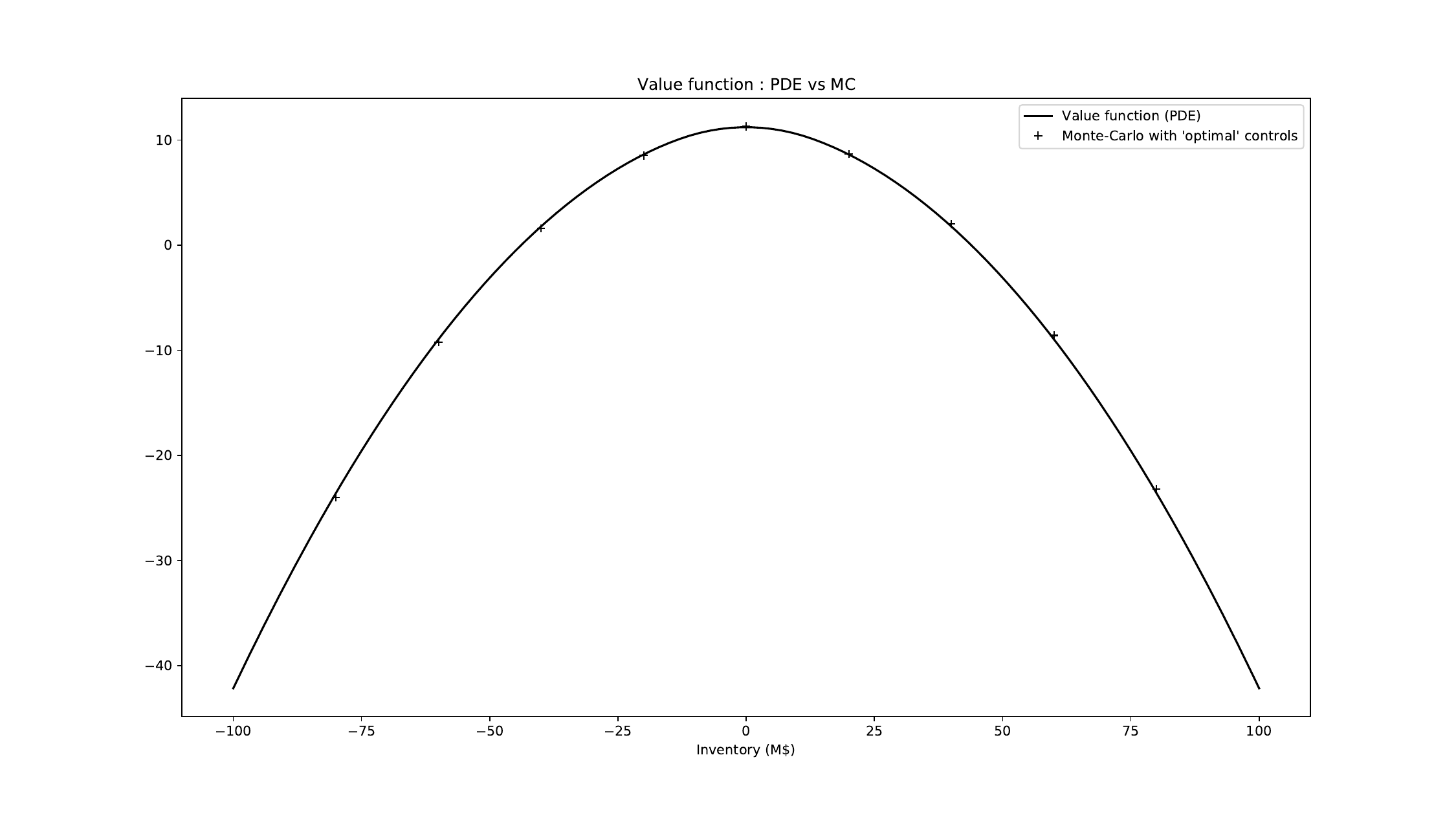}\\
\caption{Value function obtained by playing the optimal quotes and execution rates during a Monte-Carlo simulation compared with the value function computed with an implicit scheme.}\label{MCvf_PA}
\end{figure}

\subsection{Comparative statics regarding the pure flow internalization area}

We now study the influence of the parameters on the width of the pure flow internalization area.\\

We plot in Figure~\ref{opt_rate_psi_PA} the optimal execution rate of the dealer as a function of their inventory, when the execution cost parameter $\phi$ is set to $0.3\ \textrm{bps}$. We see that increasing $\phi$ leads to a wider pure flow internalization area: the dealer is less inclined to pay for immediate hedging and waits for their inventory to reach a higher level of risk to start trading externally in the liquidity pool.\\

We plot in Figure~\ref{opt_rate_k_PA} the optimal execution rate of the dealer as a function of their inventory, when the permanent market impact parameter $k$ is set to $0.01\ \textrm{bps} \cdot \textrm{M\$}^{-1}$. We see that increasing $k$ leads to a wider pure flow internalization area: the dealer is less inclined to impact the market, and waits for their inventory to reach a higher level of risk to start trading externally in the liquidity pool.\\

We plot in Figure~\ref{opt_rate_gam_PA} the optimal execution rate of the dealer as a function of their inventory, when the risk aversion parameter $\gamma$ is set to $0.005\ \textrm{bps}^{-1} \cdot \textrm{M\$}^{-1}$. We see that increasing $\gamma$ leads to a narrower pure flow internalization area: the dealer is more risk averse, and therefore starts externalizing sooner to bring their inventory closer to $0$.\\

\begin{figure}[h!]\centering
\hspace*{-8mm}\includegraphics[width=1.1\textwidth]{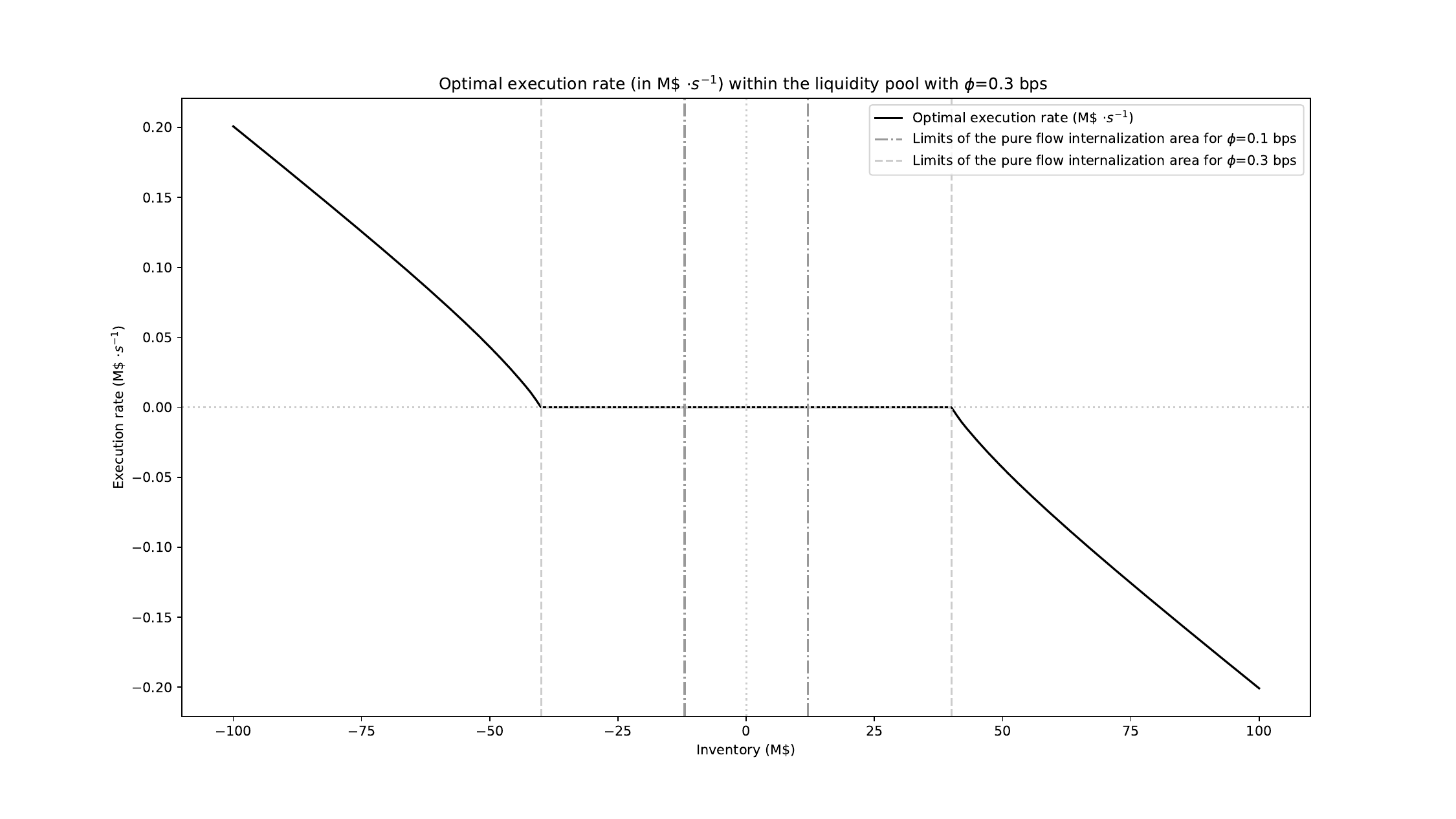}\\
\caption{Optimal execution rate as a function of the inventory when $\phi$ increases.}\label{opt_rate_psi_PA}
\end{figure}

\begin{figure}[h!]\centering
\hspace*{-8mm}\includegraphics[width=1.1\textwidth]{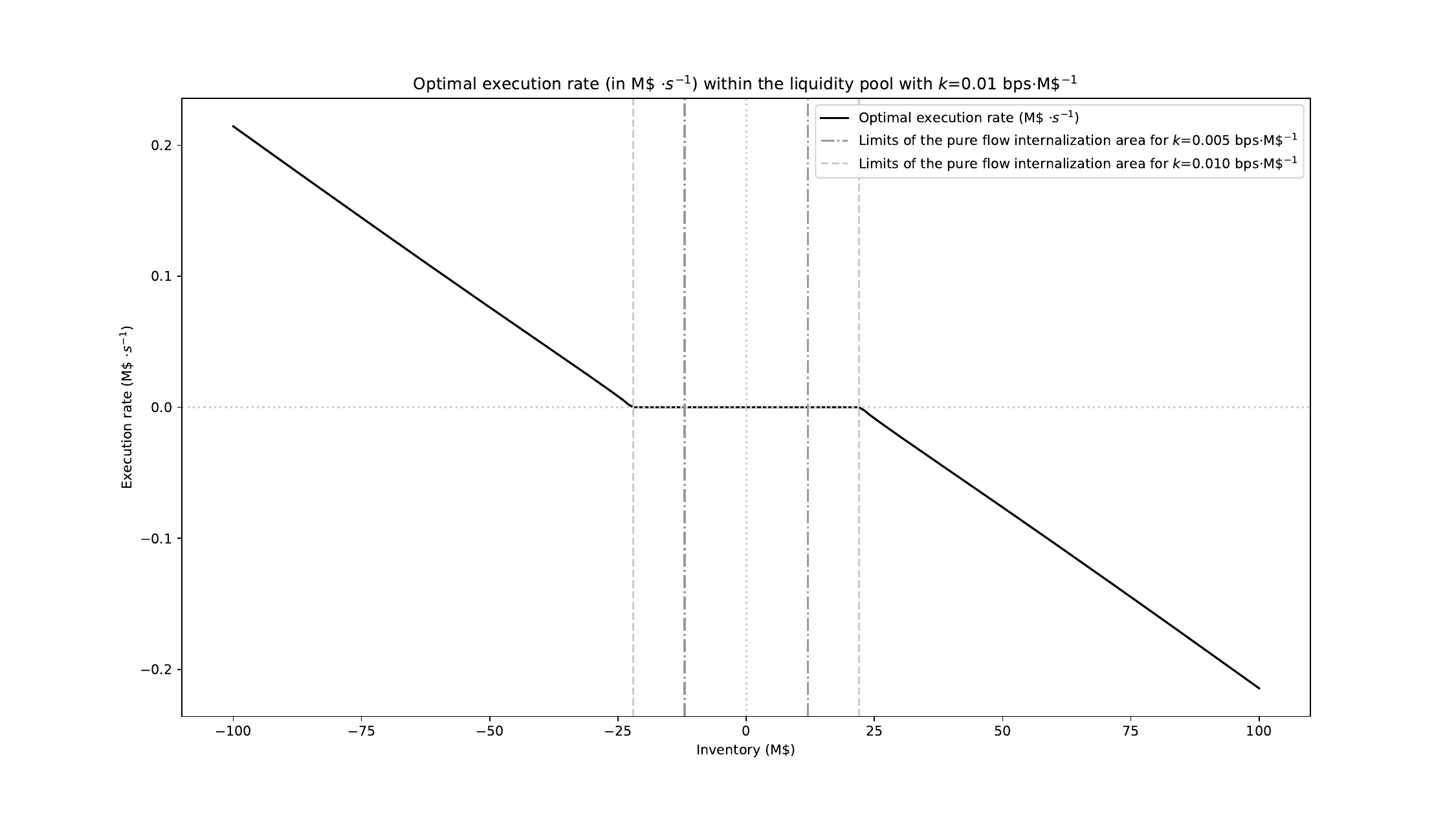}\\
\caption{Optimal execution rate as a function of the inventory when $k$ increases.}\label{opt_rate_k_PA}
\end{figure}

\begin{figure}[h!]\centering
\hspace*{-8mm}\includegraphics[width=1.1\textwidth]{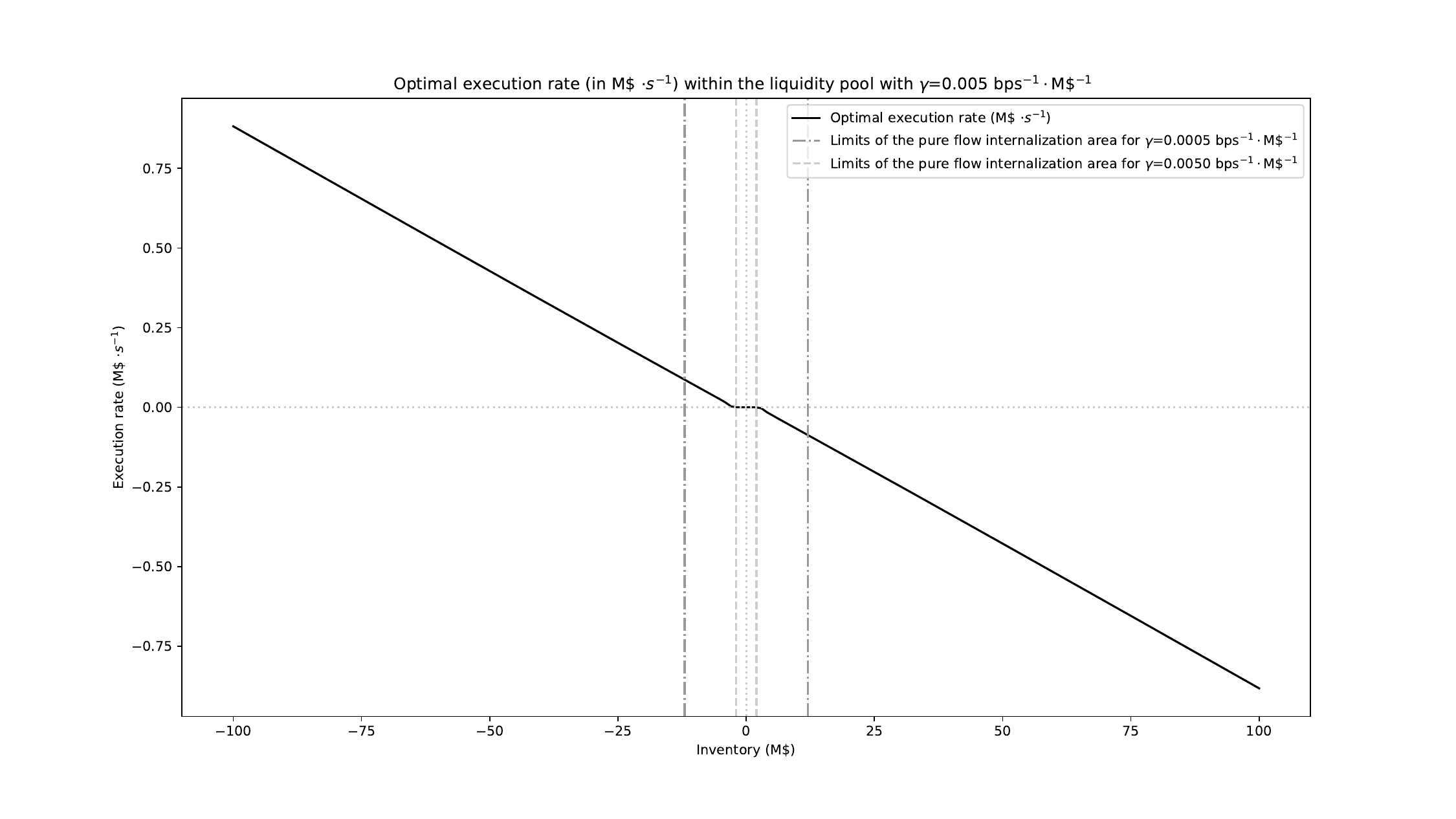}\\
\caption{Optimal execution rate as a function of the inventory when $\gamma$ increases.}\label{opt_rate_gam_PA}
\end{figure}

\begin{figure}[h!]\centering
\hspace*{-8mm}\includegraphics[width=1.1\textwidth]{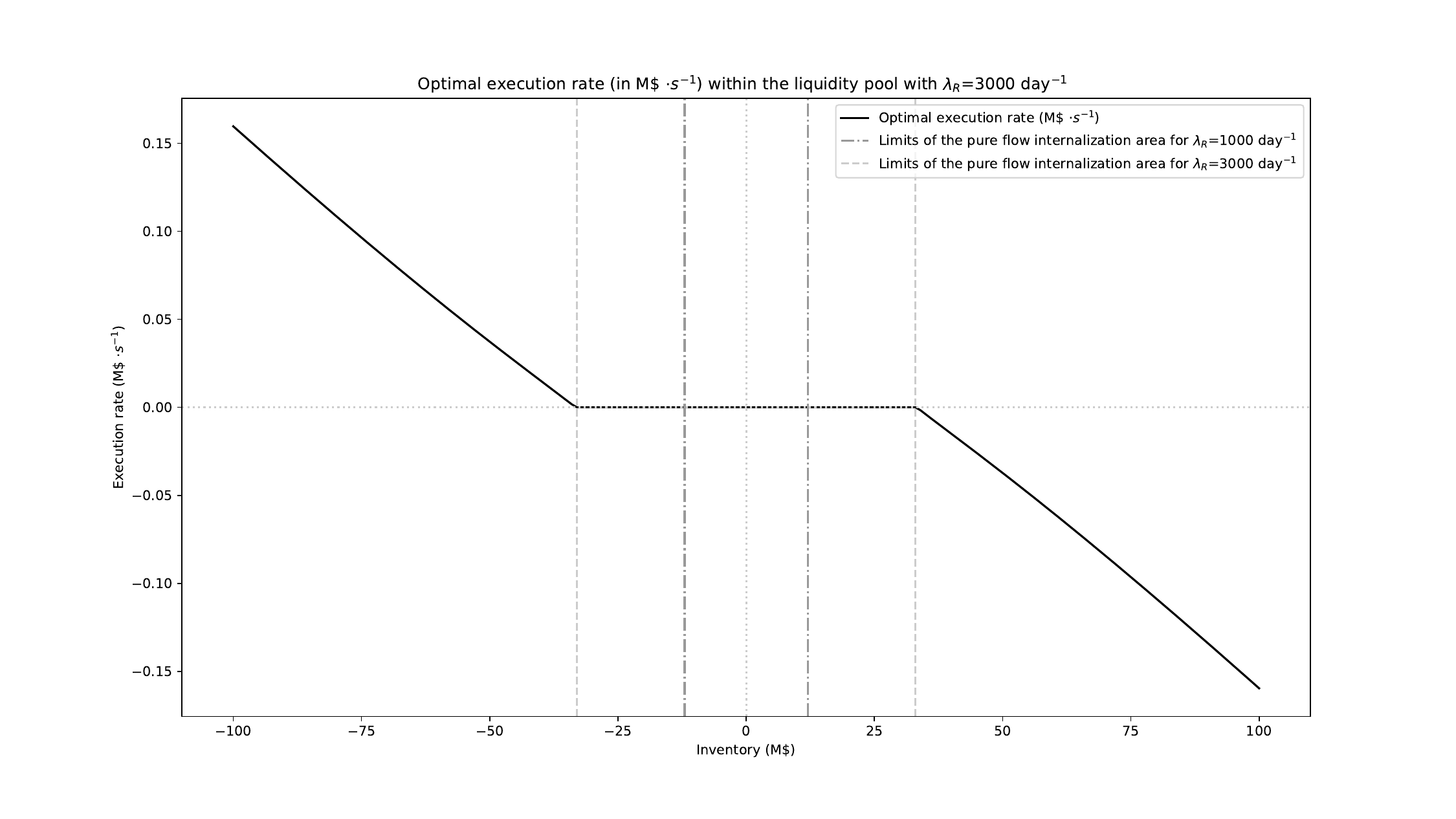}\\
\caption{Optimal execution rate as a function of the inventory when $\lambda_R$ increases.}\label{opt_rate_l_PA}
\end{figure}

We plot in Figure~\ref{opt_rate_l_PA} the optimal execution rate of the dealer as a function of their inventory when the intensity parameter $\lambda_R$ is set to $3000\  \textrm{day}^{-1}$. We see that increasing $\lambda_R$ leads to a wider pure flow internalization area because the dealer has more frequent opportunities to trade and therefore less reasons to pay the costs of externalization.\\

We finally study the impact of asymmetric flows  by considering 
$$\Lambda^b(\delta) = \lambda_{R}^b \frac 1{1 + e^{\alpha_\Lambda + \beta_\Lambda \delta}} \quad \text{and} \quad \Lambda^a(\delta) = \lambda_{R}^a \frac 1{1 + e^{\alpha_\Lambda + \beta_\Lambda \delta}},$$
with $\alpha_\Lambda$ and $\beta_\Lambda$ as before, but
$\lambda_{R}^b = 2500\ \textrm{day}^{-1}$ and $ \lambda_{R}^a = 500\ \textrm{day}^{-1}$ ($5$ times more chance to trade at the bid than at the ask for a similar offset from the reference price). We plot the resulting  execution rate as a function of inventory in Figure~\ref{opt_rate_PA_asymm}. We observe that the pure flow internalization area has widened and shifted to the left. In particular, the dealer tends to sell in the liquidity pool even when their inventory is~$0$ and slightly below $0$ in anticipation of the numerous incoming trades at the bid.\\

\begin{figure}[h!]\centering
\hspace*{-8mm}\includegraphics[width=1.1\textwidth]{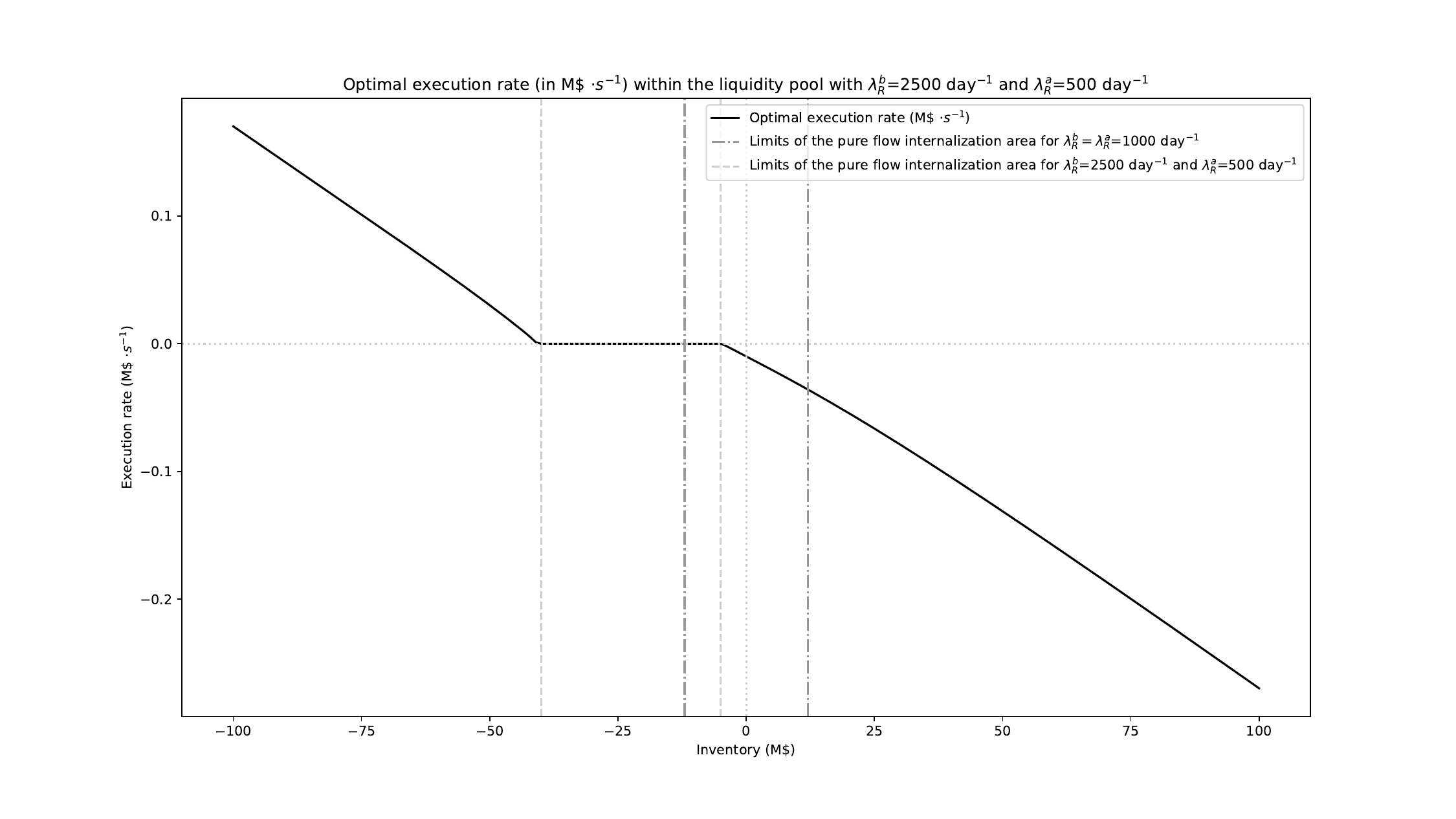}\\
\caption{Optimal execution rate as a function of the inventory for asymmetric flows.}\label{opt_rate_PA_asymm}
\end{figure}

\section*{Conclusion}

In this paper, we generalized the existing market making models to introduce the possibility for dealers to trade in external liquidity pools for hedging purpose. This extension led to a partial integro-differential equation of the Hamilton-Jacobi (HJ) type and we proved that the value function of the problem was its unique continuous viscosity solution. We illustrated our results numerically by solving the equation on a grid using an implicit Euler scheme and computing the optimal quotes and execution rates. We highlighted the existence of a pure flow internalization area. This area depicts a subtle balance between uncertainty, execution cost, and market impact. It is wider for a less risk averse dealer with a larger franchise, exposed to higher transaction costs and market impact.

\section*{Acknowledgment}

The results presented in this paper are part of the research works carried out within the HSBC FX Research Initiative. The views expressed are those of the authors and do not necessarily reflect the views or the practices at HSBC. The authors are grateful to Richard Anthony (HSBC), Bruno Bouchard (Université Paris Dauphine), Nicolas Grandchamp des Raux (HSBC) and Paris Pennesi (HSBC) for helpful discussions and support throughout the project. Two anonymous referees also deserve to be warmly thanked for the comments and suggestions.

\section*{Data Availability Statement}

In order to derive realistic parameters, we considered a set of HSBC FX streaming clients trading the US Dollar against offshore Chinese Renminbi. The set is sufficiently diverse to provide realistic results but by no means complete to fully represent HSBC FX franchise. The data used is not shared.\\

\section*{Appendix}

In this appendix, we prove Proposition~\ref{eqvisco}. More exactly, we only prove the subsolution part (the proof for the supersolution part is identical).\\

Let us first assume that the following inequality holds:

\begin{equation}
\begin{split}
- \frac{\partial \varphi}{\partial t}(\bar{t},\bar{q}) +& \psi(\bar{q}) -  \int_{\mathbb{R}_{+}^{*}} \mathds{1}_{\{\bar{q}+z \in \mathcal{Q}\}}zH^{b} \left(\frac{u(\bar{t},\bar{q}) -  u(\bar{t},\bar{q}+z) }{z}\right) \mu^{b}(dz) \\
&- \int_{\mathbb{R}_{+}^{*}} \mathds{1}_{\{\bar{q}-z \in \mathcal{Q}\}}z H^{a} \left(\frac{u(\bar{t},\bar{q}) - u(\bar{t},\bar{q}-z)}{z} \right) \mu^{a}(dz)  -  \mathcal H\left(\partial_{q} \varphi(\bar{t},\bar{q}) , \bar q\right) \leq 0. \nonumber
\end{split}
\end{equation}

We know that $\forall z >0$:

\begin{equation}
u(\bar{t},\bar{q}-z) - \varphi(\bar{t},\bar{q}-z) \leq u(\bar{t},\bar{q}) - \varphi(\bar{t},\bar{q}). \nonumber
\end{equation}

Thus,

\begin{equation}
H^{a} \left(\frac{u(\bar{t},\bar{q}) - u(\bar{t},\bar{q}-z)}{z} \right) \leq H^{a} \left(\frac{\varphi(\bar{t},\bar{q}) - \varphi(\bar{t},\bar{q}-z)}{z} \right), \nonumber
\end{equation}

and the same holds for $H^{b}$:

\begin{equation}
H^{b} \left(\frac{u(\bar{t},\bar{q}) -  u(\bar{t},\bar{q}+z) }{z}\right) \leq H^{b} \left(\frac{\varphi(\bar{t},\bar{q}) -  \varphi(\bar{t},\bar{q}+z) }{z}\right). \nonumber
\end{equation}

So we get

\begin{equation}
\begin{split}
- \frac{\partial \varphi}{\partial t}(\bar{t},\bar{q}) + &\psi(\bar{q}) -  \int_{\mathbb{R}_{+}^{*}} \mathds{1}_{\{\bar{q}+z \in \mathcal{Q}\}}zH^{b} \left(\frac{\varphi(\bar{t},\bar{q}) -  \varphi(\bar{t},\bar{q}+z) }{z}\right) \mu^{b}(dz)\\
&- \int_{\mathbb{R}_{+}^{*}} \mathds{1}_{\{\bar{q}-z \in \mathcal{Q}\}}z H^{a} \left(\frac{\varphi(\bar{t},\bar{q}) - \varphi(\bar{t},\bar{q}-z)}{z} \right) \mu^{a}(dz) -  \mathcal H\left(\partial_{q} \varphi(\bar{t},\bar{q}), \bar q \right) \leq 0, \nonumber
\end{split}
\end{equation}

and $u$ is a viscosity subsolution.\\

Let us now assume that $u$ is a viscosity subsolution. Without loss of generality, we can assume that $\varphi(\bar{t},\bar{q}) = u(\bar{t},\bar{q})$.\\

Let $B_{\eta}$ be the open ball of center $(\bar{t},\bar{q})$ and radius $\eta >0$. Let $(u_{n})$ be a sequence of smooth functions uniformly (in $n$) bounded such that $u_{n} \geq u\ \forall n$ and $u_{n} \xrightarrow[n\rightarrow +\infty]{} u$ pointwise. Let $\xi$ be a smooth nondecreasing function such that $\xi(x) = 1$ if $x>\eta/4$ and $\xi(x) = 0$ if $x<-\eta/4$. Let $d_{\eta/2}$ be the algebraic distance to $\partial B_{\eta/2}$ (with $d_{\eta/2}>0$ on $B_{\eta/2}$ and $d_{\eta/2}\leq 0$ on $B^{c}_{\eta/2}$); this function is continuously differentiable. We introduce:

\begin{equation}
\varphi^{n}_{\eta} = \varphi \times (\xi\circ d_{\eta/2}) + u_{n} \times (1-\xi\circ d_{\eta/2}). \nonumber
\end{equation}

Then $(\bar{t},\bar{q})$ is still a max of $u - \varphi^{n}_{\eta}$ and $(u - \varphi^{n}_{\eta})(\bar{t},\bar{q})=0$. Furthermore we have $\frac{\partial \varphi^{n}_{\eta}}{\partial t}(\bar{t},\bar{q}) = \frac{\partial \varphi}{\partial t}(\bar{t},\bar{q})$ and $\partial_{q} \varphi^{n}_{\eta}(\bar{t},\bar{q}) = \partial_{q} \varphi(\bar{t},\bar{q}) $. Thus:

\begin{equation}
\begin{split}
- \frac{\partial \varphi}{\partial t}(\bar{t},\bar{q}) + &\psi(\bar{q}) -  \int_{\mathbb{R}_{+}^{*}} \mathds{1}_{\{\bar{q}+z \in \mathcal{Q}\}}zH^{b} \left(\frac{\varphi^{n}_{\eta}(\bar{t},\bar{q}) -  \varphi^{n}_{\eta}(\bar{t},\bar{q}+z) }{z}\right) \mu^{b}(dz)\\
 &-  \int_{\mathbb{R}_{+}^{*}} \mathds{1}_{\{\bar{q}-z \in \mathcal{Q}\}}z H^{a} \left(\frac{\varphi^{n}_{\eta}(\bar{t},\bar{q}) - \varphi^{n}_{\eta}(\bar{t},\bar{q}-z)}{z} \right) \mu^{a}(dz) - \mathcal H\left(\partial_{q} \varphi(\bar{t},\bar{q}) , \bar q\right) \leq 0. \nonumber
\end{split}
\end{equation}

Plus we have $\varphi^{n}_{\eta} \xrightarrow[n\rightarrow +\infty]{}\varphi_{\eta}$ pointwise with $\varphi_{\eta} = \varphi \times (\xi\circ d_{\eta/2}) + u \times (1-\xi\circ d_{\eta/2})$ which is smooth on $B_{\eta/4}$ and such that $\varphi_{\eta} = u$ on $B^{c}_{\eta}$ and $\varphi_{\eta}(\bar{t},\bar{q}) = u(\bar{t},\bar{q})$.\\

By continuity of $H^{a}$ and $H^{b}$, absolute continuity of $\mu^{b}$ and $\mu^{a}$ and by dominated convergence (using the same argument than in Lemma 2 and the fact that the $\varphi^{n}_{\eta}$ are bounded uniformly in $n$) we get:

\begin{equation}
\begin{split}
- \frac{\partial \varphi}{\partial t}(\bar{t},\bar{q}) + &\psi(\bar{q}) -  \int_{\mathbb{R}_{+}^{*}} \mathds{1}_{\{\bar{q}+z \in \mathcal{Q}\}}zH^{b} \left(\frac{\varphi_{\eta}(\bar{t},\bar{q}) -  \varphi_{\eta}(\bar{t},\bar{q}+z) }{z}\right) \mu^{b}(dz) \\
&- \int_{\mathbb{R}_{+}^{*}} \mathds{1}_{\{\bar{q}-z \in \mathcal{Q}\}}z H^{a} \left(\frac{\varphi_{\eta}(\bar{t},\bar{q}) - \varphi_{\eta}(\bar{t},\bar{q}-z)}{z} \right) \mu^{a}(dz) -  \mathcal H\left(\partial_{q} \varphi(\bar{t},\bar{q}), \bar q \right) \leq 0. \nonumber
\end{split}
\end{equation}

By then sending $\eta$ to $0$ and using again dominated convergence, we get the result:

\begin{equation}
\begin{split}
- \frac{\partial \varphi}{\partial t}(\bar{t},\bar{q}) + &\psi(\bar{q}) -  \int_{\mathbb{R}_{+}^{*}} \mathds{1}_{\{\bar{q}+z \in \mathcal{Q}\}}zH^{b} \left(\frac{u(\bar{t},\bar{q}) -  u(\bar{t},\bar{q}+z) }{z}\right) \mu^{b}(dz)\\
& - \int_{\mathbb{R}_{+}^{*}} \mathds{1}_{\{\bar{q}-z \in \mathcal{Q}\}}z H^{a} \left(\frac{u(\bar{t},\bar{q}) - u(\bar{t},\bar{q}-z)}{z} \right) \mu^{a}(dz) -  \mathcal H\left(\partial_{q} \varphi(\bar{t},\bar{q}) , \bar q\right) \leq 0. \nonumber
\end{split}
\end{equation}

\bibliographystyle{plain}

\end{document}